\newtheorem{theorem}{Theorem}
\newtheorem{lemma}{Lemma}
\newtheorem{proposition}{Proposition}
\newtheorem{corollary}{Corollary}
\newtheorem{property}{Property}
\newtheorem{remark}{Remark}
\newtheorem{claim}{Claim}
\newtheorem{assumption}{Assumption}
\begin{document}
\title{{Edge Federated Learning Via Unit-Modulus Over-The-Air Computation}}

\author{
Shuai~Wang,~\IEEEmembership{Member,~IEEE},
Yuncong~Hong,~\IEEEmembership{Student~Member,~IEEE},
Rui~Wang,~\IEEEmembership{Member,~IEEE},
Qi Hao,~\IEEEmembership{Member,~IEEE},
Yik-Chung~Wu,~\IEEEmembership{Senior~Member,~IEEE},
and Derrick~Wing~Kwan~Ng,~\IEEEmembership{Fellow,~IEEE}

\thanks{
\scriptsize
Part of this paper has been presented at the IEEE Global Communications Conference (GLOBECOM), Madrid, Spain, Dec. 2021 \cite{globecom}.

Shuai Wang is with the Shenzhen Institute of Advanced Technology (SIAT), Chinese Academy of Sciences, Shenzhen 518055, China (e-mail: s.wang@siat.ac.cn).

Yuncong Hong and Rui Wang are with the Department of Electrical and Electronic Engineering, Southern University of Science and Technology (SUSTech), Shenzhen 518055, China (e-mail: hongyc@mail.sustech.edu.cn; wang.r@sustech.edu.cn).

Qi Hao is with the Department of Computer Science and Engineering and the Sifakis Research Institute of Trustworthy Autonomous Systems, Southern University of Science and Technology (SUSTech), Shenzhen 518055, China (e-mail: hao.q@sustech.edu.cn).

Yik-Chung~Wu is with the Department of Electrical and Electronic Engineering, The University of Hong Kong, Hong Kong (e-mail: ycwu@eee.hku.hk).

Derrick~Wing~Kwan~Ng is with the School of Electrical Engineering and Telecommunications, the University of New South Wales, Australia (email: w.k.ng@unsw.edu.au).

}
}

\maketitle

\vspace{-0.4in}
\begin{abstract}
Edge federated learning (FL) is an emerging paradigm that trains a global parametric model from distributed datasets based on wireless communications.
This paper proposes a unit-modulus over-the-air computation (UMAirComp) framework to facilitate efficient edge federated learning, which simultaneously uploads local model parameters and updates global model parameters via analog beamforming.
The proposed framework avoids sophisticated baseband signal processing, leading to low communication delays and implementation costs.
Training loss bounds of UMAirComp FL systems are derived and two low-complexity large-scale optimization algorithms, termed penalty alternating minimization (PAM) and accelerated gradient projection (AGP), are proposed to minimize the nonconvex nonsmooth loss bound.
Simulation results show that the proposed UMAirComp framework with PAM algorithm achieves a smaller mean square error of model parameters' estimation, training loss, and test error compared with other benchmark schemes.
Moreover, the proposed UMAirComp framework with AGP algorithm achieves satisfactory performance while reduces the computational complexity by orders of magnitude compared with existing optimization algorithms.
Finally, we demonstrate the implementation of UMAirComp in a vehicle-to-everything autonomous driving simulation platform.
It is found that autonomous driving tasks are more sensitive to model parameter errors than other tasks since the neural networks for autonomous driving contain sparser model parameters.
\end{abstract}
\begin{IEEEkeywords}
Autonomous driving, federated learning, large-scale optimization, over-the-air computation.
\end{IEEEkeywords}

\IEEEpeerreviewmaketitle

\section{Introduction}

Deep learning has achieved unprecedented breakthrough in image classification, speech recognition, and object detection due to its ability to efficiently extract intricate nonlinear features from high-dimensional data \cite{dl1}.
Typically, a cloud center collects data from distributed users and trains a centralized model via gradient-based back propagation \cite{edge1,edge2,edge3,edge4}.
However, since the users need to share their local data to the cloud, this paradigm could lead to some potential privacy issues, hindering the development of deep learning in extensive applications such as smart cities and financial systems.

To address the privacy issue, federated learning (FL), which trains individual deep learning models at users, has been proposed by Google Research \cite{google}.
In the framework of FL, the locally generated data is locally adopted and not shared to any third party.
To leverage the knowledge from other users, local model parameters are uploaded periodically to a parameter server for model aggregation and the aggregated global parameters are broadcast to the users for further local updates.
Therefore, FL achieves distributed training while ensuring data privacy \cite{fedreview}.

\subsection{Edge Federated Learning and Related Work}

FL was originally developed for wire-line connected systems \cite{fed1}.
To achieve ubiquitous intelligence, a promising solution is edge FL, e.g., \cite{fedreview,fed1,fed2,fed3,fed3b,fed4,fed5,air2,air3,air4,air5}, where users are connected to an edge server via wireless links.
However, the convergence of edge FL may take a long time due to limited capacity of wireless channels during the uplink model aggregation step.
To reduce the transmission delay, various edge FL designs have been proposed (summarized in Table I\footnote{For more related work on digital and analog federated learning, please refer to \cite{fedreview}.}), which are mainly categorized into digital modulation \cite{fed1,fed2,fed3,fed3b,fed4,fed5} and analog modulation \cite{air2,air3,air4,air5} methods.

For digital modulation and single-antenna systems, data from different users are multiplexed either in the time or the frequency domain.
Current works on delay reduction focus on reducing 1) the number of model aggregation iterations \cite{fed2}, 2) the number of users \cite{fed3}, or 3) the number of bits for representing the gradient of back propagation in each iteration \cite{fed4}.
However, since these strategies involve approximation or simplification of the FL procedure, the performance of learning would be degraded inevitably.
Another way to reduce the transmission delay is to adopt multiple-input multiple-output (MIMO) technology for transmission so that data from multiple users are multiplexed concurrently in the spatial domain \cite{fed5}.

\begin{table*}[!t]
\caption{A Comparison of Existing and Proposed Schemes.}
\vspace{0.1in}
{

\centering
\begin{tabular}{|c|c|c|c|c|c|c|c|c|}
\hline
\hline
\textbf{Modulation} & \textbf{Work} & \textbf{MIMO} & \textbf{\begin{tabular}[c]{@{}c@{}}RF \\ Chain \end{tabular}}
& \textbf{\begin{tabular}[c]{@{}c@{}}Alg. \\ Complex. \end{tabular}}
&  \textbf{\begin{tabular}[c]{@{}c@{}}Commun. \\ Delay \end{tabular}}
 &
\textbf{\begin{tabular}[c]{@{}c@{}}Objective \\ Function \end{tabular}}
& \textbf{AirComp}
& \textbf{\begin{tabular}[c]{@{}c@{}}FL \\ Task \end{tabular}}
\\ \hline
\multirow{4}{*}{\textbf{\begin{tabular}[c]{@{}c@{}}Digital \end{tabular}}}  & \cite{fed1}          &   \XSolidBrush       & +   & +                            & +++         & N/A      & \XSolidBrush                                                                  & Classification
\\ \cline{2-9}
& \cite{fed2,fed3}          &   \XSolidBrush        & +   & +                            & ++         & Loss Bound  & \XSolidBrush                                                                  & Classification
\\ \cline{2-9}
& \cite{fed4}   &  \XSolidBrush              & +       & +         & ++     & Loss Bound    & \XSolidBrush     &    Classification
\\ \cline{2-9}
                                                                                     & \cite{fed5}   &  Digital             & +++         & +++         & ++     & MSE    & \XSolidBrush     &    \XSolidBrush                                                                                                            \\ \hline
\multirow{4}{*}{\textbf{\begin{tabular}[c]{@{}c@{}}Analog \end{tabular}}}  &
\begin{tabular}[c]{@{}c@{}} \cite{air2,air3}\end{tabular} &            \XSolidBrush                                                                                              & +      & +                          & +                                                                   & Heuristic                                                                  & \Checkmark & Classification
\\ \cline{2-9}
& \cite{air4}       & Digital                                                                                                                            & +++                                                                & +++                                                                & +                                                                     & MSE                                                                    & \Checkmark & Classification
\\ \cline{2-9}
                                                                                     & \cite{air5}       &       Digital                                                    &           +++                                                       & ++                                                                  & +                                                             & Noise Variance                                                                                                                                     & \Checkmark  & Classification
\\ \cline{2-9}
                                                                                     & \textbf{Ours}                                                & Analog                                                                                                                          & +        & +                                                              & +                                                                      & Loss Bound                                                                   &    \Checkmark  & Object Detection                                      \\ \hline
\hline
\end{tabular}
}
\vspace{0.1in}
\label{Table.related_work}
\hspace{1cm}

The symbol ``+'' means low, ``++'' means moderate, ``+++'' means high.

The symbol ``\checkmark'' means functionality supported, ``\XSolidBrush" means functionality not supported.
\end{table*}

On the other hand, the key advantage of analog modulation \cite{air2,air3,air4,air5} over digital modulation arises from the ground-breaking idea of over-the-air computation (AirComp).
Specifically, if multiple users upload their local parameters simultaneously, a superimposed signal, which represents a weighted sum of individual model parameters, is observed at the edge server.
By performing minimum mean square error (MMSE) detection on the superimposed signal, an estimate of the global parameter vector can be obtained.
This significantly saves the transmission time since AirComp in fact exploits inter-user interference in the simultaneous user transmission \cite{ota1,ota2,ota3}, in oppose to interference suppression as in digital modulation.
Furthermore, the wireless channel can be viewed as a natural privacy-preserving mask for the released models, which injects random channel noises into the parameters \cite{ota3}.

However, due to channel fading and noise in wireless systems, AirComp employed in single-antenna systems \cite{air2,air3} could result in a large error in the estimation of global model parameters at the edge server, leading to slow convergence of FL.
As a remedy, adopting MIMO beamforming \cite{air4,air5} could reduce the parameter transmission error by aligning the beams carrying the local parameters' information to the same spatial direction.
Nonetheless, the current transmit and receive beamforming designs in MIMO AirComp systems involve exceedingly high radio frequency (RF) chain costs and high computational complexities \cite{air4,air5}, preventing their practical implementation.

In practice, both digital and analog modulation methods share the same goal, i.e., minimizing the training loss function.
However, due to the lack of an explicit form of the training loss function with respect to wireless designs, most works focus on other related objective functions such as mean square error (MSE) \cite{air4} and noise variance \cite{air5}.
Recently, the relationship between the training loss function and the wireless designs is derived in \cite{fed2,fed3,fed3b,iclr,air1}.
Nonetheless, the bounds in \cite{fed2,fed3,fed3b,iclr,air1} are only applicable if the global model parameters are perfectly broadcast to users.
For practical cases involving errors in the model broadcast phase, new training loss bounds are required to capture the training performance, which remains an open problem.

\subsection{Summary of Challenges and Contributions}

In summary, despite recent exciting development of edge FL techniques, e.g.,  \cite{fed1,fed2,fed3,fed3b,fed4,fed5,air2,air3,air4,air5}, a number of technical challenges remain to be overcome, including
\begin{itemize}
    \item[1)] \textbf{Analog beamforming design under federated learning settings}.
    Analog beamforming with a proper phase shift network design \cite{analog,analog2} can help reduce implementation costs compared with digital beamfroming in \cite{air4,air5}, which has not been studied in edge FL systems, yet.
    Since the analog beamformer is unit-modulus, it introduces a large number of intractable nonconvex and nonsmooth constraints.
    Furthermore, the analog beamformer in FL systems aims to minimize the FL training loss bound, which is a non-differentiable function of the beamforming design.
    This is different from the conventional analog beamforming designs in massive MIMO systems that aim to decode individual information from each user \cite{analog,analog2}.

    \item[2)] \textbf{Reduction of beamforming design complexities.}
    Most beamforming algorithms \cite{analog2, fed5,air4,air5} rely on the execution of the interior point method (IPM). Yet, since IPM involves the inversion of Hessian matrices, these algorithms are with high computational complexities requiring exceedingly long signal processing delay, especially when massive MIMO technique is applied.
    On the other hand, first-order methods \cite{acceleration1, acceleration3, yurii} (i.e., no inversion of Hessian matrices) could lead to significantly lower computational complexities than that of the IPM.
    However, they cannot directly handle the analog beamforming design problem due to the non-differentiable training loss bound and unit-modulus constraints.

    \item[3)] \textbf{Verification of robustness in more complex learning tasks}.
    Existing algorithms in \cite{fed1,fed2,fed3,fed3b,fed4,fed5,air2,air3,air4,air5} are mainly tested on simple image classification tasks (e.g., recognition of handwritten digits).
    Experiments on more complex and closer-to-reality tasks, such as 3D object detection \cite{SECOND,shaoshuai} in vehicle-to-everything (V2X) autonomous driving systems, are needed to verify the robustness of edge FL.
    However, the associated implementation requires a software cluster including scenario generation, multi-modal data generation, multi-sensor calibration, multi-vehicle synchronization and coordinate transformation, label generation, and object detection.
\end{itemize}

To fill the research gap, this paper proposes the unit-modulus AirComp (UMAirComp) framework for edge FL in MIMO communication systems, as shown in Fig.~1.
The UMAirComp framework consists of multiple edge users with local sensing data (e.g., autonomous driving cars generate camera images and light detection and ranging (LiDAR) point clouds of the environment), an edge server for performing model aggregation, and communication interfaces for exchanging model parameters.
Specifically, the edge users update their local models using the local training datasets separately.
Then, the trained model parameters are transformed into signals via analog modulation and uploaded to the server.
To reduce the implementation cost of RF chains, the edge server does not process the received signals at the baseband.
Instead, it applies a phase shift network (either a fully-connected structure or a partially-connected structure) in the RF domain for global model aggregation and broadcasting.
All users decode model parameters from the received broadcast signals via analog demodulation.
The advantages of UMAirComp and contributions of this paper are summarized below:

\begin{itemize}
    \item[1)] The UMAirComp at the server significantly reduces the required implementation cost of RF chains in MIMO FL systems, thereby reducing the hardware and energy costs.
    To understand how UMAirComp works, the training loss of UMAirComp framework is proved to be upper bounded by a monotonic increasing function of the maximum MSE of the model parameters' estimation.

    \item[2)] Despite the UMAirComp problem being highly nonconvex, two large-scale optimization algorithms, termed penalty alternating minimization (PAM) and accelerated gradient projection (AGP), are developed for fully-connected UMAirComp and partially-connected UMAirComp, respectively.
    The learning performance of the proposed PAM and AGP is shown to outperform other benchmark schemes.
    In particular, the AGP algorithm is $100$x faster than that of the state-of-the-art optimization algorithms.

    \item[3)] We implement the UMAirComp edge FL scheme for 3D object detection with multi-vehicle point-cloud datasets in CARLA simulation platform \cite{carla}. To the best of our knowledge, this is the first attempt that edge FL is demonstrated in a V2X auto-driving simulator with a close-to-reality task.
\end{itemize}

\begin{figure}[!t]
 \centering
\includegraphics[width=0.8\textwidth]{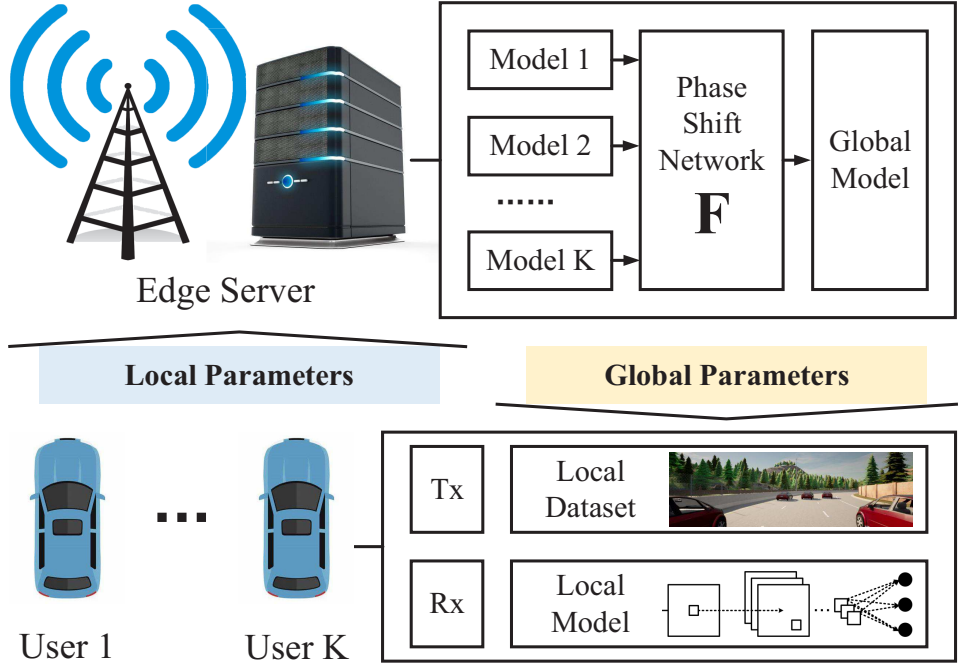}
  \caption{Illustration of the UMAirComp edge FL system.}
\end{figure}

\subsection{Notation}

Italic letters, lowercase and uppercase bold letters represent scalars, vectors, and matrices, respectively.
Curlicue letters stand for sets and $|\cdot|$ is the cardinality of a set.
The operators $\|\cdot\|_2,(\cdot)^{T},(\cdot)^{H},(\cdot)^{-1},\lambda_{\mathrm{max}}(\cdot),\lambda_{\mathrm{min}}(\cdot),\mathrm{Null}(\cdot),\mathrm{Rank}(\cdot),\mathrm{vec}(\cdot)$ are the $\ell_2$-norm, transpose, Hermitian, inverse, largest eigenvalue, smallest eigenvalue, null space, rank, and vectorization of a matrix, respectively.
The operators $\partial f$ and $\nabla f$ are the partial derivative and the gradient of the function $f$.
The function $[x]^+=\mathrm{max}(x,0)$, $\mathrm{Re}(x)$ takes the real part of $x$, $\mathrm{Im}(x)$ takes the imaginary part of $x$, $\mathrm{conj}(x)$ takes the conjugate of $x$, and $|x|$ is the modulus of $x$.
$\mathbf{I}_{N}$ denotes the $N\times N$ identity matrix,
$\mathbf{1}_{N}$ represents the $N\times 1$ all-ones vector, and $\mathbf{A}\succeq \mathbf{B}$ represents $\mathbf{A}-\mathbf{B}$ being positive semidefinite.
Finally, $\mathrm{j}=\sqrt{-1}$, $\mathbb{E}(\cdot)$ denotes the expectation of a random variable and $\mathcal{O}(\cdot)$ is the big-O notation standing for the order of arithmetic operations.

\section{Edge Federated Learning with AirComp}

\setcounter{secnumdepth}{4}
We consider an edge FL system shown in Fig.~1, which consists of an edge server equipped with $N$ antennas and $K$ single-antenna mobile users. The dataset and model parameter vector at user $k$ are denoted as $\mathcal{D}_k$ and $\mathbf{x}_k\in\mathbb{R}^{M\times 1}$, respectively.
Mathematically, the FL procedure aims to solve the following optimization problem:
\begin{align}\label{FL}
\mathop{\mathrm{min}}_{\substack{\{\mathbf{x}_k\},\bm{\theta}}}
\quad&\underbrace{\frac{1}{\sum_{k=1}^{K}|\mathcal{D}_k|}\sum_{k=1}^K\sum_{\mathbf{d}_{k,l}\in\mathcal{D}_k}\Theta(\mathbf{d}_{k,l}, \bm{\theta})}_{:=\Lambda(\bm{\theta})}\quad
\nonumber\\
\mathrm{s.t.}\quad&\mathbf{x}_1=\cdots=\mathbf{x}_K= \bm{\theta},
\end{align}
where $\Theta(\mathbf{d}_{k,l}, \bm{\theta})$ is the loss function corresponding to a single sample $\mathbf{d}_{k,l}$ ($1\leq l \leq |\mathcal{D}_k|$) in $\mathcal{D}_k$ given parameter vector $\bm{\theta}$, while $\Lambda(\bm{\theta})$ denotes the global loss function to be minimized.

The objective function can be rewritten into a separable form $\Lambda(\bm{\theta})=\sum_{k=1}^K\alpha_k\lambda_k(\mathbf{x}_k)$, where $\lambda_k(\mathbf{x}_k)=\frac{1}{|\mathcal{D}_k|}\sum_{\mathbf{d}_{k,l}\in\mathcal{D}_k}\Theta(\mathbf{d}_{k,l}, \mathbf{x}_k)$ is the loss function at the $k$-th user and $\alpha_k=\frac{|\mathcal{D}_k|}{\sum_{l=1}^{K}|\mathcal{D}_l|}$.
Therefore, the training of FL model parameters (i.e., solving \eqref{FL}) in the considered edge system is naturally a distributed and iterative procedure, where each iteration involves four steps:
1) updating the local parameter vectors $(\mathbf{x}_1,\cdots,\mathbf{x}_K)$ for minimizing $(\lambda_1,\cdots,\lambda_K)$ with respect to $\{\mathcal{D}_1,\cdots,\mathcal{D}_K\}$ at users $(1,\cdots,K)$, respectively;
2) transforming the local parameters $(\mathbf{x}_1,\cdots,\mathbf{x}_K)$ into symbols $(\mathbf{s}_1,\cdots,\mathbf{s}_K)$ for uploading;
3) aggregating $(\mathbf{s}_1,\cdots,\mathbf{s}_K)$ in an analog manner at the edge server and broadcasting the results to the users.
4) receiving the signals $(\mathbf{y}_1,\cdots,\mathbf{y}_K)$ at the users and transforming $(\mathbf{y}_1,\cdots,\mathbf{y}_K)$ into parameters $(\mathbf{x}_1,\cdots,\mathbf{x}_K)$ for next-round updates.
The above four steps are elaborated below.

\begin{itemize}
\item[1)] Step 1: Let $\mathbf{x}^{[i]}_k(0)\in\mathbb{R}^{M\times 1}$ be the local parameter vector at user $k$ at the beginning of the $i$-th iteration ($i\geq0$ and $\mathbf{x}^{[0]}_k(0)=\bm{\theta}^{[0]}$).
To update $\mathbf{x}^{[i]}_k(0)$, user $k$ minimizes the loss function $\lambda_k(\mathbf{x}_k)$ via gradient descent\footnote{If $|\mathcal{D}_k|$ is large, stochastic gradient descent can be adopted to accelerate the training speed.} as
\begin{align}\label{sgd}
&\mathbf{x}^{[i]}_k(\tau+1)=\mathbf{x}^{[i]}_k(\tau)-\frac{\varepsilon}{|\mathcal{D}_k|}\sum_{\mathbf{d}_{k,l}\in\mathcal{D}_k}\nabla\Theta\left(\mathbf{d}_{k,l},\mathbf{x}^{[i]}_k(\tau)\right),
\end{align}
where $\varepsilon$ is the step-size and $\tau$ is from $0$ to $E-1$ with $E$ being the number of local updates.

\item[2)] Step 2: All users upload $\{\mathbf{x}^{[i]}_k(E)|\forall k\}$ to the edge server.
Specifically, user $k$ modulates its local parameter vector $\mathbf{x}^{[i]}_k(E)$ into a complex vector $\mathbf{s}^{[i]}_k\in\mathbb{C}^{S\times 1}$ with $S=M/2$ in an analog manner as in \cite{air2,air3}, where
\begin{align}
\mathbf{s}^{[i]}_k=&
\frac{\sqrt{p_k^{[i]}}\,\mathrm{exp}(\mathrm{j}\varphi_k^{[i]})}{\sqrt{2\eta^{[i]}}}\,\Big[x_{k,1}^{[i]}(E)+\mathrm{j}\,x_{k,2}^{[i]}(E),\cdots,x_{k,M-1}^{[i]}(E)+\mathrm{j}\,x_{k,M}^{[i]}(E)\Big]^T. \label{sk}
\end{align}
In \eqref{sk}, $x_{k,m}^{[i]}(E)$ is the $m$-th element of $\mathbf{x}_k^{[i]}(E)$.
The scaling factor $\eta^{[i]}=\frac{1}{K}\sum_{k=1}^K\frac{1}{M}\|\mathbf{x}_k^{[i]}(E)\|_2^2$ such that the average power of $\mathbf{s}_k^{[i]}$ is $\frac{1}{S}\mathbb{E}\left[\|\mathbf{s}_k^{[i]}\|_2^2\right]=p_k^{[i]}$.\footnote{
In practice, the users would adopt $\eta^{[i-1]}$ sent from the server in the last iteration, which is a good approximation for the actual $\eta^{[i]}$ in the current iteration.}
The transmit power is $p_k^{[i]}\leq P_0$ with $P_0$ being the maximum transmit power at each user and the phase shift is $\varphi^{[i]}_k\in[0,2\pi]$.
To facilitate the subsequent derivations, we define the transmitter design $t_k^{[i]}=\sqrt{p_k^{[i]}}\,\mathrm{exp}(\mathrm{j}\varphi_k^{[i]})$ for all $(i,k)$ and $\{p_k^{[i]},\varphi_k^{[i]}\}$ can be recovered from $\{t_k^{[i]}\}$.

\item[3)] Step 3: The received signal $\mathbf{R}^{[i]}\in\mathbb{C}^{N\times S}$ at the server is
\begin{align}\label{Ri}
&\mathbf{R}^{[i]}=\sum_{k=1}^K\mathbf{h}_k^{[i]}(\mathbf{s}_{k}^{[i]})^T+\mathbf{Z}^{[i]},
\end{align}
where $\mathbf{h}_k^{[i]}\in\mathbb{C}^{N\times 1}$ is the uplink channel vector from user $k$ to the server and $\mathbf{Z}^{[i]}\in\mathbb{C}^{N\times S}$ is the matrix of the additive white Gaussian noise with covariance matrix
$\mathbb{E}\left[\mathrm{vec}(\mathbf{Z}^{[i]})\mathrm{vec}(\mathbf{Z}^{[i]})^H\right]=\sigma^2_b\mathbf{I}_{NS}$ ($\sigma^2_b$ is the noise power at the server).
Upon receiving the superimposed signal, the server processes $\mathbf{R}^{[i]}$ using a function $\Psi^{[i]}(\cdot)$ and broadcasts $\Psi^{[i]}(\mathbf{R}^{[i]})\in\mathbb{C}^{N\times S}$ to all the users.

\item[4)] Step 4: The received signal at user $k$ is
\begin{align}
&\left(\mathbf{y}^{[i]}_k\right)^T=\left(\mathbf{g}_k^{[i]}\right)^H\Psi^{[i]}\left(\mathbf{R}^{[i]}\right)+\left(\mathbf{n}^{[i]}_k\right)^T, \label{yk}
\end{align}
where $\mathbf{g}_k^{[i]}\in\mathbb{C}^{N\times 1}$ is the downlink channel vector\footnote{It is assumed that $\{\mathbf{h}_k^{[i]},\mathbf{g}_k^{[i]}\}$ are quasi-static flat-fading channels during the model exchange procedure of each FL iteration.} from the server to user $k$ and $\mathbf{n}_k^{[i]}\in\mathbb{C}^{S\times 1}$ is the vector of the additive white Gaussian noise with covariance matrix $\sigma^2_k\mathbf{I}_S$ ($\sigma^2_k$ is the noise power at user $k$).
User $k$ demodulates $\mathbf{y}_k^{[i]}$ and sets the local parameter vector for the $(i+1)$-th iteration as
\begin{align}
\mathbf{x}^{[i+1]}_k(0)
=&
\sqrt{2\eta^{[i]}}\,\Big[\mathrm{Re}(r_k^{[i]}y^{[i]}_{k,1}),\mathrm{Im}(r^{[i]}_ky^{[i]}_{k,1}),\cdots,
\mathrm{Re}(r^{[i]}_ky^{[i]}_{k,S}),\mathrm{Im}(r^{[i]}_ky^{[i]}_{k,S})\Big]^T, \label{nextround}
\end{align}
where $r^{[i]}_k\in\mathbb{C}$ is a receive coefficient applied to $\mathbf{y}^{[i]}_k$ and $y^{[i]}_{k,s}$ is the $s$-th element of $\mathbf{y}^{[i]}_k$.
This completes one federated learning round and we set $i\leftarrow i+1$.
\end{itemize}

The entire procedure stops when $i=R$ with $R$ being the number of federated learning rounds.

\section{Proposed UMAirComp Edge FL Framework}

In existing AirComp schemes, e.g., \cite{air2,air3,air4,air5}, $\Psi^{[i]}$ is implemented using baseband signal processing techniques.
Thus, the superimposed signals of all receive antennas at the edge server are first combined via a vector in the digital baseband and then broadcast to users via another vector in the baseband.
Therefore, the associated implementation cost could be high if massive number of antennas are deployed.
In the following, the UMAirComp scheme, where the functions $\{\Psi^{[i]}\}$ are implemented in analog domain, is proposed.
This helps to reduce both the required implementation costs and power consumptions while achieving excellent system performance.

\subsection{UMAirComp Framework}

The proposed UMAirComp scheme employs multiple phase shifters to generate the updated global model parameters from the local ones.
Specifically, we propose to aggregate and forward the signal via
\begin{align}
&\Psi^{[i]}(\mathbf{R}^{[i]})=\sqrt{\gamma}\,\mathbf{F}^{[i]}\mathbf{R}^{[i]}, \label{Psi}
\end{align}
where $\gamma>0$ is the power scaling factor at the edge server and $\mathbf{F}^{[i]}\in\mathbb{C}^{N\times N}$ is the phase shift matrix.
The analog domain implementation of $\mathbf{F}^{[i]}$ is shown in Fig.~2, which inserts an analog beamformer in the RF domain immediately after the low-noise amplifiers and bandpass filter.
This is in contrast to the digital domain implementation, which inserts a complete RF receiver chain and a high-rate high-resolution analog-to-digital-converter (ADC) unit for each antenna.
The analog beamformer constructs linear combinations of slightly delayed antenna signals and results in $N_{\rm{ADC}}$ output signals.
Depending on the value of $N_{\rm{ADC}}$, the implementation can be categorized into fully-connected (Fig.~2a) or partially-connected (Fig.~2b) \cite{analog,analog2}.

For the fully-connected structure shown in Fig.~2a, we have $N_{\rm{ADC}}>1$.
Each entry $F^{[i]}_{l,l'}$ of $\mathbf{F}^{[i]}$ corresponds to the phase delay introduced by the phase shifter for the $l$-th receive antenna and the $l'$-th output signal.
Note that amplitude changes are not possible.
This results in unit-modulus constraints on all elements of the analog beamforming matrices $\{\mathbf{F}^{[i]}\}$, i.e.,
\begin{align}
&|F_{l,l'}^{[i]}|=1,\quad\forall l,l'. \label{APpower}
\end{align}
The output signals are stacked into vectors, delayed for a guard period (e.g., using RF sampling or RF delay line systems), converted to downlink signals, amplified by $\gamma$, and transmitted to users.

For the partially-connected structure shown in Fig.~2b, we have $N_{\rm{ADC}}=1$.
Each receive antenna element is processed by a phase shifter and combined into a single output signal.
The output signal is downconverted to a baseband signal, which is sampled and quantized into bits via an ADC unit.
After delaying for a guard period, the signal is upconverted to a passband signal and filtered via another phase shift vector before transmission.
Hence, apart from the unit-modulus constraint \eqref{APpower}, $\mathbf{F}^{[i]}$ should also satisfy $\mathrm{Rank}\left(\mathbf{F}^{[i]}\right)=1$.
Note that the required number of total phase shifters is only $2N$ for the partially-connected structure.

\begin{figure*}[!t]
 \centering
\subfigure[]{ \includegraphics[height=0.22\textwidth]{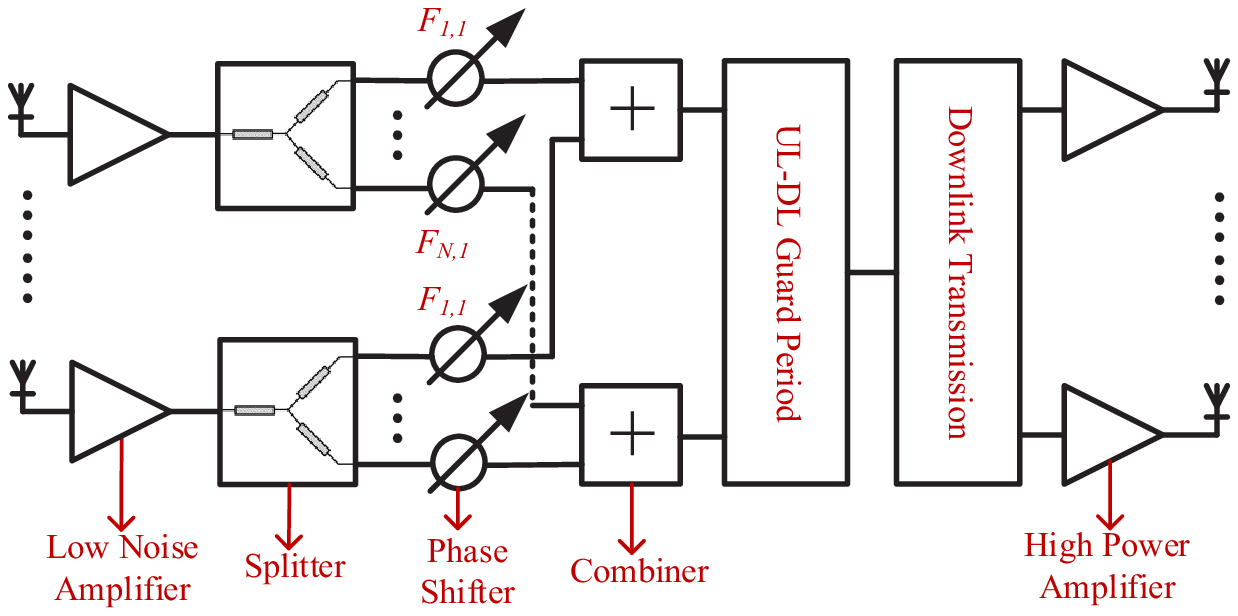}}
\subfigure[]{ \includegraphics[height=0.205\textwidth]{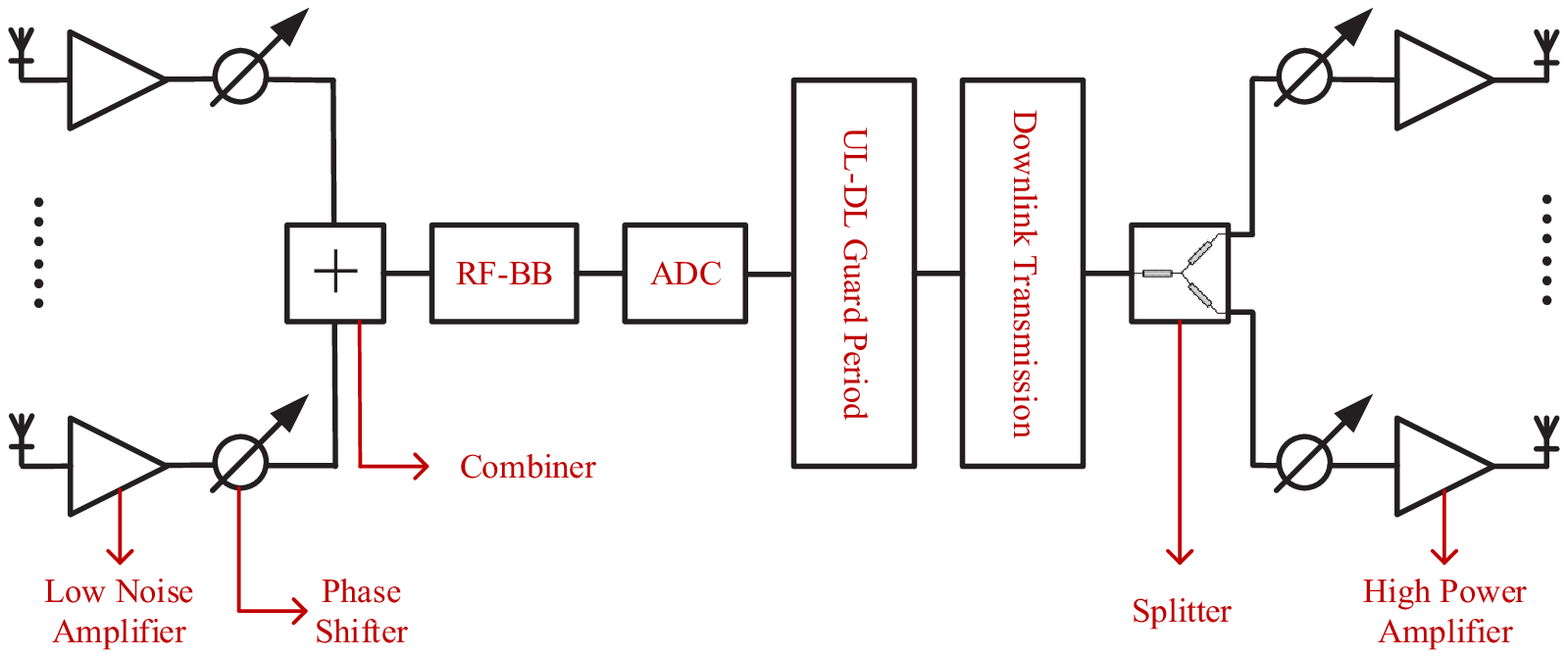}}
  \caption{Illustration of the phase shift network at the server: a) fully-connected structure; b) partially-connected structure.}
\end{figure*}

Based on equations \eqref{sgd}--\eqref{Psi}, the MSE between the received local parameter $\mathbf{x}^{[i+1]}_k$ and the target local parameter $\bm{\theta}^{[i]}$ at the $(i+1)$-th FL iteration for the proposed UMAirComp scheme is
\begin{align}\label{mse}
&\mathbb{MSE}_k^{[i]}\left(\mathbf{F}^{[i]}, \{t_k^{[i]}, r_k^{[i]}\}\right)
=\mathbb{E}\left[\Big\|\mathbf{x}^{[i+1]}_k(0)-\bm{\theta}^{[i]}\Big\|_2^2\right]
\nonumber\\
&=2S\eta^{[i]}\Bigg[\gamma\sum_{j=1}^K\Big|r_k^{[i]}(\mathbf{g}_k^{[i]})^H\mathbf{F}^{[i]}\mathbf{h}_j^{[i]}t_j^{[i]}-\alpha_j\Big|^2
+\gamma\sigma^2_b\|r_k^{[i]}(\mathbf{g}_k^{[i]})^H\mathbf{F}\|_2^2
+\sigma_k^2|r_k^{[i]}|^2\Bigg],
\end{align}
where the target model parameter
\begin{align}
&\bm{\theta}^{[i]}=\sum_{k=1}^K\alpha_k\mathbf{x}_k^{[i]}(E) \label{global}
\end{align}
is equivalent to the gradient descent of the objective function of \eqref{FL}.
Note that the $\mathbb{MSE}_k^{[i]}$ is the total MSE between the obtained and target models, which consists of $M=2S$ parameters.
In the following, we will also use ${\mathbb{MSE}^{[i]}_k/(2S\eta^{[i]})}$ as a performance metric, which is the normalized MSE representing the relative distortion with respect to each parameter.

\subsection{Training Loss Analysis}

Due to the model distortion $\mathbb{MSE}_k^{[i]}$ in \eqref{mse}, the training loss $\Lambda(\mathbf{x}_k^{[R]}(0))$ in \eqref{FL} after the UMAirComp edge FL terminates is random.
Its expectation $\mathbb{E}[\Lambda(\mathbf{x}_k^{[R]}(0))]$, where the expectation is taken over channel noises and model parameters, would be greater than $\Lambda(\bm{\theta}^{*})$ in wired FL systems, where $\bm{\theta}^*$ denotes the optimal solution of $\bm{\theta}$ to \eqref{FL}.
Hence, we can use their difference, namely $\mathbb{E}[\Lambda(\mathbf{x}_k^{[R]}(0))]-\Lambda(\bm{\theta}^*)$, as a metric to capture the degradation on training loss.
Before establishing the main results, we first introduce the assumption imposed on the loss function.

\begin{assumption}
(i) The function $\Lambda(\mathbf{x})$ is $\mu$-strongly convex.
(ii) The function $\frac{1}{|\mathcal{D}_k|}\sum_{\mathbf{d}_{k,l}\in\mathcal{D}_k}\Theta(\mathbf{d}_{k,l}, \mathbf{x})$ is twice differentiable and satisfies $\frac{1}{|\mathcal{D}_k|}\sum_{\mathbf{d}_{k,l}\in\mathcal{D}_k}\nabla^2_{\mathbf{x}}\Theta(\mathbf{d}_{k,l}, \mathbf{x})\preceq L\mathbf{I}$.
\end{assumption}

Under Assumption 1, the relationship between $\Lambda(\mathbf{x}_k^{[R]}(0))$ and $\Lambda(\bm{\theta}^{*})$ is summarized in the following theorem.
\begin{theorem}
With $E=1$ and $\varepsilon=\frac{1}{L}$, the UMAirComp scheme satisfies
\begin{align}
\mathbb{E}\left[\Lambda(\mathbf{x}_k^{[R]}(0))\right]-\Lambda(\bm{\theta}^*)
\leq
\sum_{i=0}^{R-1}A^{[i]}\,\mathop{\mathrm{max}}_{k=1,\cdots,K}\mathbb{MSE}^{[i]}_k,
\end{align}
as $R\rightarrow+\infty$, where
\begin{align}
A^{[i]}=\frac{L}{2}\left(3+2\sum_{j=1}^K\alpha_j^2\right)
\left(1-\frac{\mu}{L}\right)^{R-1-i}.
\end{align}
\end{theorem}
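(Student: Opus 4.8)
The plan is to recognize Theorem~1 as an inexact (perturbed) gradient-descent convergence result for the $\mu$-strongly convex, $L$-smooth objective $\Lambda$, and to unroll a one-round descent recursion whose per-round contraction factor is $(1-\mu/L)$ and whose additive error is proportional to $\max_k\mathbb{MSE}_k^{[i]}$. First I would make the reduction explicit. With $E=1$ and $\varepsilon=1/L$, the local update \eqref{sgd} gives $\mathbf{x}_k^{[i]}(1)=\mathbf{x}_k^{[i]}(0)-\frac1L\nabla\lambda_k(\mathbf{x}_k^{[i]}(0))$, so the target \eqref{global} reads $\bm{\theta}^{[i]}=\sum_k\alpha_k\mathbf{x}_k^{[i]}(0)-\frac1L\sum_k\alpha_k\nabla\lambda_k(\mathbf{x}_k^{[i]}(0))$. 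I would write the received parameter as $\mathbf{x}_k^{[i+1]}(0)=\bm{\theta}^{[i]}+\mathbf{e}_k^{[i]}$ with $\mathbb{E}\|\mathbf{e}_k^{[i]}\|_2^2=\mathbb{MSE}_k^{[i]}$ by definition \eqref{mse}, and record that, because the channel-noise contributions in \eqref{mse} (the $\sigma_b^2$ and $\sigma_k^2$ terms) are zero-mean and independent across users and rounds, the errors $\{\mathbf{e}_k^{[i]}\}$ may be treated as conditionally zero-mean and mutually independent---the fact I will exploit to eliminate first-order cross terms in expectation and to obtain $\mathbb{E}\|\sum_k\alpha_k\mathbf{e}_k^{[i]}\|_2^2=\sum_k\alpha_k^2\,\mathbb{MSE}_k^{[i]}$.

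Next I would quantify how far $\bm{\theta}^{[i]}$ is from a genuine gradient step on $\Lambda$. Since at the start of round $i$ each user sits at $\mathbf{x}_k^{[i]}(0)=\bm{\theta}^{[i-1]}+\mathbf{e}_k^{[i-1]}$ rather than at the common point $\bm{\theta}^{[i-1]}$, the aggregated step differs from the exact step $\tilde{\bm{\theta}}^{[i]}:=\bm{\theta}^{[i-1]}-\frac1L\nabla\Lambda(\bm{\theta}^{[i-1]})$ by $\bm{\delta}^{[i]}=\sum_k\alpha_k\mathbf{e}_k^{[i-1]}-\frac1L\sum_k\alpha_k[\nabla\lambda_k(\bm{\theta}^{[i-1]}+\mathbf{e}_k^{[i-1]})-\nabla\lambda_k(\bm{\theta}^{[i-1]})]$. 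Using the Hessian bound $\nabla^2\lambda_k\preceq L\mathbf{I}$ from Assumption~1(ii), each gradient difference is $L$-Lipschitz, so $\|\bm{\delta}^{[i]}\|_2\le 2\sum_k\alpha_k\|\mathbf{e}_k^{[i-1]}\|_2$ and, after squaring with $\|\sum_k\alpha_k\mathbf{e}_k\|_2^2\le\sum_k\alpha_k\|\mathbf{e}_k\|_2^2$ (Jensen) together with the independence identity above, $\mathbb{E}\|\bm{\delta}^{[i]}\|_2^2$ is controlled by a constant times $\max_k\mathbb{MSE}_k^{[i-1]}$ and by $\sum_k\alpha_k^2\,\mathbb{MSE}_k^{[i-1]}$. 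This is the step that produces the two pieces of the constant $3+2\sum_j\alpha_j^2$.

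I would then assemble the one-round recursion. Combining the exact-step contraction $\Lambda(\tilde{\bm{\theta}}^{[i]})-\Lambda(\bm{\theta}^*)\le(1-\mu/L)(\Lambda(\bm{\theta}^{[i-1]})-\Lambda(\bm{\theta}^*))$ (the standard consequence of $L$-smoothness and $\mu$-strong convexity with step $1/L$) with the $L$-smoothness expansion $\Lambda(\bm{\theta}^{[i]})\le\Lambda(\tilde{\bm{\theta}}^{[i]})+\langle\nabla\Lambda(\tilde{\bm{\theta}}^{[i]}),\bm{\delta}^{[i]}\rangle+\frac L2\|\bm{\delta}^{[i]}\|_2^2$, and taking expectations so the first-order term drops, yields $\Delta^{[i]}\le(1-\mu/L)\Delta^{[i-1]}+c\,\max_k\mathbb{MSE}_k^{[i-1]}$ for $\Delta^{[i]}:=\mathbb{E}[\Lambda(\bm{\theta}^{[i]})]-\Lambda(\bm{\theta}^*)$. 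A final smoothness expansion of $\Lambda$ at $\mathbf{x}_k^{[R]}(0)=\bm{\theta}^{[R-1]}+\mathbf{e}_k^{[R-1]}$ converts $\Delta^{[R-1]}$ into the quantity on the left-hand side of the theorem while adding the $i=R-1$ error term. Unrolling the recursion from $i=0$ gives a geometric sum in which the error injected at round $i$ is damped by $(1-\mu/L)^{R-1-i}$, reproducing $A^{[i]}$, while the initial-gap term $(1-\mu/L)^{R}\Delta^{[0]}$ vanishes as $R\to\infty$; this is precisely why the statement is asymptotic in $R$.

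The main obstacle I anticipate is the careful treatment of the first-order (cross) terms and the non-zero-mean part of the gradient mismatch in $\bm{\delta}^{[i]}$. Eliminating $\langle\nabla\Lambda,\mathbf{e}_k^{[i]}\rangle$ and $\langle\nabla\Lambda(\tilde{\bm{\theta}}^{[i]}),\bm{\delta}^{[i]}\rangle$ cleanly requires the bias-variance decomposition of the AirComp error in \eqref{mse}---isolating the zero-mean channel-noise component (which kills the cross term in conditional expectation) from the deterministic aggregation-gain bias and the second-order gradient-curvature term (which must instead be absorbed into the quadratic $\frac L2\|\bm{\delta}^{[i]}\|_2^2$ budget). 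Getting the bookkeeping of these quadratic contributions to collapse exactly into the constant $\frac L2(3+2\sum_j\alpha_j^2)$, rather than a looser constant, is the delicate part; everything else is the routine unrolling of a contracting recursion.
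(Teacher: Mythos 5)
Your overall architecture (perturbed gradient descent, contraction factor $1-\mu/L$, geometric unrolling, initial gap vanishing as $R\to\infty$) matches the paper, but you track the wrong sequence and this creates a genuine gap at the cross-term step. You expand $\Lambda$ around the \emph{post}-gradient-step point $\tilde{\bm{\theta}}^{[i]}$ and then argue that $\mathbb{E}\langle\nabla\Lambda(\tilde{\bm{\theta}}^{[i]}),\bm{\delta}^{[i]}\rangle=0$ because the AirComp errors are conditionally zero-mean. They are not: by \eqref{mse}, $\mathbb{MSE}_k^{[i]}$ contains the deterministic misalignment term $\gamma\sum_j|r_k\mathbf{g}_k^H\mathbf{F}\mathbf{h}_jt_j-\alpha_j|^2$, so $\mathbf{e}_k^{[i]}$ has a nonzero conditional mean proportional to the users' parameter vectors; moreover the gradient-mismatch component of your $\bm{\delta}^{[i]}$ is a \emph{nonlinear} function of $\mathbf{e}_k^{[i-1]}$ and would not be zero-mean even if $\mathbf{e}_k^{[i-1]}$ were. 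You acknowledge this at the end and propose to ``absorb'' the biased part into the quadratic budget, but that absorption (e.g.\ via Young's inequality) necessarily either degrades the contraction factor or inflates the constant, so the claimed $A^{[i]}$ with constant $\frac{L}{2}\bigl(3+2\sum_j\alpha_j^2\bigr)$ is not recovered; your own crude bound $\|\bm{\delta}^{[i]}\|_2\le 2\sum_k\alpha_k\|\mathbf{e}_k^{[i-1]}\|_2$ already yields $4$ in place of $3+2\sum_j\alpha_j^2$, and the final conversion from $\Lambda(\bm{\theta}^{[R-1]})$ to $\Lambda(\mathbf{x}_k^{[R]}(0))$ injects yet another uncancelled first-order term.

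The paper sidesteps all of this by tracking the \emph{per-user} iterate $\mathbf{x}_k^{[i]}(0)$ and defining the perturbation $\Delta\mathbf{x}^{[i]}=\mathbf{x}_k^{[i+1]}(0)-\bigl[\mathbf{x}_k^{[i]}(0)-\varepsilon\nabla\Lambda(\mathbf{x}_k^{[i]}(0))\bigr]$ relative to the \emph{pre}-step point: substituting $\mathbf{x}'=\mathbf{x}-\frac{1}{L}\nabla\Lambda(\mathbf{x})+\Delta\mathbf{x}^{[i]}$ into the descent lemma \eqref{Lambda leq}, the first-order term $\langle\nabla\Lambda(\mathbf{x}),\Delta\mathbf{x}^{[i]}\rangle$ cancels \emph{algebraically} against the cross term inside $\frac{L}{2}\|\mathbf{x}'-\mathbf{x}\|_2^2$, leaving exactly $-\frac{1}{2L}\|\nabla\Lambda\|_2^2+\frac{L}{2}\|\Delta\mathbf{x}^{[i]}\|_2^2$ with no statistical assumption on the error whatsoever. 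The constant $3+2\sum_j\alpha_j^2$ then arises not from a bias--variance split but from a three-way decomposition of $\|\Delta\mathbf{x}^{[i]}\|_2^2$ into the reception error ($\to\mathbb{MSE}_k^{[i]}$), the inter-user parameter spread $\sum_j\alpha_j\|\mathbf{x}_j^{[i]}(0)-\mathbf{x}_k^{[i]}(0)\|_2^2$ ($\to 2\,\mathbb{MSE}_k^{[i]}$ via the common target $\bm{\theta}^{[i-1]}$), and the gradient heterogeneity term ($\to 2\sum_j\alpha_j^2\,\mathbb{MSE}_k^{[i]}$ via $\varepsilon^2L^2=1$). To repair your argument you would either need to adopt this pre-step reference point, or restrict to unbiased AirComp (zero misalignment), which the theorem does not assume.
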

\begin{proof}
First, it can be proved that $\mathbb{E}\left[\|\Delta \mathbf{x}^{[i]}\|^2_2\right]\leq
\left(3+2\sum_{j=1}^K\alpha_j^2\right)
\,\mathop{\mathrm{max}}_{k}\mathbb{MSE}^{[i]}_k$, where $\Delta \mathbf{x}^{[i]}=\mathbf{x}_k^{[i+1]}(0)-\left[\mathbf{x}_k^{[i]}(0)-\varepsilon\nabla_{\mathbf{x}}\Lambda(\mathbf{x}_k^{[i]}(0))\right]$.
Second, using Assumption 1 and Lipschitz conditions, the relationship between $\Lambda(\mathbf{x}_k^{[i+1]}(0))-\Lambda(\mathbf{x}_k^{[i]}(0))$ and $\mathbb{E}\left[\|\Delta \mathbf{x}^{[i]}\|^2_2\right]$ is obtained.
Lastly, applying the former relationship recursively, the relationship between $\mathbb{E}[\Lambda(\mathbf{x}_k^{[i+1]}(0))]-\Lambda(\bm{\theta}^*)$ and $\mathbb{E}\left[\|\Delta \mathbf{x}^{[i]}\|^2_2\right]$ is obtained.
More details can be found in Appendix A.
\end{proof}

Theorem 1 shows a diminishing $A^{[i]}\rightarrow 0$ for a large $R-1-i$, meaning that the impact from earlier FL iterations vanishes as the edge FL continues.
On the other hand, if $\mathbb{MSE}^{[i]}_k\rightarrow 0$ for all $k$, then $\Lambda(\mathbf{x}_k^{[R]})$ is an unbiased estimate of $\Lambda(\bm{\theta}^*)$.
This demonstrates the effectiveness of UMAirComp in the asymptotic region.
The convexity and smoothness in Assumption 1 have been adopted in most loss bound analysis of FL (e.g., \cite{mse1,iclr,air1,yurii}).
Although it seems to be restrictive for realistic applications, analysis under Assumption 1 could provide important insights of the behavior of UMAirComp in nonconvex cases.

The analysis can be generalized to the case with multiple local epochs (i.e., $E>1$).
Specifically, in addition to Assumption 1, we further impose the following assumption.

\begin{assumption}
$\|\frac{1}{|\mathcal{D}_k|}\sum_{\mathbf{d}_{k,l}\in\mathcal{D}_k}\nabla_{\mathbf{x}}\Theta(\mathbf{d}_{k,l}, \mathbf{x})\|_2^2\leq G^2$ for some $G>0$ and all $k$.
\end{assumption}

In practice, the above assumption can always be satisfied with a sufficiently large $G$, which represents the maximum gradient norm in the back propagation procedure.
Otherwise, the exploding gradients will lead to the unstable training of deep networks.
Let $\Lambda^*$ denotes the minimum training loss in \eqref{FL} over all users' datasets and $\lambda_k^*$ denotes the minimum training loss over the $k$-th user's dataset.
Then, the degree of heterogeneity of the edge FL system is defined as $\Gamma=\Lambda^*-\sum_{k=1}^K\alpha_k\lambda_k^*$.
It can be seen that $\Gamma=0$ if all users have the same dataset.
Then the following convergence result can be established.

\begin{theorem}
With $\varepsilon=\frac{2}{\mu(\nu+iE+\tau)}$ and $\nu=\mathrm{max}(\frac{8L}{\mu},E)$, the UMAirComp scheme satisfies
\begin{align}
\mathbb{E}\left[\Lambda(\mathbf{x}_k^{[R]}(0))\right]-\Lambda(\bm{\theta}^*)
\leq
\frac{2L\,\mathrm{max}(4C,\mu^2\nu\|\bm{\theta}^{[0]}-\bm{\theta}^* \|_2^2)}{\mu^2(RE+\nu)},
\end{align}
where
\begin{align}
&C=8E^2G^2+6L\Gamma+\frac{\mu^2(\nu+RE)^2}{4}\,\mathop{\mathrm{max}}_{\forall i,k}\,\mathbb{MSE}_k^{[i]}.
\end{align}
\end{theorem}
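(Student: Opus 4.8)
The plan is to adapt the FedAvg convergence analysis with a diminishing step size (as in \cite{iclr}) so as to account for the per-round distortion captured by $\mathbb{MSE}_k^{[i]}$. First I would introduce the virtual averaged sequence $\bar{\mathbf{v}}^{[i]}(\tau)=\sum_{k=1}^K\alpha_k\mathbf{x}_k^{[i]}(\tau)$, which tracks the weighted average of the local parameters across the $E$ local gradient steps within round $i$. For the interior updates $\tau=0,\dots,E-1$ this virtual sequence behaves as a single gradient step on $\Lambda$, whereas at each round boundary the aggregation-and-broadcast replaces the ideal target $\bm{\theta}^{[i]}=\sum_k\alpha_k\mathbf{x}_k^{[i]}(E)$ by the distorted received parameters $\mathbf{x}_k^{[i+1]}(0)$, injecting an error whose second moment is controlled by $\max_k\mathbb{MSE}_k^{[i]}$ from \eqref{mse}.

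Next I would establish a one-step contraction for the interior updates. Invoking $\mu$-strong convexity and the $L$-smoothness of Assumption 1, together with the bounded-gradient condition of Assumption 2, I would derive a recursion of the form
\[
\mathbb{E}\big[\|\bar{\mathbf{v}}^{[i]}(\tau+1)-\bm{\theta}^*\|_2^2\big]\le(1-\mu\varepsilon_t)\,\mathbb{E}\big[\|\bar{\mathbf{v}}^{[i]}(\tau)-\bm{\theta}^*\|_2^2\big]+\varepsilon_t^2\,B,
\]
where $t=iE+\tau$ is the global step index, $\varepsilon_t=\tfrac{2}{\mu(\nu+t)}$, and $B$ collects two contributions: the local-drift term $8E^2G^2$, arising from bounding how far each $\mathbf{x}_k^{[i]}(\tau)$ departs from $\bar{\mathbf{v}}^{[i]}(\tau)$ after up to $E$ local steps (this is where Assumption 2 produces the $E^2$ dependence), and the heterogeneity term $6L\Gamma$.

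The crux of the argument, and the main obstacle I anticipate, is folding the AirComp distortion into this diminishing-stepsize recursion. The MSE enters as a fixed additive perturbation exactly once per round and is \emph{not} scaled by $\varepsilon_t$, so it does not directly fit the $\varepsilon_t^2 B$ template. The key device I would use is to dominate this additive term by $\varepsilon_t^2$ times a constant via the monotonicity $\varepsilon_t\ge\varepsilon_{RE}$ for every $t\le RE$; since $\varepsilon_{RE}^{-2}=\tfrac{\mu^2(\nu+RE)^2}{4}$, the distortion is absorbed as the third summand $\tfrac{\mu^2(\nu+RE)^2}{4}\max_{\forall i,k}\mathbb{MSE}_k^{[i]}$ of $C$. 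With this step the interior updates and the aggregation steps can be treated uniformly under a single recursion whose effective constant is exactly $B=C$.

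Finally I would close the proof by induction on the global step $t$. Positing the hypothesis $\mathbb{E}[\|\bar{\mathbf{v}}^{[i]}(\tau)-\bm{\theta}^*\|_2^2]\le\tfrac{\Delta}{\nu+t}$ with $\Delta=\max\{4C/\mu^2,\,\nu\|\bm{\theta}^{[0]}-\bm{\theta}^*\|_2^2\}$, the choice $\varepsilon_t=\tfrac{2}{\mu(\nu+t)}$ makes the inductive step $(1-\mu\varepsilon_t)\tfrac{\Delta}{\nu+t}+\varepsilon_t^2 C\le\tfrac{\Delta}{\nu+t+1}$ go through for this $\Delta$, where the definition $\nu=\max(\tfrac{8L}{\mu},E)$ guarantees both $\mu\varepsilon_t\le 1$ and that the remaining one-step constants stay controlled. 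Evaluating at the terminal index $t=RE$ and converting the parameter-distance bound into a loss-gap bound through $L$-smoothness, namely $\mathbb{E}[\Lambda(\mathbf{x}_k^{[R]}(0))]-\Lambda(\bm{\theta}^*)\le\tfrac{L}{2}\,\mathbb{E}[\|\mathbf{x}_k^{[R]}(0)-\bm{\theta}^*\|_2^2]$, then yields the stated bound.
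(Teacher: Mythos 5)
Your proposal is correct and follows essentially the same route as the paper's Appendix B: a virtual averaged sequence with distortion injected at round boundaries, the one-step recursion from \cite[Lemma 1]{iclr} with the drift term $8E^2G^2$ and heterogeneity term $6L\Gamma$, absorption of the per-round MSE into the $\varepsilon_t^2 C$ template via $\varepsilon_t\ge\varepsilon_{RE}$ (which is exactly why $C$ contains the factor $\mu^2(\nu+RE)^2/4$, a step the paper leaves implicit), the induction of \cite[Appendix B]{air1}, and a final smoothness conversion. Your conversion constant $L/2$ is in fact tighter than the paper's $2L$, so it still implies the stated bound.
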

\begin{proof}
First, we define a virtual sequence and a recursive bound can be obtained for this sequence based on \cite[Lemma 1]{iclr}.
Second, the recursive bound is further simplified by bounding the model distortion and gradient deviation.
Lastly, a non-recursive bound is obtained using the induction method in \cite[Appendix B]{air1} and the results follow immediately.
More details can be found in Appendix B.
\end{proof}

It can be seen from Theorem 2 that if we increase the number of local epochs $E$ or the degree of heterogeneity $\Gamma$, the term $C$ would increase and the gap between $\mathbb{E}\left[\Lambda(\mathbf{x}_k^{[R]}(0))\right]$ and $\Lambda(\bm{\theta}^*)$ becomes larger.
This implies that more local training epochs or more diverse datasets would make it more difficult for the proposed UMAirComp federated learning to converge.

\subsection{Problem Formulation}

Ideally, the optimization of $\{\mathbf{F}^{[i]},r_k^{[i]},t_k^{[i]}\}$ should be performed to minimize the expectation of training error, i.e., $\mathbb{E}[\Lambda(\mathbf{x}_k^{[R]}(0))]$.
However, its analytical expression is usually challenging to derive.
As a compromise, we resort to the minimization of the upper bound obtained from Theorem 1 or Theorem 2, which at least guarantees the worst-case training loss performance.
Minimizing the training loss bound is equivalent to minimizing $\sum_{i=0}^{R}A^{[i]}\,\mathrm{max}_k\mathbb{MSE}^{[i]}_k$ for the case of $E=1$ and $\mathop{\mathrm{max}}_{\forall i,k}\mathbb{MSE}_k^{[i]}$ for the case of $E>1$.
But no matter $E=1$ or $E>1$, the objective function can be decoupled for each iteration and the minimization at the $i$-th FL iteration, $\forall i$, is given by
\begin{subequations}
\begin{align}
\mathcal{P}:\mathop{\mathrm{min}}_{\substack{\mathbf{F},\,\{r_k,t_k\}}}
\quad&
\mathop{\mathrm{max}}_{k=1,\cdots,K}~\Bigg[\gamma\sum_{j=1}^K\Big|r_k\mathbf{g}_k^H\mathbf{F}\mathbf{h}_jt_j-\alpha_j\Big|^2
+\gamma\sigma^2_b\|r_k(\mathbf{g}_k)^H\mathbf{F}\|_2^2+\sigma_k^2|r_k|^2\Bigg]
\label{P0}
\\
\quad\quad\quad\mathrm{s. t.}\quad\quad
&\mathbf{F}\in\mathcal{F}, \label{unit}
\\
&|t_k|^2\leq P_0,\quad k=1,\cdots,K, \label{iou}
\end{align}
\end{subequations}
where the FL iteration index $i$ is omitted.
The constraint  \eqref{unit} is the beamforming constraints at the server, with the feasible set
\begin{align}
&\mathcal{F}=\left\{
\begin{aligned}
&\{\mathbf{F}: |F_{l,l'}|=1,\, \forall l,l'\}, \quad &\textrm{FC}&
\\
&\{\mathbf{F}: \mathrm{Rank}\left(\mathbf{F}\right)=1, \quad |F_{l,l'}|=1,\, \forall l,l'\}, \quad &\textrm{PC}&
\end{aligned}
\right.
,
\label{setF}
\end{align}
where FC represents the fully connected case and PC represents the partially connected case.

The constraint \eqref{iou} is the power constraints at users obtained from Section II.
It can be seen that the key to minimizing the training loss upper bound of UMAirComp is to minimize the \emph{maximum MSE} instead of the average MSE.
Problem $\mathcal{P}$ is NP-hard due to the unit-modulus constraints \cite{analog,analog2}.
In addition, the coupling between variables $\{r_k\}$, $\{t_k\}$, and $\mathbf{F}$ makes the problem nonlinear and nonconvex.
Furthermore, the large dimensions of $\mathbf{F}$ and $\{r_k,t_k\}$ call for low-complexity algorithms in the scenario with massive antennas and users.

\subsection{Practical Implementation}

\begin{figure*}[!t]
 \centering
 \includegraphics[width=0.95\textwidth]{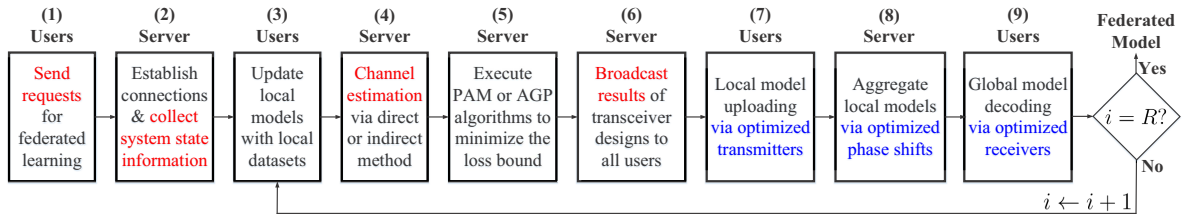}
 \caption{Flowchart of the proposed UMAirComp federated learning including system state information estimation, local training, channel state information estimation, results broadcast, model uploading and aggregation. The text in red represents the information exchange procedure and the text in blue represents the signal processing procedure.}
\end{figure*}

In practice, parameters $\gamma,\{\alpha_k\},\sigma_b^2,\{\sigma_k^2\},P_0,\{\mathbf{h}_k\},\{\mathbf{g}_k\}$ should be collected at the edge server before solving $\mathcal{P}$.
These information can be categorized into two types: system state information and channel state information.
The system state information includes the power scaling factor $\gamma$ at the server, parameter aggregation weights $\{\alpha_k\}$ at the server, noise powers $\{\sigma_b^2,\sigma_k^2\}$ at the server and users, and maximum transmit power $P_0$ at users.
This information can be viewed as constants during the entire federated learning procedure.
They are collected only once before the first federated learning iteration and pre-stored at the edge server for subsequent use.
On the other hand, the channel state information, including uplink channels $\{\mathbf{h}_k\}$ and downlink channels $\{\mathbf{g}_k\}$, needs to be estimated using pilot signals before each federated learning round.
Since the digital baseband processor does not have access to the individual antenna signals, the channels in UMAirComp systems should be estimated using low-rate low-resolution ADCs or the least squares fitting \cite[Section VI]{analog2}.

Based on the above discussions, the entire signalling procedure for obtaining the necessary information to solve the UMAirComp FL problem is shown in Fig.~3.
It can be seen that there are four messages (highlighted in red) that have to be exchanged between the server and users:
\begin{itemize}
\item[1)] Users to server: requests for the federated learning;
\item[2)] Users to server: system state information including the number of antennas and users, transmit power budgets, sizes of datasets, noise powers;
\item[3)] Users to server: pilot signals for channel estimation;
\item[4)] Server to users: solutions of transceivers generated by the optimization algorithms.
\end{itemize}

\section{Penalty Alternating Minimization for Fully-Connected UMAirComp}

In this section, the UMAirComp with fully-connected structure is studied, where the feasible set $\mathcal{F}$ is given in the first line of \eqref{setF}.
A PAM algorithm with two layers of iterations, i.e., an outer-layer iteration and an inner-layer iteration, will be proposed to optimize the system performance.
Below we first introduce the outer-layer iteration.

\subsection{Outer-Layer Iteration}

To resolve the coupling between variables $\{r_k\}$, $\{t_k\}$, and $\mathbf{F}$, this paper adopts an alternating optimization framework \cite{beck}, which optimizes one design variable at a time with others being fixed.
Starting with an initial solution $\{\mathbf{F}^{(0)},r_k^{(0)},t_k^{(0)}\}$, the entire procedure for solving the problem $\mathcal{P}$ for the $(n+1)$-th outer iteration, $\forall n$, can be elaborated below:
\begin{subequations}
\begin{align}
\mathbf{F}^{(n+1)}=\mathop{\mathrm{arg~min}}_{\substack{\mathbf{F}}}
&
~\mathop{\mathrm{max}}_{\forall k}~\Bigg(\sum_{j=1}^K\Big|r_k^{(n)}\mathbf{g}_k^H\mathbf{F}\mathbf{h}_jt_j^{(n)}-\alpha_j\Big|^2
+\sigma^2_b\|r_k^{(n)}\mathbf{g}_k^H\mathbf{F}\|_2^2\Bigg) \nonumber\\
\quad\quad\mathrm{s. t.}\quad
&|F_{l,l'}|=1,\quad \forall,l,l', \label{problemF}
\\
\{r_k^{(n+1)}\}=\mathop{\mathrm{arg~min}}_{\substack{\{r_k\}}}
&
~\mathop{\mathrm{max}}_{\forall k}
\Bigg(\gamma\sum_{j=1}^K\Big|r_k\mathbf{g}_k^H\mathbf{F}^{(n+1)}\mathbf{h}_jt_j^{(n)}-\alpha_j\Big|^2
+\gamma\sigma^2_b\|r_k\mathbf{g}_k^H\mathbf{F}^{(n+1)}\|_2^2+\sigma_k^2|r_k|^2\Bigg), \label{problemq}
\\
\{t_k^{(n+1)}\}=\mathop{\mathrm{arg~min}}_{\substack{\{t_k\}}}
&
~\mathop{\mathrm{max}}_{\forall k}~\sum_{j=1}^K\Big|r_k^{(n+1)}\mathbf{g}_k^H\mathbf{F}^{(n+1)}\mathbf{h}_jt_j-\alpha_j\Big|^2 \nonumber\\
\quad\quad\mathrm{s. t.}\quad
&|t_k|^2\leq P_0,\quad k=1,\cdots,K, \label{problemp}
\end{align}
\end{subequations}
where $\{\mathbf{F}^{(n)},t_k^{(n)},r_k^{(n)}\}$ is the solution at the $n$-th outer iteration.
The iterative procedure stops until $n$ reaches the maximum iteration number $n=N_{\mathrm{max}}$.

Problem \eqref{problemF} can be transferred to a convex problem via semidefinite relaxation (SDR) while problems \eqref{problemq} and \eqref{problemp} are convex.
Hence, problems \eqref{problemF}--\eqref{problemp} can all be solved via CVX \cite{opt1}, a Matlab software package for solving convex problems based on IPM.
According to \cite{opt1}, the computational complexity is at least $\mathcal{O}\left(N^{7}\right)$ for solving \eqref{problemF} (the vectorization of $\mathbf{F}$ involves $N^2$ variables) and $\mathcal{O}\left(K^{3.5}\right)$ for solving \eqref{problemq}--\eqref{problemp}.
For large $N$ and $K$, this method is not desirable.
In the following, a new algorithm termed PAM, which decomposes \eqref{problemF}--\eqref{problemp} into smaller subproblems that are either solved by gradient updates or closed-form updates, is proposed for achieving both excellent performance and significantly lower computational complexities.

\subsection{Inner-Layer Iteration}

\subsubsection{Optimization of $\mathbf{F}$}

First, vectorization of $\mathbf{F}$ is given as $\mathbf{f}=\mathrm{vec}\left(\mathbf{F}\right)\in\mathbb{C}^{N^2\times 1}$.
Applying $\mathrm{Tr}\left(\mathbf{A}\mathbf{X}\mathbf{B}\mathbf{X}^T\right)=\mathrm{vec}(\mathbf{X})^T\left(\mathbf{B}^T\otimes\mathbf{A}\right)\mathrm{vec}(\mathbf{X})$ \cite{vec},
we have
\begin{align}
r_k^{(n)}\mathbf{g}_k^H\mathbf{F}\mathbf{h}_jt_j^{(n)}&=
r_k^{(n)}t_j^{(n)}\left(\mathbf{h}_j^T\otimes\mathbf{g}_k^H\right)
\mathrm{vec}\left(\mathbf{F}\right)
\nonumber\\
&
=(\mathbf{a}_{k,j}^{(n)})^H\mathbf{f},
\\
\sigma^2_b\|r_k^{(n)}\mathbf{g}_k^H\mathbf{F}\|_2^2&=|r_k^{(n)}|^2\mathrm{Tr}\left(\mathbf{g}_k\mathbf{g}_k^H\mathbf{F}\mathbf{I}_N\mathbf{F}^H\right)
\nonumber\\
&
=\mathbf{f}^H\mathbf{G}_k^{(n)}\mathbf{f},
\end{align}
where $ \mathbf{a}_{k,j}^{(n)}=\left[r_k^{(n)}t_j^{(n)}\left(\mathbf{h}_j^T\otimes\mathbf{g}_k^H\right)\right]^H$ and $\mathbf{G}_k^{(n)}=\sigma^2_b|r_k^{(n)}|^2\mathbf{I}_N\otimes\left(\mathbf{g}_k\mathbf{g}_k^H\right)$.
Problem \eqref{problemF} is thus re-formulated as
\begin{align}
\mathcal{P}_F:\mathop{\mathrm{min}}_{\substack{\mathbf{f}}}
\quad&\mathop{\mathrm{max}}_{k=1,\cdots,K}~\left(
\sum_{j=1}^K\Big|(\mathbf{a}_{k,j}^{(n)})^H\mathbf{f}-\alpha_j\Big|^2+
\mathbf{f}^H\mathbf{G}_k^{(n)}\mathbf{f}\right)\nonumber\\
\quad\quad\quad\mathrm{s. t.}\quad
&|f_{l}|=1,\quad l=1,\cdots,N^2. \label{PF}
\end{align}

Next, to handle the nonseparable objective function, variable splitting of $\mathbf{f}$ is proposed such that $\mathbf{f}=\mathbf{u}_1=\cdots=\mathbf{u}_K$, where $\{\mathbf{u}_k\}$ are auxiliary variables.
Moreover, to handle the unit-modulus constraints, another auxilliary variable $\mathbf{z}=\mathbf{f}$ is introduced.
For all the newly introduced equality constraints, they can be transformed into quadratic penalties in the objective function \cite{penalty}.
As a result, $\mathcal{P}_F$ is transformed into
\begin{align}
\mathop{\mathrm{min}}_{\substack{\mathbf{f},\mathbf{z},\{\mathbf{u}_{k}\}}}
\quad&
\rho\left(\frac{1}{K}\sum_{j=1}^K\|\mathbf{u}_{j}-\mathbf{f}\|_2^2+\|\mathbf{z}-\mathbf{f}\|_2^2\right)
+
\mathop{\mathrm{max}}_{\forall k}~\left(
\sum_{j=1}^K\Big|(\mathbf{a}_{k,j}^{(n)})^H\mathbf{u}_{k}-\alpha_j\Big|^2+
\mathbf{u}_{k}^H\mathbf{G}_k^{(n)}\mathbf{u}_{k}\right)
\nonumber\\
\mathrm{s. t.}\quad
&|z_{l}|=1,\quad l=1,\cdots,N^2, \label{PFa}
\end{align}
where $\rho$ is a tuning parameter.
It can be proved that $\mathcal{P}_F$ and \eqref{PFa} are equivalent problems as $\rho\rightarrow+\infty$ \cite{penalty}.
However, this case also leads to the gradient norm of the objective function of \eqref{PFa} being infinite, making \eqref{PFa} difficult to solve.
Therefore, $\rho$ controls the tradeoff between approximation error and difficulty in solving \eqref{PFa}.

We address \eqref{PFa} using alternating minimization, in which the cost function is iteratively minimized with respect to one variable whereas the others are fixed.
Starting with an initial $\mathbf{f}^{(0)}=\mathbf{z}^{(0)}=\mathbf{u}_{k}^{(0)}=\mathrm{vec}\left(\mathbf{F}^{(n)}\right)$, the whole process consists of iteratively solving
\begin{subequations}
\begin{align}
\mathbf{u}_{k}^{(m+1)}=\mathop{\mathrm{arg~min}}_{\substack{\mathbf{u}_{k}}}
\quad&\sum_{j=1}^K\Big|(\mathbf{a}_{k,j}^{(n)})^H\mathbf{u}_{k}-\alpha_j\Big|^2+
\mathbf{u}_{k}^H\mathbf{G}_k^{(n)}\mathbf{u}_{k}
+\frac{\rho}{K}\|\mathbf{u}_{k}-\mathbf{f}^{(m)}\|_2^2,\quad \forall k, \label{PFa1}
\\
\mathbf{f}^{(m+1)}=\mathop{\mathrm{arg~min}}_{\substack{\mathbf{f}}}
\quad&\frac{1}{K}\sum_{j=1}^K\|\mathbf{u}_{j}^{(m+1)}-\mathbf{f}\|_2^2+\|\mathbf{z}^{(m)}-\mathbf{f}\|_2^2, \label{PFa2}
\\
\mathbf{z}^{(m+1)}=\mathop{\mathrm{arg~min}}_{\substack{|z_{l}|=1, \forall l}}
\quad&\rho\|\mathbf{z}-\mathbf{f}^{(m+1)}\|_2^2, \label{PFa3}
\end{align}
\end{subequations}
where $m$ is the inner iteration index.
It can be verified that the objective function of \eqref{PFa} is strongly convex.
Therefore, despite the non-differentiability of the objective, the alternating minimization \eqref{PFa1}--\eqref{PFa3} is guaranteed to converge to a stationary point of \eqref{PFa} \cite{beck}.
The iterative procedure stops until $m$ reaches the maximum iteration number $m=M_{\mathrm{max}}$.

The remaining question is how to solve \eqref{PFa1}--\eqref{PFa3} optimally.
We notice that problems \eqref{PFa1} and \eqref{PFa2} are standard least squares problems, thus their solutions are given by the following closed-form expressions
\begin{align}
\mathbf{u}_{k}^{(m+1)}&=\left(
\sum_{j=1}^K\mathbf{a}_{k,j}^{(n)}(\mathbf{a}_{k,j}^{(n)})^H+\mathbf{G}_k^{(n)}+\rho\mathbf{I}
\right)^{-1}
\times
\left(\sum_{j=1}^K\alpha_j\mathbf{a}_{k,j}^{(n)}+\frac{\rho}{K}\mathbf{f}^{(m)}\right), \label{ukm}
\\
\mathbf{f}^{(m+1)}&=\frac{1}{2}\left(\frac{1}{K}\sum_{j=1}^K\mathbf{u}_{j}^{(m+1)}+\mathbf{z}^{(m)}\right), \label{fm}
\end{align}
respectively.
On the other hand, problem \eqref{PFa3} is the projection of $\mathbf{f}^{(m+1)}$ onto unit-modulus constraint and the optimal solution is simply
\begin{align}
\mathbf{z}^{(m+1)}=\mathrm{exp}\left(\mathrm{j}\angle\mathbf{f}^{(m+1)}\right). \label{zm}
\end{align}

\subsubsection{Optimization of $\{r_k\}$}
The problem of optimizing $\{r_k\}$ in \eqref{problemq}  is also a least squares problem.
The optimal solution is found by setting the derivative $\partial \mathbb{MSE}_k/\partial \mathrm{conj}(r_k)$ to zero:
\begin{align}
\frac{\partial \mathbb{MSE}_k}{\partial \mathrm{conj}(r_k)}
&=\gamma
\sum_{j=1}^K\left(r_k\mathbf{g}_k^H\mathbf{F}^{(n+1)}\mathbf{h}_jt_j^{(n)}-\alpha_j\right)
\mathrm{conj}\left(\mathbf{g}_k^H\mathbf{F}^{(n+1)}\mathbf{h}_jt_j^{(n)}\right)
\nonumber\\
&\quad{}
+\gamma\sigma^2_b\|\mathbf{g}_k^H\mathbf{F}^{(n+1)}\|_2^2r_k
+\sigma_k^2r_k=0,
\end{align}
which yields
\begin{align}
r_k^{(n+1)}=&
\frac{\sum_{j}\alpha_j\mathrm{conj}\left(\mathbf{g}_k^H\mathbf{F}^{(n+1)}\mathbf{h}_jt_j^{(n)}\right)}
{\sum_{j}|\mathbf{g}_k^H\mathbf{F}^{(n+1)}\mathbf{h}_jt_j^{(n)}|^2
+\sigma^2_b\|\mathbf{g}_k^H\mathbf{F}^{(n+1)}\|_2^2
+\sigma_k^2/\gamma}
. \label{qk}
\end{align}

\subsubsection{Optimization of $\{t_k\}$}
The objective function of problem \eqref{problemp} is not separable.
Following similar variable splitting procedure as in \eqref{PFa}, we introduce auxiliary variables $\{\xi_{k,j}=t_k, \forall k,j\}$ and add quadratic penalty $\rho\sum_{k=1}^K\sum_{j=1}^K|\xi_{k,j}-t_k|^2$ to the objective function.
The problem \eqref{problemp} is transformed into
\begin{align}
\mathcal{P}_t:\mathop{\mathrm{min}}_{\substack{\{t_k,\xi_{k,j}\}}}
\quad&\rho\sum_{k=1}^K\sum_{j=1}^K|\xi_{k,j}-t_j|^2
+
\mathop{\mathrm{max}}_{\forall k}~\sum_{j=1}^K\Big|r_k^{(n+1)}\mathbf{g}_k^H\mathbf{F}^{(n+1)}\mathbf{h}_j\xi_{k,j}-\alpha_j\Big|^2 \nonumber\\
\quad\quad\quad\mathrm{s. t.}\quad
&|t_k|^2\leq P_0,\quad k=1,\cdots,K,
 \label{Pk}
\end{align}
and variables $t_k$ and $\xi_{k,j}$ can be optimized iteratively.
In particular, starting from $t_k^{(0)}=\xi_{k,j}^{(0)}=t_k^{(n)}$, the solutions of $\xi_{k,j}$ and $t_k$ at the $q$-th iteration are given by
\begin{align}
\xi_{k,j}^{(q+1)} =&\mathop{\mathrm{arg~min}}_{\substack{\xi_{k,j}}}~
\Big|r_k^{(n+1)}\mathbf{g}_k^H\mathbf{F}^{(n+1)}\mathbf{h}_j\xi_{k,j}-\alpha_j\Big|^2
+\rho|\xi_{k,j}-t_j^{(q)}|^2,\quad\forall k,j, \label{xikj}
\\
t_{j}^{(q+1)} = &\mathop{\mathrm{arg~min}}_{\substack{|t_j|^2\leq P_0}}~
\rho\sum_{k=1}^K|\xi_{k,j}^{(q+1)}-t_j|^2,\quad \forall j. \label{pj}
\end{align}
Problem \eqref{xikj} is a least squares problem and \eqref{pj} is a quadratic problem with only one constraint.
They can be solved optimally based on Karush-Kuhn-Tucker (KKT) conditions and the solutions are given by
\begin{align}
\xi_{k,j}^{(q+1)} &= \frac{\mathrm{conj}(r_k^{(n+1)}\mathbf{g}_k^H\mathbf{F}\mathbf{h}_j)\alpha_j+\rho t_j^{(q)}}{|r_k^{(n+1)}\mathbf{g}_k^H\mathbf{F}^{(n+1)}\mathbf{h}_j|^2+\rho},\quad\forall k,j, \label{xikm+1}
\\
t_{j}^{(q+1)} &= \sqrt{\mathrm{min}\left(P_0, 1\right)}\ \frac{\frac{1}{K}\sum_{k=1}^K\xi_{k,j}^{(q+1)}}{\left|\frac{1}{K}\sum_{j=1}^K\xi_{k,j}^{(q+1)}\right|}, \quad\forall j.\label{pkm+1}
\end{align}
The iterative procedure stops until $q$ reaches the maximum iteration number $q=Q_{\mathrm{max}}$.

\subsection{Summary and Complexity Analysis of PAM}

In summary, the complete PAM algorithm for solving problem $\mathcal{P}$ with a fully-connected structure consists of two layers of iterations.
Let $N_{\rm{max}}$ and $M_{\rm{max}}$ denote the maximum number of iterations for outer and inner layers, respectively.
In the outer layer, the PAM optimizes $\mathbf{F}$, $\{r_k\}$ and $\{t_k\}$ alternatively in each of the $N_{\rm{max}}$ iterations.
In the inner layer, $\mathbf{F}$ is obtained via computing \eqref{ukm}--\eqref{zm} for $M_{\rm{max}}$ iterations, $\{r_k\}$ is obtained via computing \eqref{qk}, and $\{t_k\}$ is obtained via computing \eqref{xikm+1}--\eqref{pkm+1} for $Q_{\rm{max}}$ iterations.
The computational complexities for these equations are $\mathcal{O}(KN^2)$, $\mathcal{O}(KN^2)$, $\mathcal{O}(N^2)$, $\mathcal{O}(KN^2)$, $\mathcal{O}(K^2)$, $\mathcal{O}(K^2)$ for \eqref{ukm}, \eqref{fm}, \eqref{zm}, \eqref{qk}, \eqref{xikm+1}, \eqref{pkm+1}, respectively.
Since the computation is dominant by $\mathcal{O}(KN^2)$, the total computational complexity of PAM is $\mathcal{O}(N_{\rm{max}}Q_{\rm{max}}KN^2)$.

\section{Accelerated Gradient Projection for Partially-Connected UMAirComp}

In practice, it is possible that there are a large number of antennas at the edge server.
In such a case, a smaller number of phase shifters than $N^2$ at the edge server is desirable.
To this end, this section proposes an accelerated gradient projection method for partially-connected UMAirComp, which only needs $2N$ phase shifters.

With a partially-connected structure as illustrated in Fig.~2b, the feasible set $\mathcal{F}$ equals the second line of \eqref{setF}.
Since the rank of $\mathbf{F}$ is $1$, we can apply rank-one decomposition on $\mathbf{F}$ which yields $\mathbf{F}=\mathbf{v}\mathbf{w}^H$.
Then, we adopt the following approximations to $\mathcal{P}$: 1) Set $r_k=1/(\mathbf{g}_k^H\mathbf{v})$ and $t_k=1/(\mathbf{h}_k^H\mathbf{w})$; 2) Relax
$\{|F_{l,l'}|=1|\forall l,l'\}$ into $\|\mathbf{F}\|_2^2\leq N^2$.
After the above steps, problem $\mathcal{P}$ is simplified into a bilevel form:
\begin{subequations}
\begin{align}
\mathcal{Q}:
\mathop{\mathrm{min}}_{\substack{\mathbf{v}}}
\quad&
\gamma\sigma^2_b\|\mathbf{w}\|_2^2+\mathop{\mathrm{max}}_{k=1,\cdots,K}\,\frac{\sigma_k^2}{|\mathbf{g}_k^H\mathbf{v}|^2}
\\
 \quad\mathrm{s. t.}\quad & \frac{N^2}{\|\mathbf{v}\|_2^2}\geq
\mathop{\mathrm{min}}_{\mathbf{w}}\,\left\{\|\mathbf{w}\|_2^2:|\mathbf{w}^H\mathbf{h}_k|^2\geq\frac{\alpha_k^2}{P_0},\quad\forall k\right\}. \label{Q1}
\end{align}
\end{subequations}
Note that the adopted approximations above $\mathcal{Q}$ would lead to performance loss.
However, the solution obtained by solving $\mathcal{Q}$ is asymptotically optimal to $\mathcal{P}$ as $\sigma_k^2,\sigma_b^2\rightarrow 0$.
This is because $\gamma\sum_{j=1}^K|r_k\mathbf{g}_k^H\mathbf{F}\mathbf{h}_jt_j-\alpha_j|^2=0$ with $\mathbf{F}=\mathbf{v}\mathbf{w}^H$, $r_k=1/(\mathbf{g}_k^H\mathbf{v})$, and $t_k=\alpha_k/(\mathbf{h}_k^H\mathbf{w})$.
Consequently, as noise powers $\sigma^2_b,\sigma^2_k\rightarrow 0$, the objective function of $\mathcal{P}$ with $\{\mathbf{F}=\mathbf{v}\mathbf{w}^H,r_k=1/(\mathbf{g}_k^H\mathbf{v}), t_k=\alpha_k/(\mathbf{h}_k^H\mathbf{w})\}$ becomes zero, meaning that this solution is optimal to $\mathcal{P}$ in the asymptotic case.

Since the right hand side of the constraint in \eqref{Q1} is a quadratic optimization problem, it can be solved by the accelerated random coordinate descent method with a complexity of $\mathcal{O}(KN)$ \cite{arcd}.
Denoting the solution of $\mathbf{w}$ as $\mathbf{w}=\mathbf{w}^\diamond$, problem $\mathcal{Q}$ is reduced to
\begin{align}
\mathcal{Q}_1:\mathop{\mathrm{min}}_{\substack{\mathbf{v}}}
\quad&\mathop{\mathrm{max}}_{k=1,\cdots,K}\,\frac{\sigma_k^2}{|\mathbf{g}_k^H\mathbf{v}|^2}  \quad \mathrm{s. t.}\quad \|\mathbf{v}\|_2^2\leq \beta,
\end{align}
where $\beta=\frac{N^2}{\|\mathbf{w}^\diamond\|_2^2}$ and we have removed the term $\gamma\sigma^2_b\|\mathbf{w}^\diamond\|_2^2$ in the objective function (as it is a constant now).
In the following, we propose an efficient fixed point method for solving problem $\mathcal{Q}_1$.

\subsection{Fixed-Point Iteration}

Solving problem $\mathcal{Q}_1$ is challenging due to 1) the large dimension of variables, and 2) the large number of elements inside the maximum operator.
To this end, we first re-write $\mathcal{Q}_1$ as a bilevel problem
\begin{align}
\mathop{\mathrm{min}}_{\substack{\|\mathbf{v}\|_2^2\leq \beta}}
\,\mathop{\mathrm{max}}_{k=1,\cdots,K}\,\frac{\sigma_k^2}{|\mathbf{g}_k^H\mathbf{v}|^2}
&\Longleftrightarrow
\mathop{\mathrm{min}}_{\substack{\|\mathbf{v}\|_2^2\leq \beta}}
\,\mathop{\mathrm{max}}_{k=1,\cdots,K}\,-\frac{|\mathbf{g}_k^H\mathbf{v}|^2}{\sigma_k^2}
\nonumber\\
&
\Longleftrightarrow
\mathop{\mathrm{min}}_{\substack{\|\mathbf{v}\|_2^2\leq \beta}}
\,\mathop{\mathrm{max}}_{\mathbf{b}\in\Delta}\,\underbrace{-\mathop{\sum}_{k=1}^K\frac{b_k|\mathbf{g}_k^H\mathbf{v}|^2}{\sigma_k^2}}_{:=h(\mathbf{v},\mathbf{b})}, \label{P2'}
\end{align}
where $\Delta=\{\mathbf{b}|\mathbf{b}\succeq\mathbf{0},\,\mathbf{1}^T\mathbf{b}=1\}$ and the last step is used to smooth the objective function via introducing one more auxiliary optimization variable $\mathbf{b}$ \cite{minimax}.
Then, we have the following conclusion on the Karush-Kuhn-Tucker solution to \eqref{P2'}, which also holds for $\mathcal{Q}_1$.
\begin{lemma}
Let
\begin{align}
&U(\mathbf{v}')=\frac{\sqrt{\beta}\mathbf{C}(\mathbf{v}')\mathop{\mathrm{arg~min}}_{\mathbf{b}\in\Delta}\,
\Phi\left(\mathbf{v}',\mathbf{b}\right)}{\|\mathbf{C}(\mathbf{v}')\mathop{\mathrm{arg~min}}_{\mathbf{b}\in\Delta}\,
\Phi\left(\mathbf{v}', \mathbf{b}\right)\|_2}, \label{mdg2}
\end{align}
where
\begin{align}
\Phi\left(\mathbf{v}',\mathbf{b}\right)&=
2\sqrt{\beta}\|\mathbf{C}(\mathbf{v}')\mathbf{b}\|_2-\left[\mathbf{q}(\mathbf{v}')\right]^T\mathbf{b}, \label{Phi}
\\
\mathbf{C}(\mathbf{v}')&=\left[\frac{\mathbf{g}_{1}\mathbf{g}^H_{1}\mathbf{v}'}{\sigma_1^2},\cdots,\frac{\mathbf{g}_{K}\mathbf{g}^H_{K}\mathbf{v}'}{\sigma_K^2}\right]\in\mathbb{C}^{N\times K}, \label{Cn}
\\
\mathbf{q}(\mathbf{v}')&=\left[\frac{|\mathbf{g}_{1}^H\mathbf{v}'|^2}{\sigma_1^2},\cdots,\frac{|\mathbf{g}_{K}^H\mathbf{v}'|^2}{\sigma_K^2}\right]^T
\in\mathbb{C}^{K\times 1}. \label{dn}
\end{align}
Then with any feasible $\mathbf{v}^{(0)}$ and fixed point iteration $\mathbf{v}^{(n+1)}\leftarrow U(\mathbf{v}^{(n)})$, every limit point $\mathbf{v}^\diamond$ of the sequence $\{\mathbf{v}^{(0)},\mathbf{v}^{(1)},...\}$ is a Karush-Kuhn-Tucker solution to problem \eqref{P2'}.
\end{lemma}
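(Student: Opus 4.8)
The plan is to recognize the fixed-point map $U$ as exactly one iteration of a majorization--minimization (MM) scheme applied to the smoothed minimax problem \eqref{P2'}, and then to combine standard MM descent with a fixed-point characterization of the KKT conditions. First I would linearize the convex quadratics $|\mathbf{g}_k^H\mathbf{v}|^2$ about the current iterate $\mathbf{v}'$: since $\mathbf{v}^H\mathbf{g}_k\mathbf{g}_k^H\mathbf{v}$ is convex, $|\mathbf{g}_k^H\mathbf{v}|^2\geq 2\mathrm{Re}(\mathbf{v}^H\mathbf{g}_k\mathbf{g}_k^H\mathbf{v}')-|\mathbf{g}_k^H\mathbf{v}'|^2$, and summing against $b_k/\sigma_k^2$ gives the global \emph{upper} surrogate
\begin{align}
\widetilde{h}(\mathbf{v},\mathbf{b};\mathbf{v}')=-2\mathrm{Re}\big(\mathbf{v}^H\mathbf{C}(\mathbf{v}')\mathbf{b}\big)+\mathbf{q}(\mathbf{v}')^T\mathbf{b}\geq h(\mathbf{v},\mathbf{b}). \nonumber
\end{align}
A direct check shows the surrogate is tight in the strong sense $\widetilde{h}(\mathbf{v}',\mathbf{b};\mathbf{v}')=h(\mathbf{v}',\mathbf{b})$ for \emph{every} $\mathbf{b}$. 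Hence $\widetilde{f}(\mathbf{v};\mathbf{v}'):=\max_{\mathbf{b}\in\Delta}\widetilde{h}(\mathbf{v},\mathbf{b};\mathbf{v}')$ majorizes $f(\mathbf{v}):=\max_{\mathbf{b}\in\Delta}h(\mathbf{v},\mathbf{b})$ with equality at $\mathbf{v}'$.

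Next I would solve the surrogate subproblem $\min_{\|\mathbf{v}\|_2^2\leq\beta}\widetilde{f}(\mathbf{v};\mathbf{v}')$ in closed form. Because $\widetilde{h}$ is affine in $\mathbf{v}$ and affine in $\mathbf{b}$ over the compact convex sets $\mathcal{B}=\{\|\mathbf{v}\|_2^2\leq\beta\}$ and $\Delta$, Sion's minimax theorem lets me swap $\min_{\mathbf{v}}$ and $\max_{\mathbf{b}}$. For fixed $\mathbf{b}$, minimizing the linear term $-2\mathrm{Re}(\mathbf{v}^H\mathbf{C}(\mathbf{v}')\mathbf{b})$ over the ball is a Cauchy--Schwarz problem attained at $\mathbf{v}=\sqrt{\beta}\,\mathbf{C}(\mathbf{v}')\mathbf{b}/\|\mathbf{C}(\mathbf{v}')\mathbf{b}\|_2$ with value $-2\sqrt{\beta}\|\mathbf{C}(\mathbf{v}')\mathbf{b}\|_2+\mathbf{q}(\mathbf{v}')^T\mathbf{b}$. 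The outer maximization over $\mathbf{b}$ is then precisely $-\min_{\mathbf{b}\in\Delta}\Phi(\mathbf{v}',\mathbf{b})$, and substituting the minimizer $\mathbf{b}^\star=\mathop{\mathrm{arg~min}}_{\mathbf{b}\in\Delta}\Phi(\mathbf{v}',\mathbf{b})$ back into the closed-form $\mathbf{v}$ reproduces exactly $U(\mathbf{v}')$. This identifies $\mathbf{v}^{(n+1)}=U(\mathbf{v}^{(n)})$ as the exact minimizer of the majorizer constructed at $\mathbf{v}^{(n)}$.

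With the MM interpretation established, convergence follows the usual sandwich $f(\mathbf{v}^{(n+1)})\leq\widetilde{f}(\mathbf{v}^{(n+1)};\mathbf{v}^{(n)})\leq\widetilde{f}(\mathbf{v}^{(n)};\mathbf{v}^{(n)})=f(\mathbf{v}^{(n)})$, so $\{f(\mathbf{v}^{(n)})\}$ is nonincreasing; since every iterate lies on the compact sphere of radius $\sqrt{\beta}$ (the map $U$ always returns norm $\sqrt{\beta}$) and $f$ is continuous, the sequence converges and limit points exist. I would then argue that any limit point $\mathbf{v}^\diamond$ satisfies $\mathbf{v}^\diamond=U(\mathbf{v}^\diamond)$ by passing to the limit along a convergent subsequence, using continuity of the value function $\mathbf{v}'\mapsto\min_{\mathbf{b}\in\Delta}\Phi(\mathbf{v}',\mathbf{b})$ (Berge's maximum theorem) and the vanishing majorization gap. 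I expect this limit-point-to-fixed-point step to be the main obstacle, because $\mathop{\mathrm{arg~min}}_{\mathbf{b}\in\Delta}\Phi(\mathbf{v}',\mathbf{b})$ may be set-valued, so $U$ need not be single-valued or continuous; the argument must therefore run through the \emph{value} function rather than the argmin map, and must rule out the degenerate case $\mathbf{C}(\mathbf{v}')\mathbf{b}^\star=\mathbf{0}$ that would make the normalization in $U$ ill-defined.

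Finally I would verify that a fixed point is a KKT point of \eqref{P2'}. The stationarity condition for the ball-constrained minimax reads $\mathbf{C}(\mathbf{v}^\diamond)\mathbf{b}^\diamond=\mu\,\mathbf{v}^\diamond$ for some $\mu\geq0$, with active constraint $\|\mathbf{v}^\diamond\|_2^2=\beta$ and inner optimality $\mathbf{b}^\diamond\in\mathop{\mathrm{arg~max}}_{\mathbf{b}\in\Delta}h(\mathbf{v}^\diamond,\mathbf{b})$. The fixed-point identity $\mathbf{v}^\diamond=\sqrt{\beta}\,\mathbf{C}(\mathbf{v}^\diamond)\mathbf{b}^\diamond/\|\mathbf{C}(\mathbf{v}^\diamond)\mathbf{b}^\diamond\|_2$ delivers exactly this, with $\mu=\|\mathbf{C}(\mathbf{v}^\diamond)\mathbf{b}^\diamond\|_2/\sqrt{\beta}\geq0$ and the norm constraint automatically active; the inner optimality of $\mathbf{b}^\diamond$ follows from the tightness identity $\widetilde{h}(\mathbf{v}^\diamond,\cdot;\mathbf{v}^\diamond)=h(\mathbf{v}^\diamond,\cdot)$ together with the minimax duality of the surrogate, which forces the dual-optimal $\mathbf{b}^\diamond$ to maximize $h(\mathbf{v}^\diamond,\cdot)$. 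Collecting these facts establishes that every limit point $\mathbf{v}^\diamond$ is a KKT solution, completing the proof.
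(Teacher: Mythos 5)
Your proposal is correct and follows essentially the same route as the paper's Appendix C: your linearized upper surrogate $\widetilde{h}(\mathbf{v},\mathbf{b};\mathbf{v}')=-2\mathrm{Re}(\mathbf{v}^H\mathbf{C}(\mathbf{v}')\mathbf{b})+\mathbf{q}(\mathbf{v}')^T\mathbf{b}$ is algebraically identical to the paper's surrogate $g(\mathbf{v},\mathbf{v}')$ (obtained there by adding the nonnegative quadratic $\sum_k b_k(\mathbf{v}-\mathbf{v}')^H\mathbf{g}_k\mathbf{g}_k^H(\mathbf{v}-\mathbf{v}')/\sigma_k^2$), and the Sion swap plus closed-form ball minimization identifying $U(\mathbf{v}')$ is exactly the paper's argument. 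The only difference is that you spell out the descent chain, the limit-point-to-fixed-point step, and the KKT verification explicitly (correctly flagging the set-valued argmin and the degenerate $\mathbf{C}(\mathbf{v}')\mathbf{b}^\star=\mathbf{0}$ case), whereas the paper delegates all of that to the cited majorization--minimization convergence theory.
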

\begin{proof}
It can be proved that $U(\mathbf{v}')$ is the optimal solution to $\mathop{\mathrm{min}}_{\|\mathbf{v}\|_2^2\leq \beta}\mathop{\mathrm{max}}_{\mathbf{b}\in\Delta}\,g(\mathbf{v},\mathbf{v}',\mathbf{b})$, where $g(\mathbf{v},\mathbf{v}',\mathbf{b})$ is a surrogate function satisfying $g(\mathbf{v},\mathbf{v}',\mathbf{b})\geq h(\mathbf{v},\mathbf{b})$, $g(\mathbf{v}',\mathbf{v}',\mathbf{b})=h(\mathbf{v}',\mathbf{b})$, and $\nabla g(\mathbf{v}',\mathbf{v}',\mathbf{b})\geq \nabla h(\mathbf{v}',\mathbf{b})$.
According to \cite{mm} and the properties of $g(\mathbf{v},\mathbf{v}',\mathbf{b})$, every limit point of the sequence $(\mathbf{v}^{(0)},\mathbf{v}^{(1)},\cdots)$ generated by $\mathbf{v}^{(n+1)}\leftarrow U(\mathbf{v}^{(n)})$ and a feasible $\mathbf{v}^{(0)}$ is the KKT solution to \eqref{P2'}.
More details can be found in Appendix C.
\end{proof}

Although Lemma 1 reveals the solution structure of \eqref{P2'}, computation of $U(\mathbf{v}')$ is not straightforward as it involves another optimization problem of $\mathbf{b}$, which should also be solved with low computation costs.
In the following, we will solve the optimization problem of $\mathbf{b}$ via the methods of smoothing and acceleration.

\subsection{Optimization of $\mathbf{b}$ via Smoothing and Acceleration}

In order to compute $U(\mathbf{v}')$, a necessary step is to find the optimal vector
\begin{align}
\mathbf{b}^*=\mathop{\mathrm{arg~min}}_{\mathbf{b}\in\Delta}\Phi\left(\mathbf{b}\right), \label{mdg1}
\end{align}
where we have omitted the symbol $\mathbf{v}'$ in \eqref{Phi} since $\mathbf{v}'$ is a known and fixed vector in each iteration.
Notice that the gradient of the objective
\begin{align}\label{gradient1}
\nabla_{\mathbf{b}}\Phi(\mathbf{b})=
&
\frac{2\sqrt{\beta}\,\mathrm{Re}\left(\mathbf{C}^H\mathbf{C}\mathbf{b}\right)}{\|\mathbf{C}\mathbf{b}\|_2}
-\mathbf{q}
\end{align}
is unbounded when $\|\mathbf{C}\mathbf{b}\|_2\rightarrow0$, which happens if $\mathbf{b}\in\mathrm{Null}(\mathbf{C})$.
Therefore, it is nontrivial to apply first-order method to problem \eqref{mdg1}.
To avoid the unbounded gradients, we adopt the smoothing technique \cite{smoothing} to replace $\Phi(\mathbf{b})$ in \eqref{mdg1} with
\begin{align}
\Xi(\mathbf{b})
=
&
2\sqrt{\beta}\times
\sqrt{\phi^2+
\big|\big|\mathbf{C}\mathbf{b}\big|\big|_2^2}
-\mathbf{q}^T\mathbf{b}, \label{Xi}
\end{align}
where the tuning parameter $\phi\geq 0$ such that $\Xi(\mathbf{b})=\Phi(\mathbf{b})$ for $\phi=0$.
Then problem \eqref{mdg1} can be approximated by
\begin{align}
&\mathcal{Q}_2:\mathop{\mathrm{min}}_{\substack{\mathbf{b}\in \Delta}}
~\Xi\left(\mathbf{b}\right).
\end{align}
In the following, we first elaborate the optimal solution of problem $\mathcal{Q}_2$, and then establish the relation between the solutions of problems $\mathcal{Q}_2$ and \eqref{mdg1}.
First of all, we have the following lemma on the objective of problem $\mathcal{Q}_2$.

\begin{lemma}
$\Xi(\mathbf{b})$ is Lipschitz smooth for $\mathbf{b}\in\Delta$, with the Lipschitz constant of gradient
\begin{align}\label{Lip1}
L_{\Xi}(\phi)=\frac{2\sqrt{\beta}\,\lambda_{\mathrm{max}}\left[\mathrm{Re}\left(\mathbf{C}^H\mathbf{C}\right)\right]}
{\sqrt{\phi^2+\lambda_{\mathrm{min}}\left(\mathbf{C}^H\mathbf{C}\right)/K}}.
\end{align}
\end{lemma}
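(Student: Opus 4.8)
The plan is to bound the spectral norm of the Hessian $\nabla^2\Xi(\mathbf{b})$ uniformly over the simplex $\Delta$ and then pass from this Hessian bound to Lipschitz continuity of $\nabla\Xi$ using convexity. First I would exploit that every $\mathbf{b}\in\Delta$ is real, so that the Hermitian quadratic form collapses to a real one: $\|\mathbf{C}\mathbf{b}\|_2^2=\mathbf{b}^H\mathbf{C}^H\mathbf{C}\mathbf{b}=\mathbf{b}^T\mathbf{M}\mathbf{b}$ with $\mathbf{M}:=\mathrm{Re}(\mathbf{C}^H\mathbf{C})$ a real symmetric positive semidefinite matrix. Writing $s(\mathbf{b})=\sqrt{\phi^2+\mathbf{b}^T\mathbf{M}\mathbf{b}}$, the smoothed objective is $\Xi(\mathbf{b})=2\sqrt{\beta}\,s(\mathbf{b})-\mathbf{q}^T\mathbf{b}$, whose gradient is exactly the one recorded (up to the $\phi$ smoothing) in \eqref{gradient1}, namely $\nabla\Xi=2\sqrt{\beta}\,\mathbf{M}\mathbf{b}/s(\mathbf{b})-\mathbf{q}$. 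Differentiating once more gives
\begin{align}
\nabla^2\Xi(\mathbf{b})=2\sqrt{\beta}\left(\frac{\mathbf{M}}{s(\mathbf{b})}-\frac{\mathbf{M}\mathbf{b}\mathbf{b}^T\mathbf{M}}{s(\mathbf{b})^3}\right). \nonumber
\end{align}

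Next I would bound the largest eigenvalue of this Hessian. Since the correction term $\mathbf{M}\mathbf{b}\mathbf{b}^T\mathbf{M}=(\mathbf{M}\mathbf{b})(\mathbf{M}\mathbf{b})^T\succeq\mathbf{0}$ is positive semidefinite, dropping it yields the clean upper bound $\nabla^2\Xi(\mathbf{b})\preceq 2\sqrt{\beta}\,\mathbf{M}/s(\mathbf{b})$, so that $\lambda_{\mathrm{max}}[\nabla^2\Xi(\mathbf{b})]\leq 2\sqrt{\beta}\,\lambda_{\mathrm{max}}(\mathbf{M})/s(\mathbf{b})$; this produces the numerator of $L_\Xi(\phi)$. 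It then remains to lower bound the denominator $s(\mathbf{b})$ uniformly on $\Delta$. Here I would chain two inequalities: $\mathbf{b}^T\mathbf{M}\mathbf{b}=\mathbf{b}^H\mathbf{C}^H\mathbf{C}\mathbf{b}\geq\lambda_{\mathrm{min}}(\mathbf{C}^H\mathbf{C})\|\mathbf{b}\|_2^2$, and the simplex geometry bound $\|\mathbf{b}\|_2^2\geq 1/K$, which follows from Cauchy--Schwarz applied to $1=\mathbf{1}^T\mathbf{b}\leq\sqrt{K}\,\|\mathbf{b}\|_2$. Combining these gives $s(\mathbf{b})\geq\sqrt{\phi^2+\lambda_{\mathrm{min}}(\mathbf{C}^H\mathbf{C})/K}$ for all $\mathbf{b}\in\Delta$, which is precisely the denominator of \eqref{Lip1}.

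Finally, since $\Xi$ is convex (it is the sum of the norm $\sqrt{\phi^2+\|\mathbf{C}\mathbf{b}\|_2^2}$ of an affine map and a linear term), its Hessian is positive semidefinite, so the two bounds combine into $\mathbf{0}\preceq\nabla^2\Xi(\mathbf{b})\preceq L_\Xi(\phi)\mathbf{I}$ throughout $\Delta$. Because $\Delta$ is convex, integrating the Hessian along the segment joining any $\mathbf{b}_1,\mathbf{b}_2\in\Delta$ then yields $\|\nabla\Xi(\mathbf{b}_1)-\nabla\Xi(\mathbf{b}_2)\|_2\leq L_\Xi(\phi)\|\mathbf{b}_1-\mathbf{b}_2\|_2$, which is the claimed Lipschitz smoothness. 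The main obstacle is obtaining a bound that stays finite and uniform over all of $\Delta$: the rank-one correction term could in principle enlarge $\lambda_{\mathrm{max}}$, but its sign lets me discard it safely, and the genuinely delicate point is lower-bounding $\|\mathbf{C}\mathbf{b}\|_2$ away from zero — this is where the smoothing parameter $\phi$ together with the simplex bound $\|\mathbf{b}\|_2^2\geq 1/K$ jointly prevent the gradient from blowing up as $\mathbf{b}$ approaches $\mathrm{Null}(\mathbf{C})$.
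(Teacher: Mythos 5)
Your proposal is correct and follows essentially the same route as the paper's Appendix D: compute the Hessian of $\Xi$, drop the positive semidefinite rank-one correction $\mathbf{M}\mathbf{b}\mathbf{b}^T\mathbf{M}/s^3$, bound the remaining term by $\lambda_{\mathrm{max}}[\mathrm{Re}(\mathbf{C}^H\mathbf{C})]$, and lower-bound the denominator via $\|\mathbf{C}\mathbf{b}\|_2^2\geq\lambda_{\mathrm{min}}(\mathbf{C}^H\mathbf{C})\|\mathbf{b}\|_2^2$ together with $\|\mathbf{b}\|_2^2\geq 1/K$ on the simplex. The only cosmetic difference is that you pass from the Hessian bound to Lipschitz continuity by integrating along a segment, whereas the paper invokes \cite[Lemma 1.2.2]{yurii}; these are equivalent, and your explicit real-symmetric rewriting $\mathbf{M}=\mathrm{Re}(\mathbf{C}^H\mathbf{C})$ is in fact slightly cleaner than the paper's displayed Hessian.
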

\begin{proof}
The proof is based on bounding the eigenvalue of the Hessian matrix of $\Xi(\mathbf{b})$.
More details can be found in Appendix D.
\end{proof}

Lemma 2 shows that $\mathcal{Q}_2$ is a Lipschitz smooth problem.
As a result, the acceleration method \cite{acceleration1,acceleration3} can be adopted to optimally solve $\mathcal{Q}_2$ iteratively.
The algorithm is summarized in Theorem 3.

\begin{theorem}
Let $\mathbf{b}(0)\in\Delta$ and
\begin{align}\label{accelerated}
&\mathbf{b}(m+1)
=\Pi_{\Delta}\Big[
\bm{\rho}(m)-\frac{1}{L_{\Xi}(\phi)}\nabla_{\mathbf{b}}\Xi(\mathbf{b})\Big|_{\mathbf{b}=\bm{\rho}(m)}\Big],
\end{align}
where $m$ is the iteration index, $\Pi_{\Delta}$ is the projection onto set $\Delta$, $L_{\Xi}(\phi)$ is defined in \eqref{Lip1}, and
\begin{align}\label{gradient2}
\nabla_{\mathbf{b}}\Xi(\mathbf{b})=
&
\mathbf{q}-2\sqrt{\beta}\times\frac{\mathrm{Re}\left(\mathbf{C}^H\mathbf{C}\mathbf{b}\right)}{\sqrt{\phi^2+\|\mathbf{C}\mathbf{b}\|_2^2}},
\\
\bm{\rho}(m)=&\mathbf{b}(m)+\frac{c(m-1)-1}{c(m)}\left(\mathbf{b}(m)-\mathbf{b}(m-1)\right), \label{rho}
\\
c(m)=&\frac{1}{2}\left(1+\sqrt{1+4\left(c(m-1)\right)^2}\right),\quad c(0)=1. \label{cm}
\end{align}
Then the sequence computed from \eqref{accelerated}--\eqref{cm} converges to the optimal solution of $\mathcal{Q}_2$ with an iteration complexity
$\mathcal{O}\left(\sqrt{L_{\Xi}(\phi)/\epsilon}\right)$, where $\epsilon$ is the target accuracy.
\end{theorem}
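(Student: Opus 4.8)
The plan is to recognize problem $\mathcal{Q}_2$ as the minimization of a convex, Lipschitz-smooth function over a compact convex set, so that the iteration \eqref{accelerated}--\eqref{cm} is exactly an instance of Nesterov's accelerated projected-gradient scheme and the known convergence guarantee of \cite{acceleration1,acceleration3} applies directly. First I would establish convexity of $\Xi$. Writing the nonlinear term as $\sqrt{\phi^2+\|\mathbf{C}\mathbf{b}\|_2^2}=\|(\phi,\mathbf{C}\mathbf{b})\|_2$ exhibits it as the Euclidean norm of an affine function of $\mathbf{b}$, hence convex; since $-\mathbf{q}^T\mathbf{b}$ is linear, $\Xi$ is convex on the affine hull of $\Delta$. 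The feasible set $\Delta=\{\mathbf{b}\succeq\mathbf{0},\,\mathbf{1}^T\mathbf{b}=1\}$ is convex and compact, so a minimizer $\mathbf{b}^*$ exists. Combined with Lemma 2, which supplies the Lipschitz constant $L_{\Xi}(\phi)$ of $\nabla_{\mathbf{b}}\Xi$ in \eqref{Lip1}, the two standing hypotheses of the accelerated method---convexity and global $L_{\Xi}(\phi)$-smoothness---are in place.

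Next I would verify that \eqref{accelerated}--\eqref{cm} matches the accelerated template term by term: the gradient step in \eqref{accelerated} uses step size $1/L_{\Xi}(\phi)$, the reciprocal of the smoothness constant; the update is followed by the Euclidean projection $\Pi_{\Delta}$ onto the constraint set; the extrapolation point $\bm{\rho}(m)$ in \eqref{rho} carries momentum weight $(c(m-1)-1)/c(m)$; and the scalar sequence $c(m)$ in \eqref{cm} obeys exactly the recursion $c(m)=\tfrac{1}{2}(1+\sqrt{1+4c(m-1)^2})$ with $c(0)=1$, so that the first extrapolation degenerates to zero momentum, as required. This identification lets me invoke the standard estimate-sequence (potential-function) argument, which yields the optimality-gap bound
\begin{align}
\Xi(\mathbf{b}(m))-\Xi(\mathbf{b}^*)\leq
\frac{2\,L_{\Xi}(\phi)\,\|\mathbf{b}(0)-\mathbf{b}^*\|_2^2}{(m+1)^2}.
\end{align}

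Finally, setting the right-hand side equal to the target accuracy $\epsilon$ and solving for $m$ gives $m+1\geq \sqrt{2L_{\Xi}(\phi)}\,\|\mathbf{b}(0)-\mathbf{b}^*\|_2/\sqrt{\epsilon}$; since $\Delta$ is bounded, $\|\mathbf{b}(0)-\mathbf{b}^*\|_2$ is a constant of order one, so the iteration complexity is $\mathcal{O}(\sqrt{L_{\Xi}(\phi)/\epsilon})$, as claimed. I expect the main obstacle to be certifying that the stated recursion is a \emph{genuine} instance of the accelerated projected-gradient method---in particular, that the fixed step size tied to the global Lipschitz constant $L_{\Xi}(\phi)$ (rather than a backtracking line search) and the momentum weights meet the referenced theorem's hypotheses verbatim, and that the projection $\Pi_{\Delta}$ onto the simplex is the exact Euclidean projection the scheme presumes. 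Once this identification is established, the $\mathcal{O}(1/m^2)$ rate and hence the complexity follow from the cited results rather than from any new convergence argument.
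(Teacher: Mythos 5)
Your proposal is correct and follows essentially the same route as the paper, which proves Theorem 3 simply by invoking \cite[Theorem 4.4]{acceleration3}: you identify the iteration as FISTA applied to a convex, $L_{\Xi}(\phi)$-smooth objective over the compact simplex and read off the $\mathcal{O}(1/m^2)$ rate. Your verification of convexity via the norm-of-affine-map composition and the term-by-term matching of the momentum recursion are exactly the details the paper leaves implicit.
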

\begin{proof}
It can be proved by following a similar approach in \cite[Theorem 4.4]{acceleration3}.
\end{proof}
Notice that the iteration complexity touches the lower bound derived in \cite[Theorem 2.1.6]{yurii}.
The computation of the projection $\Pi_{\Delta}(\mathbf{u})$ given the input vector $\mathbf{u}$ is summarized in Lemma 3.
\begin{lemma}
Let $\mathbf{u}'=\mathrm{sort}(\mathbf{u})$, where the function $\mathrm{sort}$ permutes the elements of $\mathbf{u}$ in a descent order such that $u_1'\geq \cdots \geq u_K'$ and
$
\delta=\mathop{\mathrm{max}}_{x\in\{1,\cdots,K\}}~\left\{x:
\frac{\sum_{l=1}^xu_l'-1}{x}<u_x'
\right\}$.
Then
\begin{align}
&
\Pi_{\Delta}(\mathbf{u})=
\left(\mathbf{u}-\frac{\mathop{\sum}_{l=1}^\delta u_l'-1}{\delta}\right)^+.
\end{align}
\end{lemma}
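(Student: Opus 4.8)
The plan is to recognize $\Pi_{\Delta}(\mathbf{u})$ as the unique minimizer of the strongly convex program $\min_{\mathbf{b}\in\Delta}\tfrac{1}{2}\|\mathbf{b}-\mathbf{u}\|_2^2$ and to extract its structure from the Karush--Kuhn--Tucker (KKT) conditions. Existence and uniqueness are immediate since $\Delta$ is nonempty, compact, and convex while the objective is strongly convex. First I would attach a scalar multiplier $\nu$ to the equality constraint $\mathbf{1}^T\mathbf{b}=1$ and nonnegative multipliers $\mu_k$ to the constraints $b_k\geq 0$, giving the stationarity condition $b_k-u_k+\nu-\mu_k=0$ together with complementary slackness $\mu_k b_k=0$ and dual feasibility $\mu_k\geq 0$.

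The second step eliminates the inequality multipliers. Where $b_k>0$, complementary slackness forces $\mu_k=0$ and hence $b_k=u_k-\nu$; where $b_k=0$, dual feasibility yields $u_k-\nu\leq 0$. These two cases collapse into the single expression $b_k=(u_k-\nu)^+$, so the whole solution is parameterized by one scalar $\nu$. That scalar is fixed by the equality constraint $\sum_{k=1}^K (u_k-\nu)^+=1$; since $\nu\mapsto\sum_k (u_k-\nu)^+$ is continuous, nonincreasing, and piecewise linear, decreasing from $+\infty$ to $0$, the correct $\nu$ exists and is unique.

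To obtain the explicit formula I would sort the entries as $u_1'\geq\cdots\geq u_K'$ and note that the support $\{k:b_k>0\}=\{k:u_k>\nu\}$ consists of the largest $\delta$ sorted coordinates for some $\delta$. If the support size is taken to be a trial value $x$, summing the active equations over the top $x$ entries gives the threshold $\nu_x=\tfrac{1}{x}\big(\sum_{l=1}^x u_l'-1\big)$, and this trial is self-consistent precisely when the smallest retained coordinate of the solution stays strictly positive, i.e.\ $u_x'>\nu_x$, which is exactly the inequality $\tfrac{1}{x}\big(\sum_{l=1}^x u_l'-1\big)<u_x'$ in the statement. It then remains to prove that the genuine support size equals the largest feasible $x$, namely $\delta$.

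This last combinatorial identification is the main obstacle, and I would resolve it through a monotonicity property of the thresholds. A direct computation gives the telescoping identity $\nu_{x+1}-\nu_x=\tfrac{1}{x+1}\big(u_{x+1}'-\nu_x\big)$, which also yields $u_{x+1}'-\nu_{x+1}=\tfrac{x}{x+1}\big(u_{x+1}'-\nu_x\big)$. Using $u_{x+1}'\leq u_x'$, these identities show that once the strict inequality $u_x'>\nu_x$ fails it fails for all larger indices, so the feasible set is the prefix $\{1,\dots,\delta\}$; they further force $u_\delta'>\nu_\delta$ and $u_{\delta+1}'\leq\nu_\delta$, since otherwise $\delta+1$ would itself be feasible, contradicting maximality of $\delta$. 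Consequently $\mathbf{b}=(\mathbf{u}-\nu_\delta)^+$ has support exactly $\{1,\dots,\delta\}$, satisfies $\mathbf{1}^T\mathbf{b}=1$, and meets every KKT condition with $\nu=\nu_\delta$; by uniqueness of the projection this is $\Pi_{\Delta}(\mathbf{u})$, establishing the claimed formula with threshold $\tfrac{1}{\delta}\big(\sum_{l=1}^\delta u_l'-1\big)$.
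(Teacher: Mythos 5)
Your proposal is correct and complete. Note that the paper does not actually prove this lemma: it simply defers to Condat's Proposition 2.2 in the cited reference on fast simplex projection. What you have written out is, in essence, the classical KKT derivation that underlies that cited result: uniqueness from strong convexity, the collapse of stationarity and complementary slackness into $b_k=(u_k-\nu)^+$, the determination of $\nu$ from $\mathbf{1}^T\mathbf{b}=1$, and the combinatorial identification of the support size with $\delta$. The one step where such arguments usually go wrong --- showing that the set $\{x: u_x'>\nu_x\}$ is a prefix and that its maximum element is the true support size --- you handle correctly via the telescoping identities $\nu_{x+1}-\nu_x=\tfrac{1}{x+1}(u_{x+1}'-\nu_x)$ and $u_{x+1}'-\nu_{x+1}=\tfrac{x}{x+1}(u_{x+1}'-\nu_x)$, which also take care of ties among sorted entries automatically (maximality of $\delta$ forces $u_{\delta+1}'\leq\nu_\delta<u_\delta'$, so no tie can straddle the cut). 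The only thing your write-up buys beyond the paper is self-containedness; the only thing the paper's citation buys is brevity and, in Condat's case, an accompanying $\mathcal{O}(K)$-expected-time algorithm rather than the $\mathcal{O}(K\log K)$ sort-based evaluation implicit in the lemma's statement.
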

\begin{proof}
Please refer to \cite[Proposition 2.2]{proj}.
\end{proof}

Finally, we have the following conclusion on the relation between the solutions to problems $\mathcal{Q}_2$ and \eqref{mdg1}.

\begin{theorem}
(i) If $\mathrm{Rank}\left([\mathbf{g}_1,\cdots,\mathbf{g}_K]\right)=K$ (thus $L_{\Xi}(0)<+\infty$), the optimal solution to problem $\mathcal{Q}_2$ is optimal to problem \eqref{mdg1} by setting $\phi=0$. (ii) If $\mathrm{Rank}\left([\mathbf{g}_1,\cdots,\mathbf{g}_K]\right)\neq K$, then $L_{\Xi}(0)=+\infty$, and $L_{\Xi}(\phi)<+\infty$ if $\phi>0$. (iii) For all $\mathbf{b}'\in\Delta$ with $\Xi(\mathbf{b}')-\Xi(\mathbf{b}^\diamond)\leq \epsilon$,
\begin{align}
&\Phi(\mathbf{b}')-\Phi(\mathbf{b}^*)\leq 2\sqrt{\beta}\,\phi+\epsilon,
\end{align}
where $\mathbf{b}^\diamond$ and $\mathbf{b}^*$ denote the optimal solutions to $\mathcal{Q}_2$ and \eqref{mdg1}, respectively.
\end{theorem}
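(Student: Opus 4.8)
The plan is to treat the three claims separately, exploiting the single structural fact that $\Xi$ and $\Phi$ differ only through the smoothing term $\phi$ inside the square root, while the finiteness of $L_{\Xi}(\phi)$ in \eqref{Lip1} is governed entirely by $\lambda_{\mathrm{min}}(\mathbf{C}^H\mathbf{C})$. For the rank/eigenvalue parts of (i) and (ii), the first step is to rewrite $\mathbf{C}$ in \eqref{Cn}: since each column satisfies $\mathbf{g}_k\mathbf{g}_k^H\mathbf{v}'/\sigma_k^2=(\mathbf{g}_k^H\mathbf{v}'/\sigma_k^2)\,\mathbf{g}_k$, we have $\mathbf{C}=\mathbf{G}\mathbf{D}$ with $\mathbf{G}=[\mathbf{g}_1,\cdots,\mathbf{g}_K]$ and $\mathbf{D}=\mathrm{diag}(\mathbf{g}_1^H\mathbf{v}'/\sigma_1^2,\cdots,\mathbf{g}_K^H\mathbf{v}'/\sigma_K^2)$. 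Any iterate $\mathbf{v}'$ carried by the algorithm keeps the objective of $\mathcal{Q}_1$ finite, hence $\mathbf{g}_k^H\mathbf{v}'\neq 0$ for every $k$, so $\mathbf{D}$ is invertible and $\mathrm{Rank}(\mathbf{C})=\mathrm{Rank}(\mathbf{G})$. Therefore $\lambda_{\mathrm{min}}(\mathbf{C}^H\mathbf{C})>0$ exactly when $\mathrm{Rank}([\mathbf{g}_1,\cdots,\mathbf{g}_K])=K$, and equals $0$ otherwise. Substituting the two cases into \eqref{Lip1}: under full rank the denominator stays positive even at $\phi=0$, giving $L_{\Xi}(0)<+\infty$; under rank deficiency the denominator at $\phi=0$ vanishes while the numerator remains positive (as $\mathbf{C}\neq\mathbf{0}$), giving $L_{\Xi}(0)=+\infty$, whereas for $\phi>0$ the denominator is at least $\phi$, so $L_{\Xi}(\phi)<+\infty$. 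This settles claim (ii) and the parenthetical in (i). For the optimality statement in (i), I would set $\phi=0$ in \eqref{Xi}, which collapses $\Xi(\mathbf{b})=2\sqrt{\beta}\sqrt{\|\mathbf{C}\mathbf{b}\|_2^2}-\mathbf{q}^T\mathbf{b}=2\sqrt{\beta}\|\mathbf{C}\mathbf{b}\|_2-\mathbf{q}^T\mathbf{b}=\Phi(\mathbf{b})$ in \eqref{Phi}; problems $\mathcal{Q}_2$ and \eqref{mdg1} then share objective and feasible set $\Delta$, so their minimizers coincide.

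For the approximation bound (iii), the crux is the two-sided estimate $\Phi(\mathbf{b})\leq\Xi(\mathbf{b})\leq\Phi(\mathbf{b})+2\sqrt{\beta}\phi$ for every $\mathbf{b}\in\Delta$. Writing $t=\|\mathbf{C}\mathbf{b}\|_2$, the left inequality is just $\sqrt{\phi^2+t^2}\geq t$, and the right follows from subadditivity of the square root, $\sqrt{\phi^2+t^2}\leq\phi+t$. I would then decompose $\Phi(\mathbf{b}')-\Phi(\mathbf{b}^*)=[\Phi(\mathbf{b}')-\Xi(\mathbf{b}')]+[\Xi(\mathbf{b}')-\Xi(\mathbf{b}^\diamond)]+[\Xi(\mathbf{b}^\diamond)-\Phi(\mathbf{b}^*)]$ and bound the three brackets respectively by $0$ (left inequality), by $\epsilon$ (the hypothesis $\Xi(\mathbf{b}')-\Xi(\mathbf{b}^\diamond)\leq\epsilon$), and by $2\sqrt{\beta}\phi$ via $\Xi(\mathbf{b}^\diamond)\leq\Xi(\mathbf{b}^*)\leq\Phi(\mathbf{b}^*)+2\sqrt{\beta}\phi$, where the first step uses optimality of $\mathbf{b}^\diamond$ for $\Xi$ and the second uses the right inequality. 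Summing yields the claimed $\Phi(\mathbf{b}')-\Phi(\mathbf{b}^*)\leq 2\sqrt{\beta}\phi+\epsilon$.

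The main obstacle I expect is the rank bookkeeping in (i)--(ii): the identity $\mathrm{Rank}(\mathbf{C})=\mathrm{Rank}(\mathbf{G})$ hinges on $\mathbf{D}$ being invertible, i.e., $\mathbf{g}_k^H\mathbf{v}'\neq 0$ for all $k$, which must be justified from feasibility of the current iterate rather than assumed outright. Everything else---the collapse at $\phi=0$ and the sandwich-and-chain argument for (iii)---reduces to short algebra once that point is secured.
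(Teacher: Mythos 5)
Your proposal is correct and follows essentially the same route as the paper's Appendix E: reduce (i)--(ii) to the sign of $\lambda_{\mathrm{min}}(\mathbf{C}^H\mathbf{C})$ via $\mathrm{Rank}(\mathbf{C})=\mathrm{Rank}([\mathbf{g}_1,\cdots,\mathbf{g}_K])$, observe that $\phi=0$ makes $\Xi\equiv\Phi$, and prove (iii) by the sandwich $\Phi\leq\Xi\leq\Phi+2\sqrt{\beta}\,\phi$ combined with the optimality of $\mathbf{b}^\diamond$ and the $\epsilon$-suboptimality hypothesis. Your explicit factorization $\mathbf{C}=\mathbf{G}\mathbf{D}$ with the check that $\mathbf{D}$ is invertible is a slightly more careful justification of the rank identity than the paper's one-line assertion, but it is the same argument in substance.
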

\begin{proof}
Please refer to Appendix E.
\end{proof}

Part (i) of Theorem 4 indicates that we can always set $\phi=0$ if the user channels are independent.
In this case, $\Phi(\mathbf{b})=\Xi(\mathbf{b})$, which means that the optimal solution to problem $\mathcal{Q}_2$ is the same as that of \eqref{mdg1}.
On the other hand, part (ii) of Theorem 4 indicates that if the user channels are correlated, we must choose $\phi>0$, and the conversion from \eqref{mdg1} to $\mathcal{Q}_2$ would lead to approximation error.
However, this error is controllable by choosing a small $2\sqrt{\beta}\,\phi$ according to part (iii) of Theorem 4 (e.g., with $\phi=0.1/(2\sqrt{\beta})$, the approximation error is at most $2\sqrt{\beta}\,\phi=0.1$).

\subsection{Summary and Complexity Analysis of AGP}

For the proposed AGP algorithm, the accelerated random coordinate descent is first used to compute $\mathbf{w}^\diamond$ for problem on the right hand side of the constraint in \eqref{Q1}, which requires a complexity of $\mathcal{O}\left(KN\right)$.
To optimize $\mathbf{v}^\diamond$, in each fixed-point iteration, the terms $\mathbf{C}$ in \eqref{Cn} and $\mathbf{q}$ in \eqref{dn} are computed with a complexity of $\mathcal{O}(KN)$,
followed by the iterative calculation of variable $\mathbf{b}$ in \eqref{mdg1} with equations \eqref{accelerated}--\eqref{cm}, which involves a complexity of $\mathcal{O}(KN)$ for gradient computation.
Therefore, the overall complexity of AGP for solving $\mathcal{Q}$ is $\mathcal{O}\left(KN\right)$.
Notice that with the obtained $\mathbf{w}^\diamond$ and $\mathbf{v}^\diamond$, we need to recover $\{\mathbf{F}^\star,r_k^\star,t_k^\star\}$.
To satisfy the unit-modulus constraints, $\mathbf{w}^\diamond$ and $\mathbf{v}^\diamond$ are projected onto the unit-modulus constraints as $\mathbf{w}^\star=\mathrm{exp}(\mathrm{j}\angle\mathbf{w}^\diamond)$ and $\mathbf{v}^\star=\mathrm{exp}(\mathrm{j}\angle\mathbf{v}^\diamond)$.
The recovered phase shift network is $\mathbf{F}^\star=(\mathbf{w}^\star)^H\mathbf{v}^\star$.
With $\mathbf{F}^\star$, $t_k^\star=1/(K(\mathbf{w}^\star)^H\mathbf{h}_k)$ and $r_k^\star$ is computed using \eqref{qk}.

\section{Simulation Results and Discussions}

This section presents simulation results to verify the performance of the proposed scheme.
The pathloss of the user $k$ is set to $\varrho_{k}=-60\,\mathrm{dB}$, and $\mathbf{h}_{k}$ and $\mathbf{g}_{k}$ are generated according to $\mathcal{CN}(\mathbf{0},\varrho_{k}\mathbf{I}_N)$.
It is assumed that the channels during the model uploading and aggregation procedure are static in each federated learning iteration.
On the other hand, different iterations adopt independent channels and noise realizations.
The power scaling factor $\gamma=1$ at the server.
All problems are solved by Matlab R2019a on a desktop with an Intel Core i7-7700 CPU at 3.6\,GHz and 16\,GB RAM.
The Interior point method is implemented using CVX Mosek \cite{opt1}.
Under the above setting, we consider the following benchmark schemes for comparison.
\begin{itemize}
\item[1)]
The optimized user transceiver scheme, which can be viewed as the extension of \cite{air2,air3} to the multi-antenna scenarios.
This scheme optimizes transceiver designs $\{t_k,r_k\}$ but adopts unoptimized beamformer $\mathbf{F}=\mathbf{I}_N$.
The transmitters $\{t_k\}$ and receivers $\{r_k\}$ are optimized iteratively by solving \eqref{problemp} using CVX and \eqref{problemq} using \eqref{qk}, respectively.

\item[2)] The digital beamforming scheme in \cite{air4}, which ignores the unit-modulus constraints on $\mathbf{F}$.
As such, the obtained MSE (training loss, test error) of the digital beamfomring scheme serves as a lower bound on that of the proposed scheme.

\item[3)] The digital beamformer projection scheme, which sets $\mathbf{F}$ as the projection of the digital beamforming design in \cite{air5} onto the unit-modulus constraints.

\item[4)] The UMAirComp scheme with SDR and CVX \cite{opt1}. The analog beamformer $\mathbf{F}$, the transmitters $\{t_k\}$, and the receivers $\{r_k\}$ are optimized iteratively by solving \eqref{problemF} using SDR, \eqref{problemp} using CVX, and \eqref{problemq} using \eqref{qk}, respectively.\footnote{
If the matrix solution to the SDR problem of \eqref{problemF} is not rank-one, we use the principal component of the obtained matrix as the phase shift design.}
\end{itemize}

\subsection{Performance Evaluation of PAM-based and AGP-based UMAirComp}

First, to verify the learning performance of PAM-based UMAirComp, we consider the image classification task based on a convolutional neural network (CNN).
The mixed national institute of standards and technology (MNIST) dataset \cite{MNIST} is used, where $\mathbf{d}_{k,l}^{\mathrm{in}}\in\mathbb{R}^{784\times 1}$ is a gray-scale vector of images and $\mathbf{d}_{k,l}^{\mathrm{out}}\in\mathbb{R}^{10\times 1}$ is a label vector containing only one non-zero element.
The CNN consists of two convolution layers, two max pooling layers, and a fully-connected layer.
The loss function
\begin{align}
&\Theta(\mathbf{d}_{k,l},\mathbf{x}_k)=\left[f_{\mathrm{cnn}}\left(\mathbf{x}_k,\mathbf{d}_{k,l}^{\mathrm{in}}\right)-\mathbf{d}_{k,l}^{\mathrm{out}}\right]^2/2,
\end{align}
where $f_{\mathrm{cnn}}\left(\mathbf{x}_k,\mathbf{d}_{k,l|}^{\mathrm{in}}\right)$ is the softmax output of CNN.
The training step-size is $\varepsilon=1$, the number of local epochs is $4$, and the total number of FL iterations is set to $400$.
For each iteration, the maximum transmit powers at users are $P_0=10~\mathrm{mW}$ (i.e., $10\,\mathrm{dBm}$).
The noise powers at the server and users are set as $-80\,\mathrm{dBm}$, which capture the effects of thermal noise, receiver noise, and interference.
We set $M_{\mathrm{max}}=Q_{\mathrm{max}}=200$ and $N_{\mathrm{max}}=20$ for the inner and outer layer iterations for the PAM algorithm.

Comparison among four benchmark schemes and the proposed scheme is shown in Fig.~4.
Specifically, the axis of the radar map in Fig.~4a ranges from $0$ to $0.3$ for MSE (i.e., the objective function of $\mathcal{P}$), training loss, and test error.
The MSE is obtained by averaging over the $400$ FL iterations and the training loss (test error) is obtained when the entire FL procedure terminates.
Since our goal is to minimize all these metrics concurrently, a smaller area indicates better performance.
Firstly, the optimized user transceiver scheme has the largest area, which can be treated as a worst-case performance bound.
This is because the phase shift network at the server does not align with the users' channels with the optimized user transceiver scheme.
Secondly, the proposed PAM-based UMAirComp scheme achieves a similar size of triangle as that of the digital beamforming scheme.
This is because for federated learning systems, there is no need to decode all the local model parameters and the beam direction can be less accurate.
As such, analog beamforming leads to negligible performance loss compared with the digital beamforming.
Furthermore, the area of the proposed scheme is completely covered by that of the digital beamformer projection scheme and the area reduction of the former comes from the analog beamforming design based on PAM.
Finally, the scheme with SDR and CVX also achieves competing performance.
However, it is at the cost of high computational complexities and execution time, making it not applicable in practice.

On the other hand, Fig.~4b and Fig.~4c show the training loss and the test error versus the number of FL iterations, respectively.
Firstly, due to the high MSE (close to $0.1$), the optimized user transceiver scheme diverges.
Secondly, the FL scheme with digital beamformer projection is competitive compared with the proposed scheme at the beginning, but the training loss does not further decrease with the increasing number of FL iterations.
Indeed, the test error increases after $200$ iterations.
This means that model parameter errors have a stronger impact as the number of FL iterations increases.
This is because deep learning models are usually over parameterized.
Thus, as the training procedure gets closer to convergence, the model parameters would become sparser, which are more sensitive to errors.
Lastly, the proposed scheme achieves a training loss curve and a test error curve close to their associated lower bounds.
This again demonstrates the small performance gap between the analog and the digital beamforming in federated learning systems.

\begin{figure*}[!t]
 \centering
\subfigure[]{\includegraphics[height=65mm]{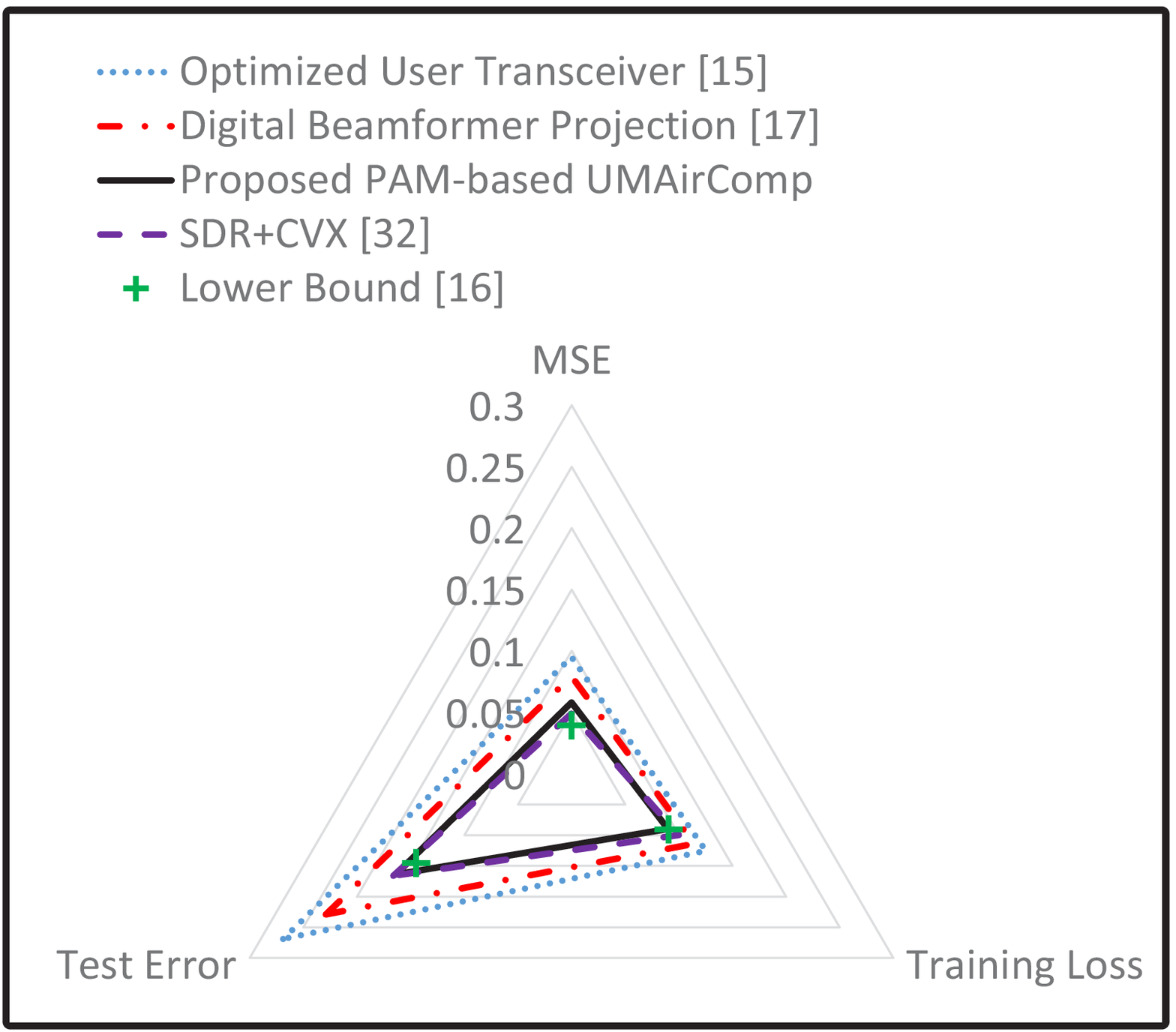}}
\subfigure[]{\includegraphics[height=65mm]{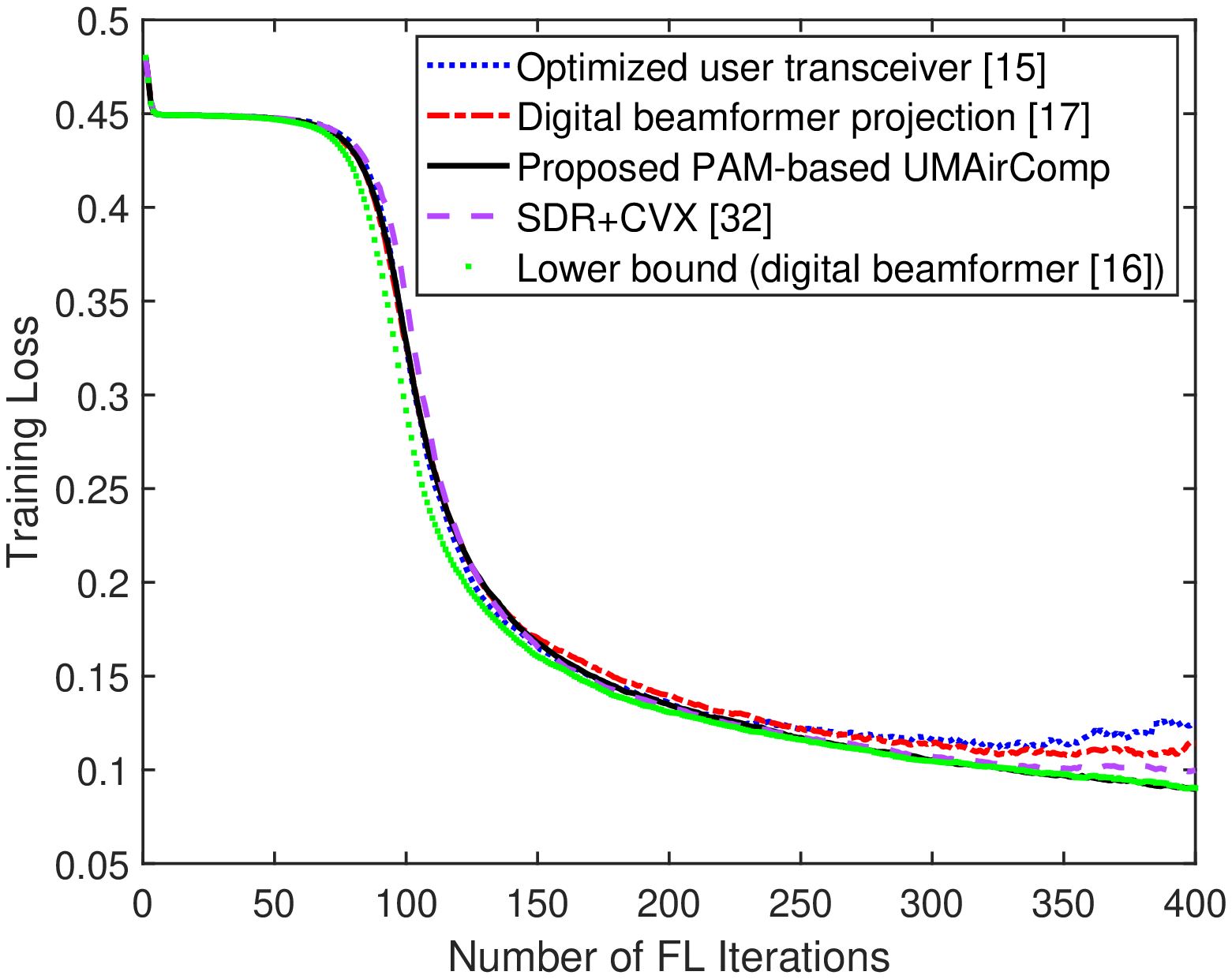}}
\subfigure[]{\includegraphics[height=65mm]{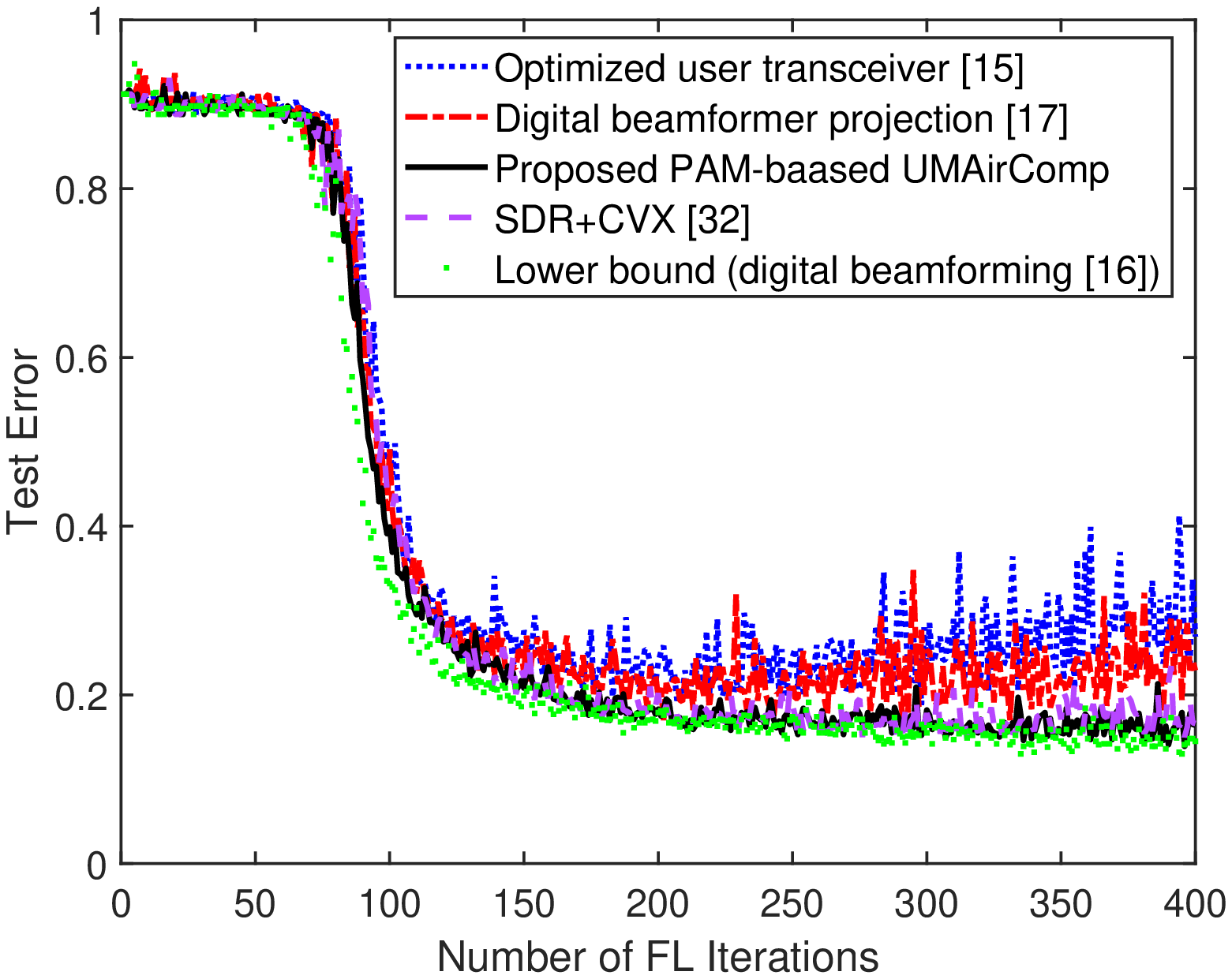}}
  \caption{Comparison between the proposed and benchmark schemes when $N=8$ with $K=10$: a) comparison of normalized MSE, training loss, and test error; b) training loss versus the number of FL iterations; c) worst test error among all users versus the number of FL iterations.}
\end{figure*}

\begin{figure*}[!t]
\centering
\subfigure[]{\includegraphics[height=65mm]{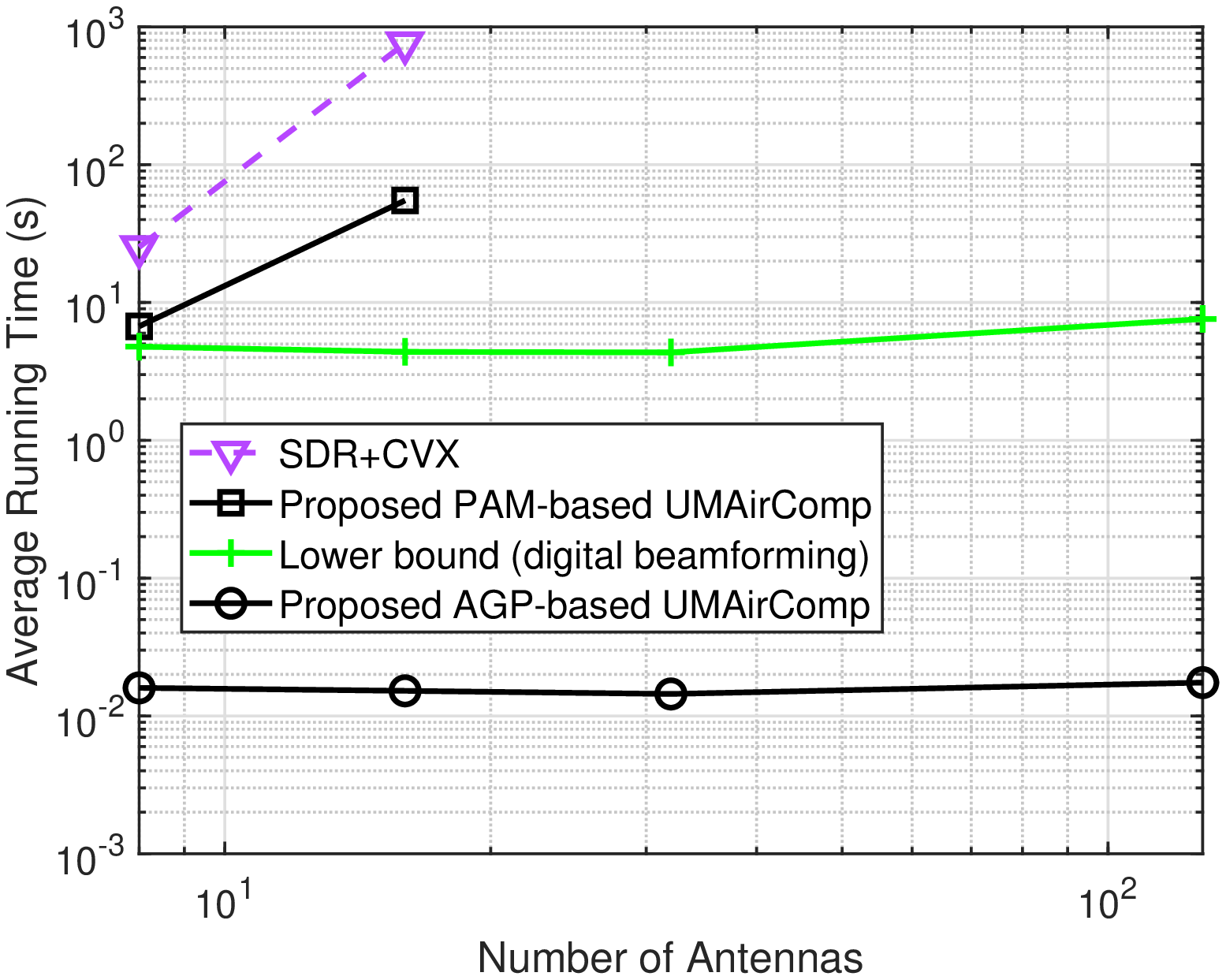}}
\subfigure[]{\includegraphics[height=65mm]{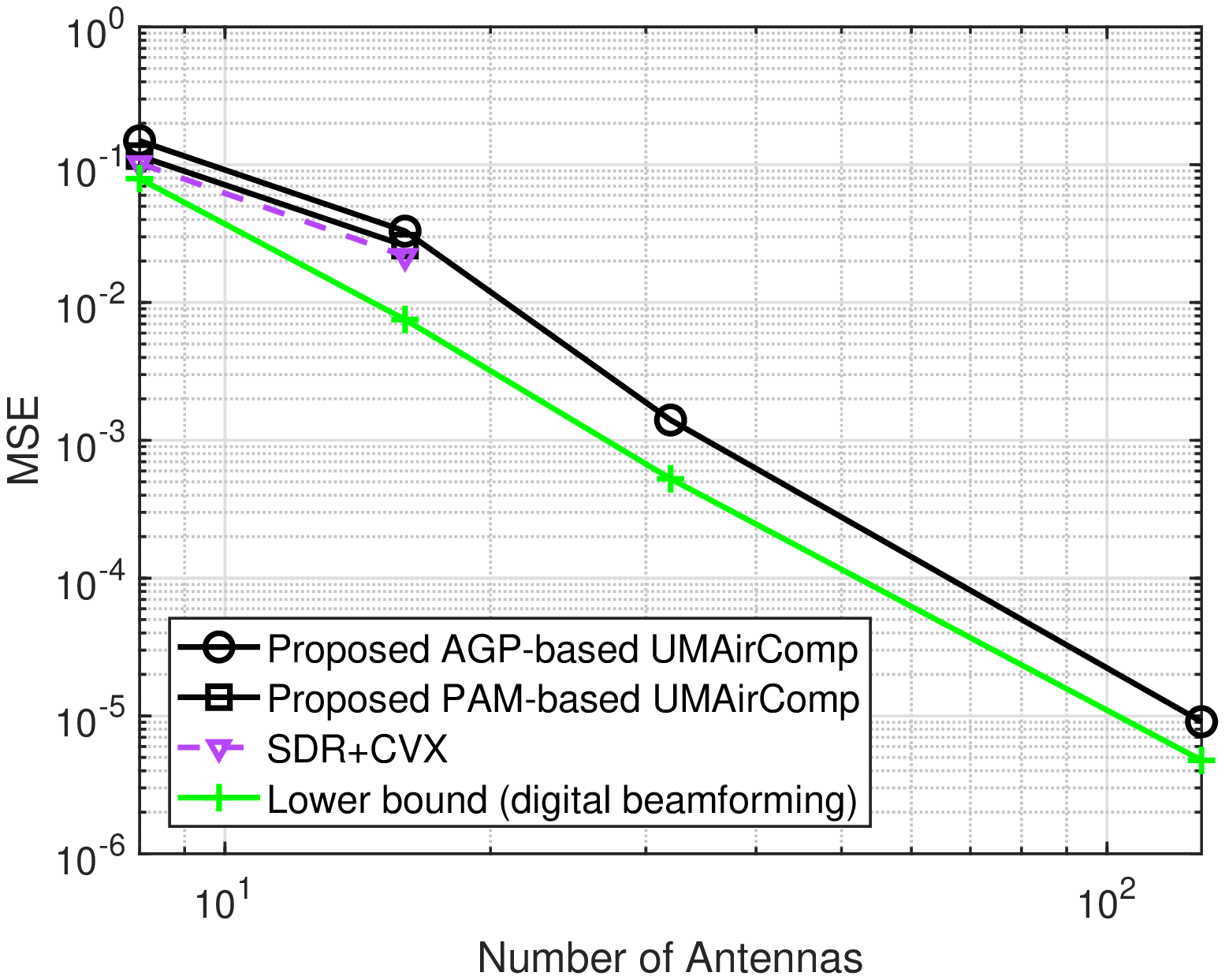}}
\subfigure[]{\includegraphics[height=90mm]{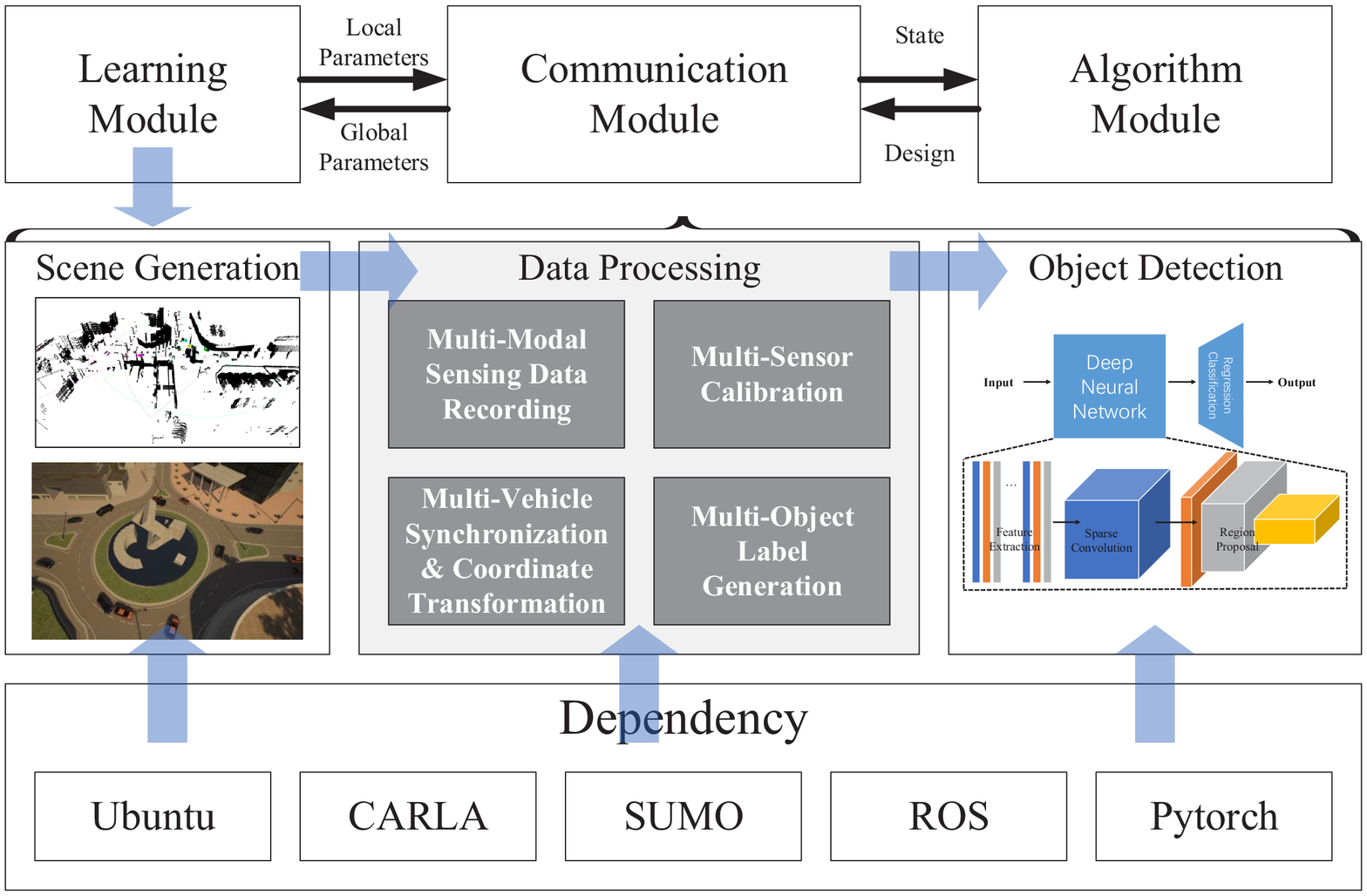}}
\caption{a) Average running time versus the number of antennas when $K=4$; b) normalized MSE versus the number of antennas when $K=4$; c) Architecture of the proposed V2X autonomous driving platform based on CARLA.}
\end{figure*}

To evaluate the solution quality and running time of the proposed AGP-based UMAirComp, we simulate the case of $N=\{8,16,32,128\}$ with $K=4$.
It can be seen from Fig.~5a that the scheme with SDR and CVX is the most time consuming, and it fails in providing a solution within a reasonable amount of time for the case of $N=\{32,128\}$.
The proposed AGP-based UMAirComp requires running time two orders of magnitude smaller than that of all the other schemes.
Note that the lower bound is obtained from the digital beamforming scheme, which tackles a less challenging problem without unit-modulus constraints.
The AGP algorithm achieves MSE performance close to that of the proposed PAM algorithm and the lower bound as shown in Fig.~5b.
This indicates that the adopted approximations in the AGP algorithm lead to small performance loss in practice.

\subsection{UMAirComp for V2X Autonomous Driving}

Next, to verify the robustness of the proposed UMAirComp framework in complex learning tasks, we consider the object detection task for V2X autonomous driving \cite{invs}.
In particular, we propose a V2X autonomous driving simulation platform shown in Fig.~5c, which is a virtual-reality system with close-to-reality environments, interactions, and interfaces.
The platform is developed based on Car Learning to Act (CARLA) \cite{carla}, Simulation of Urban MObility (SUMO), Robot Operating System (ROS), and Pytorch in Ubuntu 18.04 with a GeForce GTX 1080 GPU.
The platform consists of learning, communication, and algorithm modules, where the learning and communication modules interact with each other by exchanging model parameters while the algorithm module controls the communication module by exchanging system/channel state information and optimized designs.

\begin{figure*}[!t]
 \centering
\subfigure[]{\includegraphics[height=50mm]{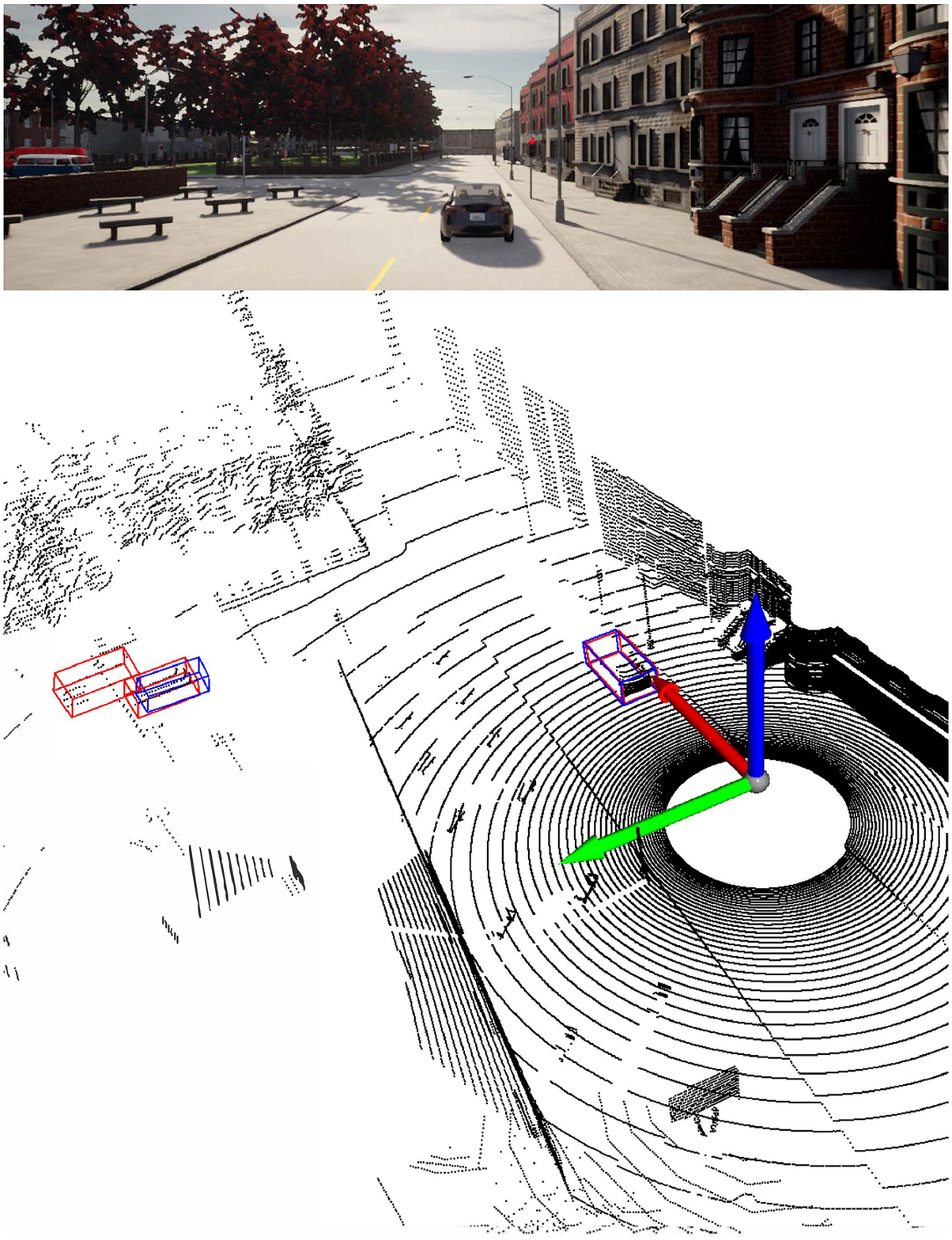}}
\subfigure[]{\includegraphics[height=50mm]{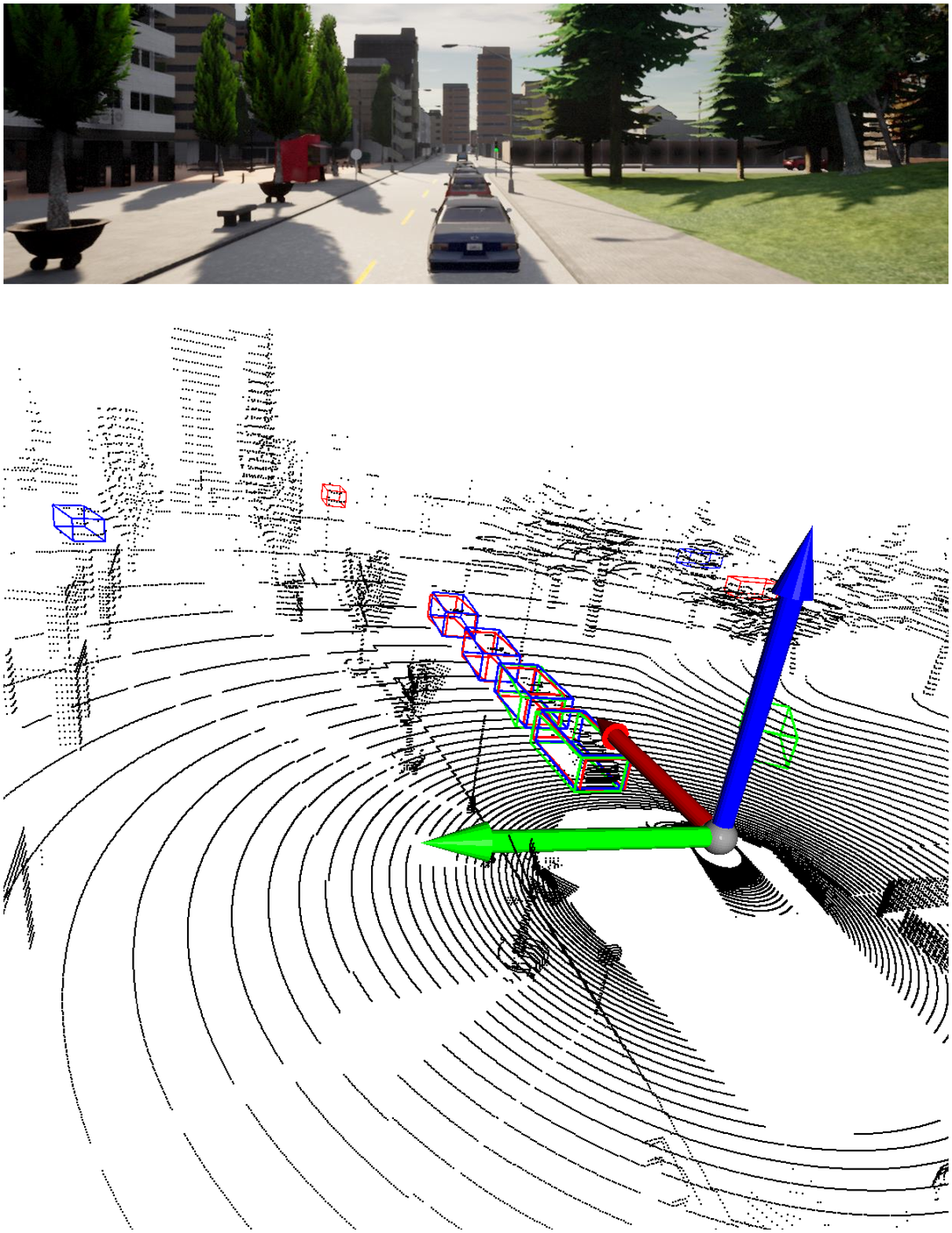}}
\subfigure[]{\includegraphics[height=50mm]{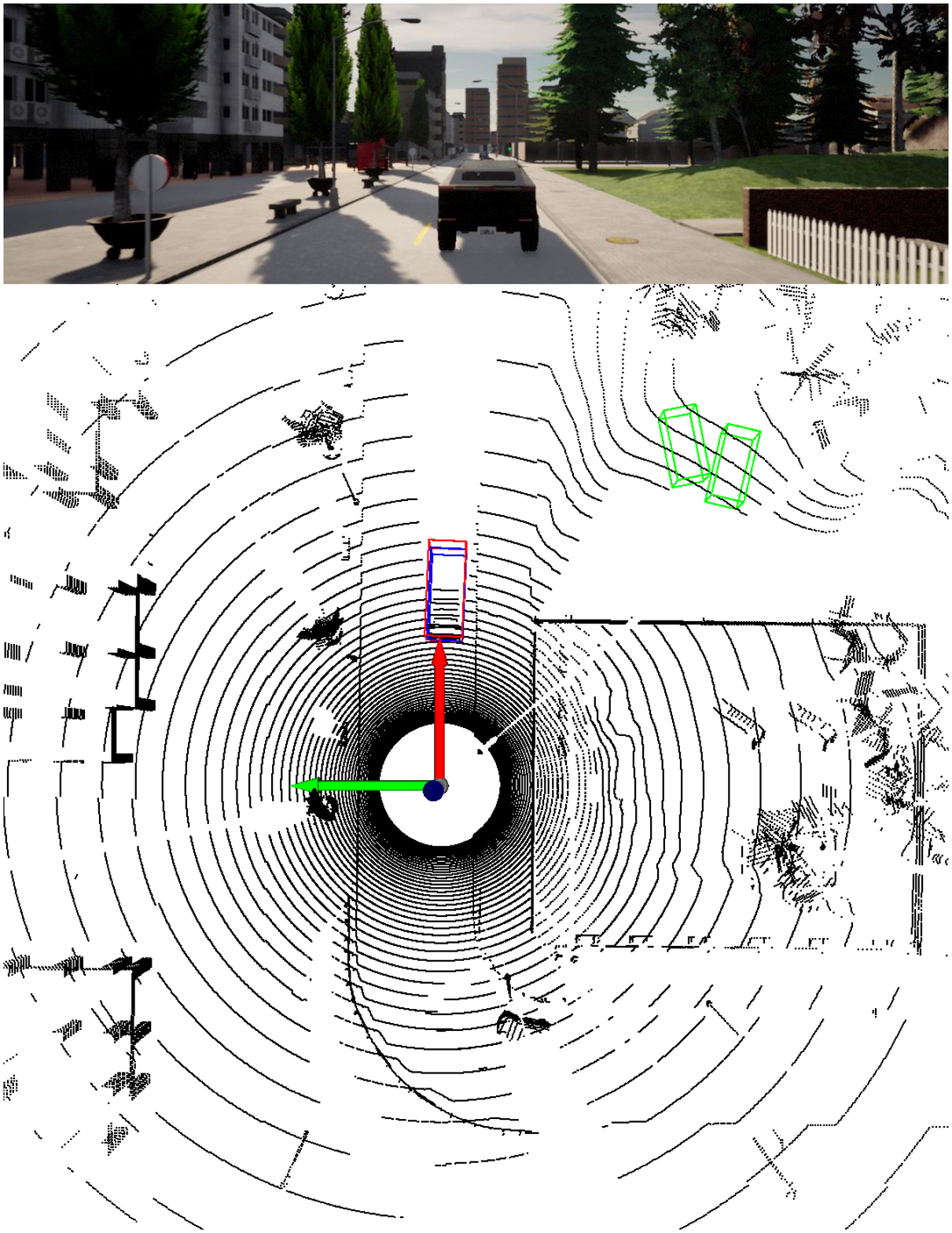}}
\subfigure[]{\includegraphics[height=50mm]{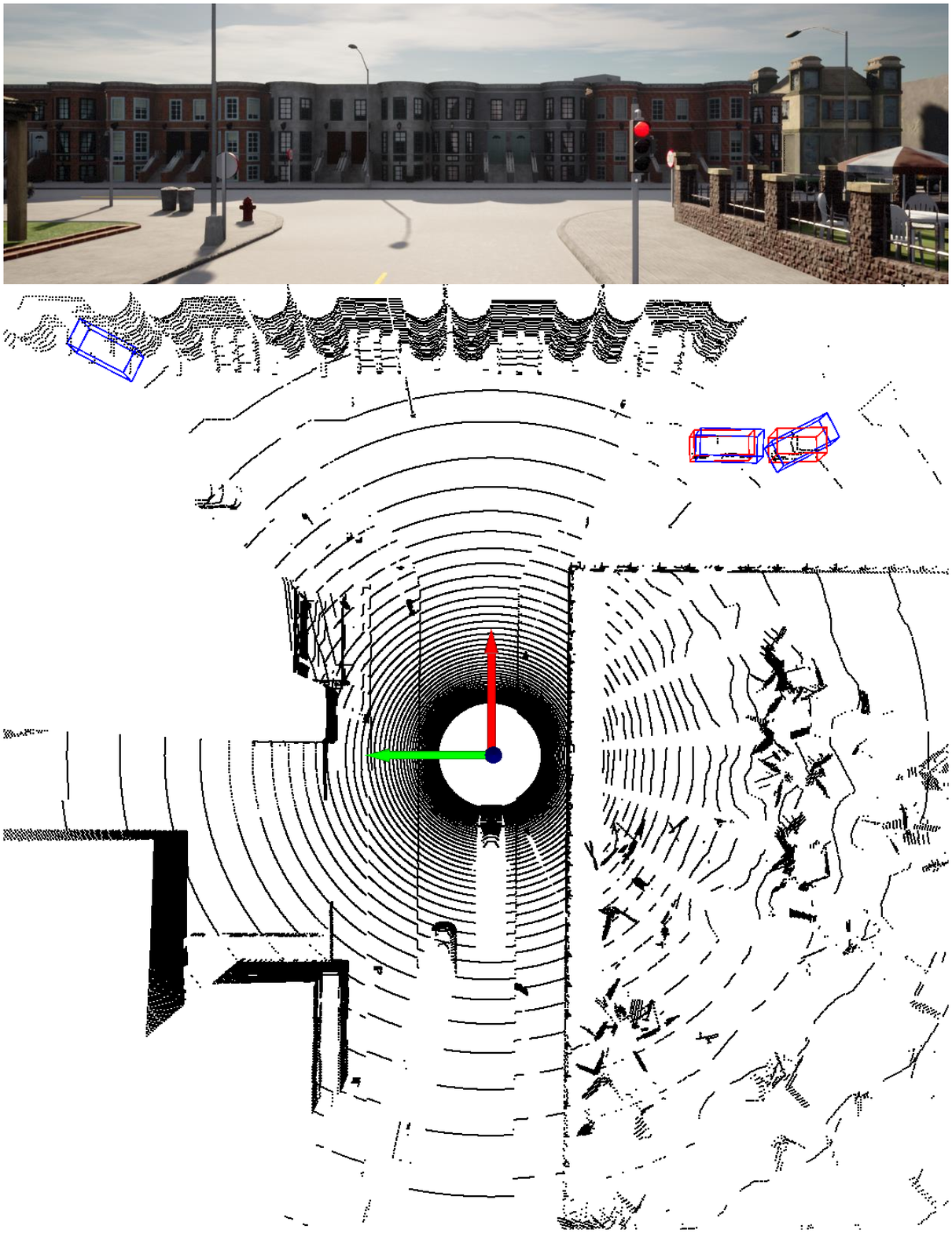}}
  \caption{Detection results when $N=500$ with $K=4$. The red box is the ground truth; the blue box is from the proposed AGP-based UMAirComp scheme; the green box is from the optimized user transceiver scheme (benchmark) scheme.
  a) the benchmark scheme detects nothing while the proposed scheme detects two objects; b) the benchmark scheme only detects two nearby objects while the proposed scheme can detect far-away objects; c) the benchmark scheme generates false positive results while the proposed scheme generates accurate prediction; d) the benchmark scheme cannot detects occlusion objects while the proposed scheme detects all of them.}
\end{figure*}

\begin{figure*}[!t]
 \centering
\subfigure[]{\includegraphics[height=70mm]{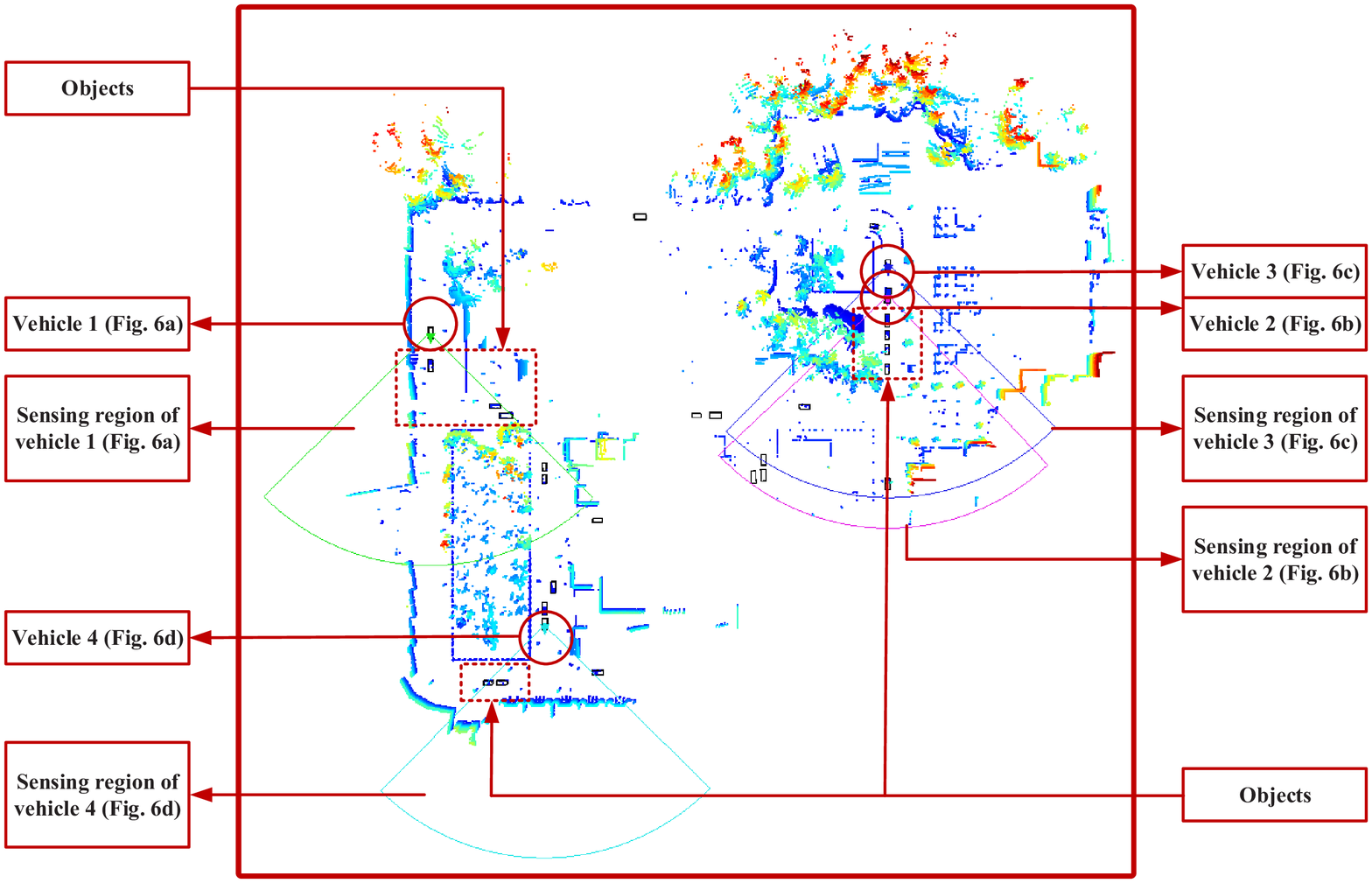}}
\subfigure[]{\includegraphics[height=70mm]{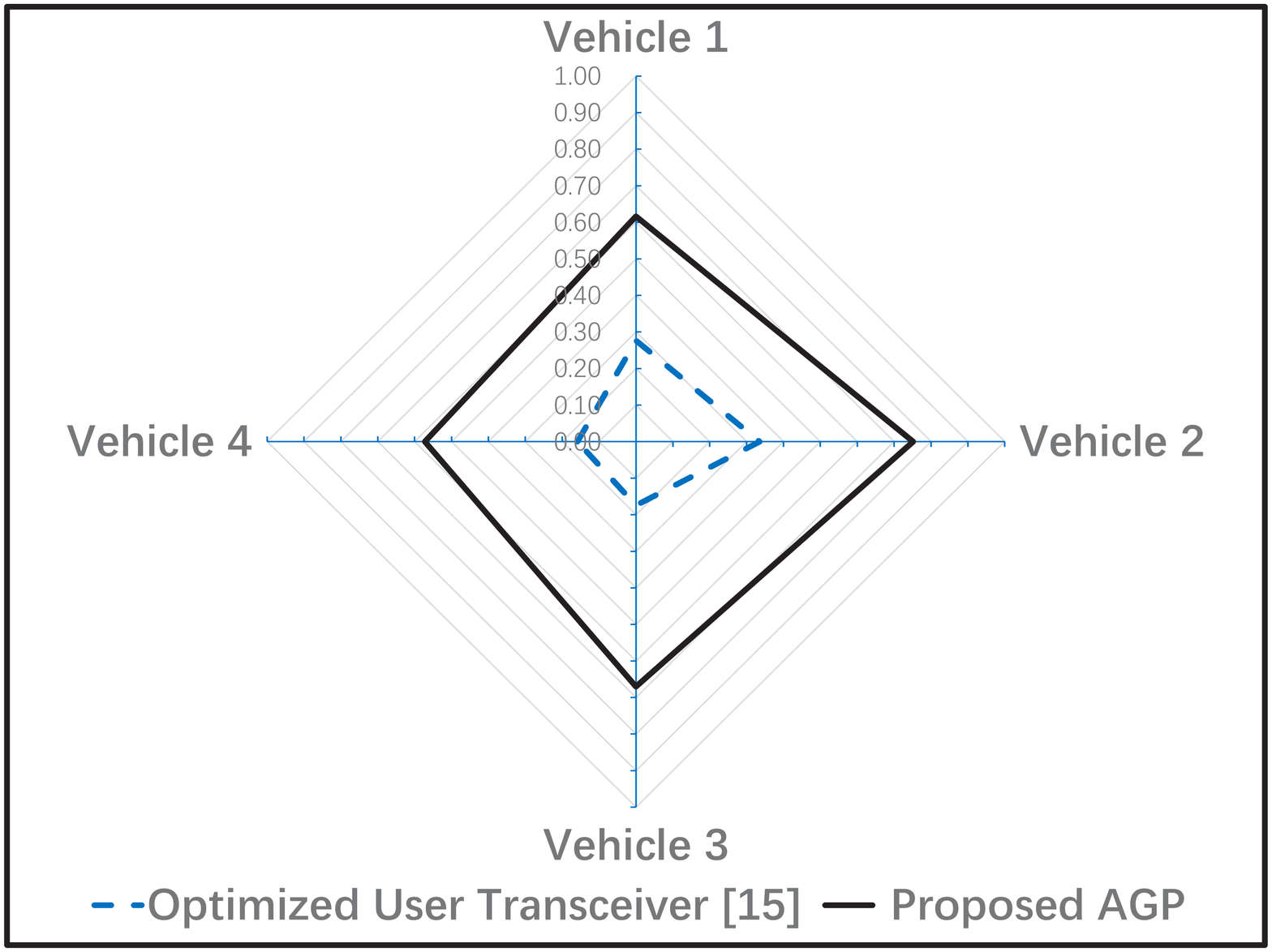}}
  \caption{a) The global bird eye view of the frame in Fig.~7. The green vehicle is in Fig.~6a. The pink vehicle is in Fig.~6b. The blue vehicle is in Fig.~6c. The cyan vehicle is in Fig.~6d.
  The figure demonstrates the heterogeneity of the distributed datasets at different vehicles. b) Comparison between the proposed AGP-based UMAirComp and the optimized user transceiver schemes when $N=500$ with $K=4$.}
\end{figure*}

The learning module is a function cluster consisting of scenario generation, data processing, and object detection.
We use CARLA, which is a client-server platform based on Unreal Engine, to construct a high-fidelity scenario.
The ``Town02'' map is generated with $28$ vehicles, including $4$ autonomous vehicles.
The entire scenario lasts for $90$ seconds and contains $700$ frames at each autonomous vehicle.
The first $200$ frames are used for training at each vehicle.
The latter $500$ frames are used for inference and testing.
On the other hand, data processing involves the following steps.
\begin{itemize}
\item[1)] \textbf{Multi-modal sensory data recording}.
Recording sensory data is implemented based on Python application programming interfaces (APIs) at the CARLA client \cite{carla}.
Each autonomous vehicle (Tesla Model 3) is equipped with a GPS, a camera, and a $64$-line LiDAR at $10$ Hz.
The default LiDAR range is set to $100$ m, and its FoV is $90^{\circ}$ in the front.
Each frame of point cloud is considered as an input data sample $\mathbf{d}_{k,l}^{\rm{in}}$.
All the above information is stored in a sensor file.
\item[2)] \textbf{Multi-sensor calibration}.
Since each vehicle has three coordinate systems (i.e., LiDAR coordinate, camera coordinate, and GPS coordinate), we need to compute the rotation and transition matrices among these coordinate systems \cite{kitti}.
Furthermore, since we have multiple cameras, we need to compute the ration matrices from the reference camera to other cameras.
Lastly, projection matrices are required to associate a point in the 3D space with a point in the camera image.
All the above information is stored in a calibration file.
\item[3)] \textbf{Multi-vehicle synchronization and coordinate transformation}.
All vehicles should be synchronized \cite{tits}.
We adopt the synchronization framework provided by CARLA-SUMO integration to synchronize the sensors on each vehicle.
We exploit the time stamps of the 3D laser scanner as a reference and consider each spin as a frame.
On the other hand, different vehicles have different local coordinate systems.
As such, we use the APIs provided by the CARLA worlds, maps, and actors to transform these local coordinates to global coordinates.
All the above information is stored in a global coordinate file.
\item[4)] \textbf{Multi-object label generation}.
The ground truth labels $\{\mathbf{d}_{k,l}^{\mathrm{out}}\}$ for object detection should satisfy the KITTI format.
Each label $\mathbf{d}_{k,l}^{\mathrm{out}}$ consists of 16 elements \cite{kitti}:
1) category; 2) boundary or not; 3) occlusion or not; 4) observation angle; 5--6) left top pixel coordinates in the camera image; 7--8) right bottom pixel coordinates in the camera image; 9) height of object; 10) width of object; 11) length of object; 12-14) 3D object location in the camera coordinate system; 15) rotation of object; 16) detection confidence.
All the above information is stored in a label file.\footnote{In practice, to obtain the training labels of objects, each vehicle broadcasts its ego position and waits for messages from the nearby infrastructures.
Since the infrastructures are fixed at utility poles and connected to servers via wirelines, they have broader fields of views (FoVs) and deeper neural networks than those of vehicles.
Thus, their outputs are more accurate, which can be transmitted to vehicles via the vehicle-to-infrastructure (V2I) communication and adopted as pseudo labels (i.e., ground-truth labels with noises).}
\end{itemize}

Finally, the sparsely embedded convolutional detection (SECOND) neural network \cite{SECOND} is used for object detection on the processed dataset.
The local model structure can be found in \cite{SECOND}.
The loss function
\begin{align}
\Theta(\mathbf{d}_{k,l},\mathbf{x}_k)=&f_{\mathrm{class}}\left(\mathbf{x}_k,\mathbf{d}_{k,l}\right)+
f_{\mathrm{box}}\left(\mathbf{x}_k,\mathbf{d}_{k,l}\right)
+f_{\mathrm{orientation}}\left(\mathbf{x}_k,\mathbf{d}_{k,l}\right),
\end{align}
where $\mathbf{d}_{k,l}=\left(\mathbf{d}_{k,l}^{\rm{in}},\mathbf{d}_{k,l}^{\rm{out}}\right)$, $f_{\mathrm{class}}$ is the classification loss, $f_{\mathrm{box}}$ is the box regression loss, and $f_{\mathrm{orientation}}$ is the softmax orientation estimation loss.
The SECOND network is trained with a diminishing learning rate, where the initial learning rate is set to $10^{-4}$ and the number of local updates is $E=1$.
The average precision at intersection over union IoU$=0.5$ is used for performance evaluation.

For the communication module, the case of $N=500$ and $K=4$ is simulated.
It is assumed that the sensing datasets are generated and pre-stored at the vehicles before the FL procedure.
The total number of FL iterations is $8$, with independently generated channel in each FL iteration.
The maximum transmit power of each autonomous vehicle is set to $P_0=1\,\mathrm{W}$ (i.e., $30~\mathrm{dBm}$), and the noise powers are set as $-90\,\mathrm{dBm}$.

The detection results and the global bird eye view of a particular frame are shown in Fig.~6 and Fig.~7a, respectively.
The comparison between the proposed scheme and the optimized user transceiver scheme \cite{air3} is provided in Fig.~7b.
From the above results, it can be seen that the test errors of both schemes in Fig.~7b are worse than their counterparts in Fig.~4a.
In contrast, the MSEs of both schemes are remarkably better than their counterparts in Fig.~4a.\footnote{The normalized MSEs of the proposed AGP-based UMAirComp and the optimized user transceiver (benchmark) schemes in Fig.~7b are only $6\times10^{-11}$ and $0.0013$, respectively. The normalized MSEs of the two schemes in Fig.~4a are larger than 0.05.}
This implies that object detection tasks in autonomous driving are more sensitive to model parameter errors.
This is because the trained model parameters for the SECOND network are generally sparser and a slight model error would lead to completely different detections.
Furthermore, as seen from the objective of $\mathcal{P}$, the model error of user $k$ with the UMAirComp framework is dominated by $\gamma\sum_{j=1}^K\Big|r_k\mathbf{g}_k^H\mathbf{F}\mathbf{h}_jt_j-\alpha_j\Big|^2$.
Therefore, the methods to minimize the model errors can be categorized into 1) configuration of wireless channels $\{\mathbf{h}_{k},\mathbf{g}_{k}\}$ and 2) design of analog beamformer $\mathbf{F}$.
For the configuration of wireless channels, due to $\alpha_1=\cdots=\alpha_K$, the ideal wireless channels for the proposed UMAirComp satisfy $\mathbf{g}_1=\cdots=\mathbf{g}_K$ and $\mathbf{h}_1=\cdots=\mathbf{h}_K$.
This implies that the UMAirComp should be executed when vehicles are geographically close to each other such that the magnitudes of $\{\mathbf{h}_{k},\mathbf{g}_{k}\}$ are close.
This is the case in vehicle platooning and vehicle parking scenarios \cite{tits}; otherwise, some emerging techniques (e.g., reconfigurable intelligence surfaces (RIS) \cite{edge3}) should be adopted to smartly alter the wireless environment.
On the other hand, for the design of beamformers, the key is to align the various channels $\{\mathbf{h}_{k},\mathbf{g}_{k}\}$ to a same direction and power for decoding superimposed signals.
This is the case of Fig.~6b, where the proposed method achieves much larger average precisions than the optimized user transceiver scheme for all vehicles.
Note that the running time of AGP-based UMAirComp is only $0.06$ s, which can be further accelerated via GPU.

\section{Conclusion}

This paper proposed the UMAirComp framework to support simultaneous transmission of local model parameters in edge federated learning systems.
Training loss upper bounds of UMAirComp were derived, which reveal that the key to minimize FL training loss is to minimize the maximum MSE among all users.
Two low-complexity large-scale optimization algorithms were proposed to tackle the nonconvex nonsmooth loss bound minimization problem.
The performance and runtime of the UMAirComp framework with the proposed optimization algorithms were verified using the image classification task.
The performance of the proposed framework and algorithms were also verified in a V2X autonomous driving simulation platform and experimental results have shown that the object detection precision with the proposed algorithm is significantly higher than that achieved by benchmark schemes.

\appendices

\section{Proof of Theorem 1}
First, let $\Delta \mathbf{x}^{[i]} =\mathbf{x}_k^{[i+1]}(0)-\left[\mathbf{x}_k^{[i]}(0)-\varepsilon\nabla_{\mathbf{x}}\Lambda\left(\mathbf{x}_k^{[i]}(0)\right)\right]$ and $\|\Delta\mathbf{x}^{[i]}\|_2^2$ can be upper bounded by
\begin{align}
\|\Delta\mathbf{x}^{[i]}\|_2^2
&=\Big\|\mathbf{x}^{[i+1]}_k(0)-\mathbf{x}^{[i]}_k(0)+\varepsilon\nabla_{\mathbf{x}}\Lambda\left(\mathbf{x}^{[i]}_k(0)\right)\Big\|_2^2
\nonumber\\
&=
\Big\|
\mathbf{x}^{[i+1]}_k(0)-\bm{\theta}^{[i]}+
\sum_{j=1}^K\alpha_j\mathbf{x}^{[i]}_j(0)-
\varepsilon\sum_{j=1}^{K}\frac{\alpha_j}{|\mathcal{D}_j|}\sum_{\mathbf{d}_{j,l}\in\mathcal{D}_j}\nabla_{\mathbf{x}}\Theta\left(\mathbf{d}_{j,l},\mathbf{x}^{[i]}_j(0)\right)
\nonumber\\
&\quad{}{}
-\mathbf{x}_k^{[i]}(0)+\varepsilon\nabla_{\mathbf{x}}\Lambda(\mathbf{x}_k^{[i]}(0))
\Big\|_2^2
\nonumber\\
&\leq
\Big\|
\mathbf{x}^{[i+1]}_k(0)-\bm{\theta}^{[i]}\Big\|^2_2+
\sum_{j=1}^K\alpha_j\Big\|\mathbf{x}^{[i]}_j(0)-\mathbf{x}^{[i]}_k(0)\Big\|_2^2
\nonumber\\
&\quad{}{}
+
\varepsilon^{2}
\sum_{j=1}^K
\Bigg\|
\frac{\alpha_j}{|\mathcal{D}_j|}\left(
\sum_{\mathbf{d}_{j,l}\in\mathcal{D}_j}\nabla_{\mathbf{x}}\Theta\left(\mathbf{d}_{j,l},\mathbf{x}^{[i]}_j(0)\right)
-
\sum_{\mathbf{d}_{j,l}\in\mathcal{D}_j}\nabla_{\mathbf{x}}\Theta\left(\mathbf{d}_{j,l},\mathbf{x}^{[i]}_k(0)\right)\right)\Bigg\|_2^2,
\label{A1}
\end{align}
where the second equality is obtained from \eqref{sgd} with $E=1$ and \eqref{global}, and the inequality is obtained due to $\|\mathbf{a}_1+\mathbf{a}_2 \|_2^2\leq\|\mathbf{a}_1\|_2^2+\|\mathbf{a}_2 \|_2^2$.

On the other hand, according to \eqref{mse}, we have
\begin{align}
&\mathbb{E}\left[\Big\|\mathbf{x}^{[i+1]}_k(0)-\bm{\theta}^{[i]}\Big\|_2^2\right]=\mathbb{MSE}^{[i]}_k,
\nonumber\\
&
\mathbb{E}\left[\Big\|\mathbf{x}^{[i]}_k(0)-\mathbf{x}^{[i]}_j(0)\Big\|_2^2\right]=
\mathbb{E}\left[\Big\|\mathbf{x}^{[i]}_k(0)-\bm{\theta}^{[i-1]}+\bm{\theta}^{[i-1]}-\mathbf{x}^{[i]}_j(0)\Big\|_2^2\right]
\leq2\mathbb{MSE}^{[i]}_k.
 \label{A2}
\end{align}
Moreover, according to Assumption 1, we  have
\begin{align}
&\mathbb{E}\left[
\Bigg\|\frac{1}{|\mathcal{D}_j|}\left(
\sum_{\mathbf{d}_{j,l}\in\mathcal{D}_j}\nabla_{\mathbf{x}}\Theta\left(\mathbf{d}_{j,l},\mathbf{x}^{[i]}_j(0)\right)
-
\sum_{\mathbf{d}_{j,l}\in\mathcal{D}_j}\nabla_{\mathbf{x}}\Theta\left(\mathbf{d}_{j,l},\mathbf{x}^{[i]}_k(0)\right)\right)\Bigg\|_2^2
\right]
\nonumber\\
&\leq \mathbb{E}\left[L^2\|\mathbf{x}^{[i]}_k(0)-\mathbf{x}^{[i]}_j(0)\|_2^2\right]
\leq 2L^2\,\mathbb{MSE}^{[i]}_k.
\label{A3}
\end{align}
Putting \eqref{A2} and \eqref{A3} into \eqref{A1}, and according to the expression of $\varepsilon$ yields
\begin{align}
&\mathbb{E}\left[\|\Delta \mathbf{x}^{[i]}\|^2_2\right]\leq
\left(3+2\sum_{j=1}^K\alpha_j^2\right)
\,\mathop{\mathrm{max}}_{k=1,\cdots,K}\mathbb{MSE}^{[i]}_k. \label{deltax}
\end{align}

Due to $\frac{1}{|\mathcal{D}_k|}\sum_{\mathbf{d}_{k,l}\in\mathcal{D}_k}\nabla^2_{\mathbf{x}}\Theta(\mathbf{d}_{k,l}, \mathbf{x})\preceq L\mathbf{I}$,
we have $\nabla^2_{\mathbf{x}}\Lambda(\mathbf{x})\preceq 1/L\mathbf{I}$.
Based on $\mu\mathbf{I}\preceq\nabla^2_{\mathbf{x}}\Lambda(\mathbf{x})\preceq 1/L\mathbf{I}$, the following equations hold \cite{yurii}:
\begin{subequations}
\begin{align}
\Lambda(\mathbf{x}') &\leq \Lambda(\mathbf{x}) + (\mathbf{x}'-\mathbf{x})^T\nabla\Lambda(\mathbf{x})+\frac{L}{2}\|\mathbf{x}'-\mathbf{x}\|_2^2,
\label{Lambda leq}\\
\Lambda(\mathbf{x}') &\geq \Lambda(\mathbf{x}) + (\mathbf{x}'-\mathbf{x})^T\nabla\Lambda(\mathbf{x})+\frac{\mu}{2}\|\mathbf{x}'-\mathbf{x}\|_2^2.
\label{Lambda geq}
\end{align}
\end{subequations}
Putting $\mathbf{x}'=\mathbf{x}_k^{[i+1]}(0)=\mathbf{x}_k^{[i]}(0)-\varepsilon\nabla_{\mathbf{x}}\Lambda\left(\mathbf{x}_k^{[i]}(0)\right)+\Delta\mathbf{x}^{[i]}$ and $\mathbf{x}=\mathbf{x}_k^{[i]}(0)$ into
\eqref{Lambda leq}, we have
\begin{align}
&\Lambda\left(\mathbf{x}_k^{[i+1]}(0)\right)\leq
\Lambda\left(\mathbf{x}_k^{[i]}(0)\right)-\frac{1}{2L}\Big\|\nabla_{\mathbf{x}}\Lambda\left(\mathbf{x}_k^{[i]}(0)\right)\Big\|_2^2+\frac{L}{2}\|\Delta\mathbf{x}^{[i]}\|_2^2.
\label{Lambda condition1}
\end{align}
On the other hand, the right hand side of \eqref{Lambda geq} is minimized at
$\mathbf{x}'=\mathbf{x}-\mu^{-1}\nabla_{\mathbf{x}}\Lambda(\mathbf{x})$.
Putting this expression and $\mathbf{x}=\mathbf{x}_k^{[i]}(0)$ into \eqref{Lambda geq} gives
\begin{align}
&\|\nabla_{\mathbf{x}}\Lambda(\mathbf{x}_k^{[i]}(0))\|_2^2
\geq
2\mu\left[\Lambda(\mathbf{x}_k^{[i]}(0))-\Lambda(\bm{\theta}^{*}) \right].
\label{Lambda condition2}
\end{align}
Combining \eqref{Lambda condition1} and \eqref{Lambda condition2} gives
\begin{align}
&\Lambda(\mathbf{x}_k^{[i+1]}(0))-\Lambda(\mathbf{x}_k^{[i]}(0))
\leq
\left(1-\frac{\mu}{L}\right)
\left[\Lambda(\mathbf{x}_k^{[i]}(0))-\Lambda(\bm{\theta}^{*})\right]
+\frac{L}{2}\|\Delta\mathbf{x}^{[i]}\|_2^2.
\end{align}
Applying this recursively leads to
\begin{align}
\Lambda(\mathbf{x}_k^{[i+1]}(0))-\Lambda(\bm{\theta}^{*})
&\leq
\left(1-\frac{\mu}{L}\right)^{i+1}
\left[\Lambda(\mathbf{x}_k^{[0]}(0))-\Lambda(\bm{\theta}^{*})\right]
\nonumber\\
&\quad{}{}
+\sum_{i'=0}^{i}
\frac{L}{2}
\left(1-\frac{\mu}{L}\right)^{i-i'}
\|\Delta\mathbf{x}^{[i']}\|_2^2. \label{A 53}
\end{align}
Taking expectation on both sides and applying \eqref{deltax}, \eqref{A 53} becomes
\begin{align}
\mathbb{E}\left[\Lambda(\mathbf{x}_k^{[i+1]}(0))-\Lambda(\bm{\theta}^*)\right]
&\leq
\left(1-\frac{\mu}{L}\right)^{i+1}
\left[\Lambda(\mathbf{x}_k^{[0]}(0))-\Lambda(\bm{\theta}^{*})\right]
+
\sum_{i'=0}^{i}A^{[i']}\,\mathrm{max}_k\mathbb{MSE}^{[i']}_k.
\end{align}
Finally, setting, $i=R-1$, taking the limit $R\rightarrow+\infty$ and using $\left(1-\mu/L\right)^{R}\rightarrow 0$, the proof is completed.

\section{Proof of Theorem 2}

Let $\mathbf{a}_k^{[t]}$ with $t=0,\cdots,RE-1$ be the parameter vector at user $k$ such that $\mathbf{a}_k^{[t]}=\mathbf{x}_k^{[i]}(\tau)$ for $t=iE+\tau$.
Define the following virtual sequences $\{\widetilde{\mathbf{a}}^{[t]}\}$ as
\begin{align}
&\widetilde{\mathbf{a}}^{[t]}=\sum_{k=1}^K\alpha_k\mathbf{a}_k^{[t]}+\sum_{k=1}^K\alpha_k\bm{\sigma}^{[t]}_k\mathbb{I}_{t\,\mathrm{mod}\,E=0},
\label{E3}
\end{align}
where $\bm{\sigma}^{[t]}_k$ denotes the model distortion at the $k$-th user and the $t$-th iteration, which is upper bounded by $\mathbb{E}[\|\bm{\sigma}^{[t]}_k\|_2^2]\leq\mathop{\mathrm{max}}_{\forall i,k}\mathbb{MSE}_k^{[i]}$.
Based on \eqref{E3} and equations (2) and (10) in Section II of the revised manuscript, we have
\begin{align}
\widetilde{\mathbf{a}}^{[t+1]}&=\widetilde{\mathbf{a}}^{[t]}-
\frac{\varepsilon}{\sum_{k=1}^{K}|\mathcal{D}_k|}\sum_{k=1}^K\sum_{\mathbf{d}_{k,l}\in\mathcal{D}_k}\nabla_{\mathbf{x}_k}\Theta(\mathbf{d}_{k,l}, \mathbf{x}_k)|_{\mathbf{x}_k=\mathbf{a}_k^{[t]}}
\nonumber\\
&\quad{}{}
+\sum_{k=1}^K\alpha_k\left(\bm{\sigma}^{[t+1]}_k\mathbb{I}_{t+1\,\mathrm{mod}\,E=0}-\bm{\sigma}^{[t]}_k\mathbb{I}_{t\,\mathrm{mod}\,E=0}\right).
\end{align}
Since $\varepsilon=\frac{2}{\mu(\nu+iE+\tau)}\leq \frac{1}{4L}$ and according to \cite[Lemma 1]{iclr}, we have
\begin{align}
\mathbb{E}[\|\widetilde{\mathbf{a}}^{[t+1]}-\bm{\theta}^*\|_2^2 ]
&\leq(1-\mu \varepsilon)\mathbb{E}[\|\widetilde{\mathbf{a}}^{[t]}-\bm{\theta}^*\|_2^2 ]
+\sum_{j=1}^K\alpha_j
\mathbb{E}[\|\bm{\sigma}^{[t+1]}_j\mathbb{I}_{t+1\,\mathrm{mod}\,E=0}-\bm{\sigma}^{[t]}_j\mathbb{I}_{t\,\mathrm{mod}\,E=0}\|_2^2]
\nonumber\\
&\quad{}
+\frac{2}{K}\sum_{k=1}^K\mathbb{E}[\|\widetilde{\mathbf{a}}^{[t]}-\mathbf{a}_k^{[t]}\|_2^2 ]
+6L\Gamma\varepsilon^2.
\label{E5}
\end{align}
To further bound the right hand side of \eqref{E5}, we notice that for any $(k,t)$,
\begin{align}
\mathbb{E}[\|\bm{\sigma}^{[t+1]}_k\mathbb{I}_{t+1\,\mathrm{mod}\,E=0}-\bm{\sigma}^{[t]}_k\mathbb{I}_{t\,\mathrm{mod}\,E=0}\|_2^2]\leq\mathbb{E}[\|\bm{\sigma}^{[t]}_k\|^2]\leq\mathop{\mathrm{max}}_{\forall i,k}\mathbb{MSE}_k^{[i]}.
\tag{E6}
\end{align}
Moreover, based on \cite[Lemma A.3]{air1}, the term $\mathbb{E}[\|\widetilde{\mathbf{a}}^{[t]}-\mathbf{a}_k^{[t]}\|_2^2 ]$ is upper bounded as
\begin{align}
&\mathbb{E}[\|\widetilde{\mathbf{a}}^{[t]}-\mathbf{a}_k^{[t]}\|_2^2 ]\leq 4\varepsilon^2G^2E^2.
\end{align}
Consequently, the following recursive error bound is obtained
\begin{align}
&\mathbb{E}[\|\widetilde{\mathbf{a}}^{[t+1]}-\bm{\theta}^*\|_2^2 ]\leq(1-\mu \varepsilon)\mathbb{E}[\|\widetilde{\mathbf{a}}_k^{[t]}-\bm{\theta}^*\|_2^2 ]
+\varepsilon^2C.
\end{align}
where $C$ is defined in Theorem 2.
It can be seen that the above inequality is the same as equation (B.1) in \cite[Lemma A.3]{air1}.
Following the induction procedure in Appendix B of \cite{air1}, the non-recursive error bound is obtained
\begin{align}
&\mathbb{E}[\|\widetilde{\mathbf{a}}^{[t]}-\bm{\theta}^*\|_2^2 ]\leq
\frac{\mathrm{max}(4C/\mu^2,\nu\|\bm{\theta}^{[0]}-\bm{\theta}^*\|_2^2)}{t+\nu}.
\tag{E9}
\end{align}
Multiplying $2L$ on both sides of and using the Lipschitz condition of $\Lambda$ yield
\begin{align}
\mathbb{E}\left[\Lambda(\widetilde{\mathbf{a}}^{[t]})\right]-\Lambda(\bm{\theta}^*)
\leq
2L\mathbb{E}[\|\widetilde{\mathbf{a}}^{[t]}-\bm{\theta}^*\|_2^2 ]
\leq
\frac{2L\,\mathrm{max}(4C,\mu^2\nu\|\bm{\theta}^{[0]}-\bm{\theta}^* \|_2^2)}{\mu^2(t+\nu)},
\label{E10}
\end{align}
Substituting $\widetilde{\mathbf{a}}_k^{[t]}=\mathbf{x}_k^{[R]}(0)$ and $t=ER$ into \eqref{E10}, the proof is completed.

\section{Proof of Lemma 1}

Define a surrogate function of $h(\mathbf{v})$ as
\begin{align}
g(\mathbf{v},\mathbf{v}')=
-\mathop{\sum}_{k=1}^K\frac{b_k|\mathbf{g}_k^H\mathbf{v}|^2}{\sigma_k^2}
+\mathop{\sum}_{k=1}^K\frac{b_k\left(\mathbf{v}-\mathbf{v}'\right)^H\mathbf{g}_k\mathbf{g}_k^H\left(\mathbf{v}-\mathbf{v}'\right)}{\sigma_k^2}. \label{prox}
\end{align}
Since $g(\mathbf{v},\mathbf{v}')\geq h(\mathbf{v})$, $g(\mathbf{v}',\mathbf{v}')=h(\mathbf{v}')$, and $\nabla g(\mathbf{v}',\mathbf{v}')\geq \nabla h(\mathbf{v}')$,
with any feasible $\mathbf{v}^{(0)}$, every limit point of the sequence $(\mathbf{v}^{(0)},\mathbf{v}^{(1)},\cdots)$ generated by the following iteration
\begin{align}
&\mathbf{v}^{(n+1)}=\mathop{\mathrm{arg~min}}_{\|\mathbf{v}\|_2^2\leq \beta}~\mathop{\mathrm{max}}_{\mathbf{b}\in\Delta}\,g(\mathbf{v},\mathbf{v}^{(n)})
\label{mm}
\end{align}
is the KKT solution to \eqref{P2'}.
Therefore, to prove the Lemma, it remains to show that $U(\mathbf{v}')$ is the optimal solution to \eqref{mm}.
Specifically, applying the quasi-concave-convex property of \eqref{mm} and the general minimax theorem \cite{minimax}, we have
\begin{align}
&\mathop{\mathrm{min}}_{\|\mathbf{v}\|_2^2\leq \beta}~\mathop{\mathrm{max}}_{\mathbf{b}\in\Delta}\,g(\mathbf{v},\mathbf{v}')
=
\mathop{\mathrm{max}}_{\mathbf{b}\in\Delta}~\mathop{\mathrm{min}}_{\|\mathbf{v}\|_2^2\leq \beta}\,g(\mathbf{v},\mathbf{v}'). \label{minimax}
\end{align}
Via the Lagrange multiplier method, it can be derived that
\begin{align}
&
\mathop{\mathrm{arg~min}}_{||\mathbf{v}||_2^2\leq \beta}\,g(\mathbf{v},\mathbf{v}')
=\frac{\sqrt{\beta}\mathbf{C}(\mathbf{v}')\mathbf{b}}{\big\|\mathbf{C}(\mathbf{v}')\mathbf{b}\big\|_2}.
\end{align}
Putting the above result into $g(\mathbf{v},\mathbf{v}')$, we have
\begin{align}
&
g\left(\frac{\sqrt{\beta}\mathbf{C}(\mathbf{v}')\mathbf{b}}{\big|\big|\mathbf{C}(\mathbf{v}')\mathbf{b}\big|\big|_2},\mathbf{v}'\right)
=-\Phi\left(\mathbf{v}', \mathbf{b}\right).
\end{align}
Therefore, the optimal solution of $\mathbf{v}$ to \eqref{minimax} (thus \eqref{mm}) is
$U(\mathbf{v}')$ and the proof is completed.

\section{Proof of Lemma 2}

To prove this theorem, we first introduce the following lemma.
\begin{lemma}
(\cite[Lemma 1.2.2]{yurii}) If $h(\mathbf{x})$ is convex and twice differentiable, then $h(\mathbf{x})$ is Lipschitz smooth with constant $L$ if and only if $\nabla^2 h(\mathbf{x})\preceq L\,\mathbf{I}$.
\end{lemma}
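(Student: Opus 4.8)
The plan is to establish the two implications of the equivalence separately, in both cases reducing the statement to a spectral-norm bound on the (symmetric, and by convexity positive semidefinite) Hessian $\nabla^2 h(\mathbf{x})$. Throughout, ``Lipschitz smooth with constant $L$'' is read as $\|\nabla h(\mathbf{x})-\nabla h(\mathbf{y})\|_2\leq L\|\mathbf{x}-\mathbf{y}\|_2$ for all $\mathbf{x},\mathbf{y}$.

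For the implication $\nabla^2 h(\mathbf{x})\preceq L\,\mathbf{I}\Rightarrow$ Lipschitz smoothness, I would first use convexity to note $\nabla^2 h(\mathbf{x})\succeq\mathbf{0}$, so the eigenvalues of the symmetric matrix $\nabla^2 h(\mathbf{x})$ all lie in $[0,L]$ and hence its spectral norm obeys $\lambda_{\mathrm{max}}(\nabla^2 h(\mathbf{x}))\leq L$. I would then express the gradient difference along the segment joining any two points via the fundamental theorem of calculus,
\begin{align}
\nabla h(\mathbf{y})-\nabla h(\mathbf{x})=\int_0^1\nabla^2 h\big(\mathbf{x}+t(\mathbf{y}-\mathbf{x})\big)(\mathbf{y}-\mathbf{x})\,\mathrm{d}t, \nonumber
\end{align}
and bound the norm of the right-hand side using the spectral bound inside the integral together with the triangle inequality for integrals, giving $\|\nabla h(\mathbf{y})-\nabla h(\mathbf{x})\|_2\leq L\|\mathbf{y}-\mathbf{x}\|_2$.

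For the converse, I would fix an arbitrary point $\mathbf{x}$ and direction $\mathbf{d}$ and invoke the difference-quotient representation of the Hessian,
\begin{align}
\nabla^2 h(\mathbf{x})\,\mathbf{d}=\lim_{t\to 0}\frac{\nabla h(\mathbf{x}+t\mathbf{d})-\nabla h(\mathbf{x})}{t}. \nonumber
\end{align}
Substituting the Lipschitz bound $\|\nabla h(\mathbf{x}+t\mathbf{d})-\nabla h(\mathbf{x})\|_2\leq L|t|\,\|\mathbf{d}\|_2$ into the limit yields $\|\nabla^2 h(\mathbf{x})\,\mathbf{d}\|_2\leq L\|\mathbf{d}\|_2$ for every $\mathbf{d}$, i.e., the operator norm of $\nabla^2 h(\mathbf{x})$ is at most $L$. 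Because the Hessian is symmetric, this forces $\lambda_{\mathrm{max}}(\nabla^2 h(\mathbf{x}))\leq L$, and therefore $\nabla^2 h(\mathbf{x})\preceq L\,\mathbf{I}$.

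The step I would be most careful about is the use of convexity in the forward implication: the hypothesis $\nabla^2 h\preceq L\,\mathbf{I}$ controls only the largest eigenvalue, whereas the integral estimate needs a genuine spectral-norm bound, which also requires controlling the smallest eigenvalue. Convexity supplies exactly $\nabla^2 h\succeq\mathbf{0}$, pinning $\lambda_{\mathrm{min}}(\nabla^2 h)\geq 0$ so that $\|\nabla^2 h\|_2=\lambda_{\mathrm{max}}(\nabla^2 h)\leq L$; without it the equivalence can break. Since this is the classical characterization \cite[Lemma 1.2.2]{yurii}, I would present the argument in this compact form and defer the routine measure-theoretic details to the reference.
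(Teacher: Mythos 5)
Your proposal is correct: the paper itself offers no proof of this lemma, importing it verbatim from \cite[Lemma 1.2.2]{yurii}, and your argument is precisely the standard one in that reference --- the fundamental-theorem-of-calculus (mean-value) bound for the forward implication and the difference-quotient/operator-norm bound for the converse. Your observation that convexity is what upgrades the one-sided bound $\nabla^2 h(\mathbf{x})\preceq L\,\mathbf{I}$ to the two-sided spectral-norm bound $\lambda_{\mathrm{max}}(\nabla^2 h(\mathbf{x}))\leq L$ needed inside the integral estimate is exactly the right point of care.
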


Based on Lemma 4 and since $\Xi(\mathbf{b})$ is convex and twice differentiable, it suffices to show $\nabla^2\Xi(\mathbf{b})\preceq L_{\Xi}(\phi)\,\mathbf{1}_K$.
In particular, according to \eqref{Xi}, the Hessian matrix of $\Xi(\mathbf{b})$ is
\begin{align}
\nabla^2\Xi(\mathbf{b})=&
\frac{2\sqrt{\beta}\mathrm{Re}\left(\mathbf{C}^H\mathbf{C}\right)
}{\sqrt{\phi^2+||\mathbf{C}\mathbf{b}||_2^2}}
-\frac{2\sqrt{\beta}}{(\sqrt{\phi^2+||\mathbf{C}\mathbf{b}||_2^2})^3}
\times \mathrm{Re}\left(\mathbf{C}^H\mathbf{C}\mathbf{b}\right)
[\mathrm{Re}\left(\mathbf{C}^H\mathbf{C}\mathbf{b}\right)]^H.
\nonumber
\end{align}
Due to $ \mathrm{Re}\left(\mathbf{C}^H\mathbf{C}\mathbf{b}\right)
[\mathrm{Re}\left(\mathbf{C}^H\mathbf{C}\mathbf{b}\right)]^H\succeq \mathbf{0}$, we can drop the last term to bound $\nabla^2\Xi(\mathbf{b})$ from above, which leads to
\begin{align}
\nabla^2\Xi(\mathbf{b})&\preceq
\frac{2\sqrt{\beta}}{\sqrt{\phi^2+||\mathbf{C}\mathbf{b}||_2^2}}\times \mathrm{Re}\left(\mathbf{C}^H\mathbf{C}\mathbf{b}\right)
\preceq
\frac{2\sqrt{\beta}\cdot\lambda_{\mathrm{max}}\left[\mathrm{Re}\left(\mathbf{C}^H\mathbf{C}\right)\right]}{\sqrt{\phi^2+||\mathbf{C}\mathbf{b}||_2^2}}\mathbf{I}_K,  \label{D1}
\end{align}
where the second inequality follows from
$\mathrm{Re}\left(\mathbf{C}^H\mathbf{C}\right)\preceq \lambda_{\mathrm{max}}\left[\mathrm{Re}\left(\mathbf{C}^H\mathbf{C}\right)\right]\mathbf{I}_K$.

Now, the only quantity in \eqref{D1} that is dependent on $\mathbf{b}$ is $||\mathbf{C}\mathbf{b}||_2^2$.
To get rid of such dependence, $||\mathbf{C}\mathbf{b}||_2^2$ is lower bounded by
\begin{align}
||\mathbf{C}\mathbf{b}||_2^2
&=\mathbf{b}^T\mathbf{C}^H\mathbf{C} \mathbf{b}
\geq
\lambda_{\mathrm{min}}\left(\mathbf{C}^H\mathbf{C}\right)\times||\mathbf{b}||_2^2. \label{D2}
\end{align}
Finally, using $||\mathbf{b}||_2^2\geq 1/K$ due to $\mathbf{b}\in\Delta$ and Cauchy-Schwarz inequality further leads to
\begin{align}\label{D3}
||\mathbf{C}\mathbf{b}||_2^2
\geq
\lambda_{\mathrm{min}}\left(\mathbf{C}^H\mathbf{C}\right)/K.
\end{align}
Replacing $||\mathbf{C}\mathbf{b}||_2^2$ in \eqref{D1} with the right hand side of \eqref{D3}, we immediately obtain $\nabla^2\Xi(\mathbf{b})\preceq L_{\Xi}(\phi)\,\mathbf{I}_K$.

\section{Proof of Theorem 4}

To prove part (i) of this theorem, notice that $\mathrm{Rank}\left(\mathbf{C}\right)=\mathrm{Rank}\left([\mathbf{g}_1,\cdots,\mathbf{g}_K]\right)$ due to the definition of $\mathbf{C}$ in \eqref{Cn}.
As $\mathrm{Rank}\left(\mathbf{C}^H\mathbf{C}\right)=\mathrm{Rank}\left(\mathbf{C}\right)$, $\mathrm{Rank}\left(\mathbf{C}^H\mathbf{C}\right)=K$ and $\lambda_{\mathrm{min}}\left(\mathbf{C}^H\mathbf{C}\right)>0$.
Putting $\lambda_{\mathrm{min}}\left(\mathbf{C}^H\mathbf{C}\right)>0$ and $\phi=0$ into \eqref{Lip1}, we obtain
\begin{align}
&L_{\Xi}(0)=\frac{2\sqrt{\beta}\big|\big|\mathrm{Re}\left(\mathbf{C}^H\mathbf{C}\right)\big|\big|_2}
{\sqrt{\lambda_{\mathrm{min}}\left(\mathbf{C}^H\mathbf{C}\right)/K}}<+\infty.
\end{align}

Next, to prove part (ii), it can be seen that $\lambda_{\mathrm{min}}\left(\mathbf{C}^H\mathbf{C}\right)=0$ if $\mathrm{Rank}\left([\mathbf{g}_1,\cdots,\mathbf{g}_K]\right)\neq K$.
Putting this result and $\phi=0$ into \eqref{Lip1}, we obtain $L_{\Xi}(0)=+\infty$.
On the other hand, if $\phi>0$, we must have $\sqrt{\phi^2+||\mathbf{C}\mathbf{b}||_2^2}>0$ due to $||\mathbf{C}\mathbf{b}||_2^2\geq 0$.
Putting this result into \eqref{Lip1}, we obtain $L_{\Xi}(\phi)<+\infty$ if $\phi>0$.

Finally, to prove part (iii) of this theorem, we need the following lemma.
\begin{lemma}
$\Xi(\mathbf{b})$ is bounded as
$\Phi(\mathbf{b})\leq \Xi(\mathbf{b})\leq \Phi(\mathbf{b})+2\sqrt{\beta}\,\phi$.
\end{lemma}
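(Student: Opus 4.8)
The plan is to exploit the fact that $\Phi(\mathbf{b})$ and $\Xi(\mathbf{b})$ differ only in their radical terms, since both carry the identical linear part $-\mathbf{q}^T\mathbf{b}$. Subtracting the definitions in \eqref{Phi} and \eqref{Xi} gives
\begin{align}
\Xi(\mathbf{b})-\Phi(\mathbf{b})=2\sqrt{\beta}\left(\sqrt{\phi^2+\|\mathbf{C}\mathbf{b}\|_2^2}-\|\mathbf{C}\mathbf{b}\|_2\right),\nonumber
\end{align}
so the entire claim reduces to sandwiching the scalar quantity $\sqrt{\phi^2+\|\mathbf{C}\mathbf{b}\|_2^2}-\|\mathbf{C}\mathbf{b}\|_2$ between $0$ and $\phi$.

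For the lower bound $\Phi(\mathbf{b})\leq\Xi(\mathbf{b})$, I would simply note that $\phi\geq 0$ implies $\phi^2+\|\mathbf{C}\mathbf{b}\|_2^2\geq\|\mathbf{C}\mathbf{b}\|_2^2$; since the square root is monotone increasing, $\sqrt{\phi^2+\|\mathbf{C}\mathbf{b}\|_2^2}\geq\|\mathbf{C}\mathbf{b}\|_2$. Multiplying by $2\sqrt{\beta}>0$ and subtracting $\mathbf{q}^T\mathbf{b}$ preserves the inequality and yields $\Phi(\mathbf{b})\leq\Xi(\mathbf{b})$.

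For the upper bound, I would invoke subadditivity of the square root, namely $\sqrt{x+y}\leq\sqrt{x}+\sqrt{y}$ for all $x,y\geq 0$. Setting $x=\phi^2$ and $y=\|\mathbf{C}\mathbf{b}\|_2^2$ gives $\sqrt{\phi^2+\|\mathbf{C}\mathbf{b}\|_2^2}\leq\phi+\|\mathbf{C}\mathbf{b}\|_2$, so the displayed difference is at most $2\sqrt{\beta}\,\phi$, i.e. $\Xi(\mathbf{b})\leq\Phi(\mathbf{b})+2\sqrt{\beta}\,\phi$. Equivalently, one may square the target inequality $\sqrt{\phi^2+\|\mathbf{C}\mathbf{b}\|_2^2}\leq\|\mathbf{C}\mathbf{b}\|_2+\phi$ and observe that the surviving cross term $2\phi\|\mathbf{C}\mathbf{b}\|_2$ is nonnegative.

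This argument is entirely elementary, so I do not expect any genuine obstacle. The only points meriting care are the sign conditions $\phi\geq 0$ and $\beta>0$; both hold by construction, since $\phi$ is the nonnegative smoothing parameter introduced in \eqref{Xi} and $\beta=N^2/\|\mathbf{w}^\diamond\|_2^2>0$. These guarantee that multiplication by $2\sqrt{\beta}$ and the monotonicity of the square root may be applied without reversing any inequality, completing both bounds.
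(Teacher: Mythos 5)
Your proposal is correct and follows essentially the same route as the paper: both reduce the claim to sandwiching $\sqrt{\phi^2+\|\mathbf{C}\mathbf{b}\|_2^2}-\|\mathbf{C}\mathbf{b}\|_2$ between $0$ and $\phi$, with the lower bound from monotonicity of the square root. The only cosmetic difference is that you justify the upper bound via subadditivity $\sqrt{x+y}\leq\sqrt{x}+\sqrt{y}$, whereas the paper rationalizes the difference as $\phi^2/(\sqrt{\phi^2+x}+\sqrt{x})\leq\phi$; these are equivalent elementary arguments.
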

\begin{proof}
To prove the left inequality, notice that
$
||\mathbf{C}\mathbf{b}||_2
\leq
\sqrt{\phi^2+
||\mathbf{C}\mathbf{b}||_2^2}.
$
Putting this result into $\Xi(\mathbf{b})$ in \eqref{Xi}, we immediately obtain
$\Phi(\mathbf{b})\leq \Xi(\mathbf{b})$.
On the other hand, to prove the right inequality, we first compute
\begin{align}
&\Xi(\mathbf{b})-\Phi(\mathbf{b})
=2\sqrt{\beta}\Big(
\sqrt{\phi^2+
||\mathbf{C}\mathbf{b}||_2^2}
-\sqrt{||\mathbf{C}\mathbf{b}||_2^2}\Big). \label{G1}
\end{align}
Then applying the identity
$\sqrt{\phi^2+x}-\sqrt{x}=\frac{\phi^2}{\sqrt{\phi^2+x}+\sqrt{x}}
\leq \phi$, where the inequality is due to the monotonic decreasing feature of $\phi^2/(\sqrt{\phi^2+x}+\sqrt{x})$ with respect to $x$,
the right hand side of \eqref{G1} is upper bounded as
\begin{align}
&2\sqrt{\beta}\Big(
\sqrt{\phi^2+
||\mathbf{C}\mathbf{b}||_2^2}
-\sqrt{||\mathbf{C}\mathbf{b}||_2^2}\Big)
\leq 2\sqrt{\beta}\,\phi.
\end{align}
Putting this result into \eqref{G1}, we have $\Xi(\mathbf{b})-\Phi(\mathbf{b})\leq 2\sqrt{\beta}\, \phi$.
\end{proof}

Now, if an $\epsilon$-optimal solution $\mathbf{b}'\in\Delta$ to $\mathcal{Q}_2$ is obtained with $\Xi(\mathbf{b}')-\Xi(\mathbf{b}^\diamond)\leq \epsilon$, then we must have
\begin{align}
\Phi(\mathbf{b}')\leq\Xi(\mathbf{b}^\diamond)+\epsilon, \label{G2}
\end{align}
due to $\Phi(\mathbf{b})\leq\Xi(\mathbf{b})$ from the first inequality of Lemma 5.
On the other hand, taking the minimum on both sides of the second inequality of Lemma 5, we have
\begin{align}
&\mathop{\mathrm{min}}_{\mathbf{b}\in\Delta}~\Xi(\mathbf{b})\leq \mathop{\mathrm{min}}_{\mathbf{b}\in\Delta}~\Phi(\mathbf{b})+2\sqrt{\beta}\,\phi. \label{G3}
\end{align}
Putting $\mathop{\mathrm{min}}_{\mathbf{b}\in\Delta}~\Xi(\mathbf{b})=\Xi(\mathbf{b}^\diamond)$ and $\mathop{\mathrm{min}}_{\mathbf{b}\in\Delta}~\Phi(\mathbf{b})=\Phi(\mathbf{b}^*)$ into \eqref{G3}, \eqref{G3} becomes
$\Xi(\mathbf{b}^\diamond)\leq \Phi(\mathbf{b}^*)+2\sqrt{\beta}\,\phi$.
Combining this result with \eqref{G2} leads to $\Phi(\mathbf{b}')\leq \Phi(\mathbf{b}^*)+2\sqrt{\beta}\, \phi+\epsilon$, and part (iii) of this theorem is proved.

\end{document}